\newtheorem{thm}{Theorem}[section]
\newtheorem{remark}[thm]{Remark}
\newtheorem{example}[thm]{Example}
\newcommand{\bmb}{\left( \begin{array}{rr}}
\newcommand{\enm}{\end{array}\right)}
\newcommand{\Z}{{\mathbb Z}}
\newcommand{\R}{{\mathbb R}}
\newcommand{\N}{{\mathbb N}}
\newcommand{\bv}{{\mathbf v}}
\newcommand{\bw}{{\mathbf w}}
\newcommand{\al}{{\alpha}}
\newcommand{\beq}{\begin{equation}}
\newcommand{\eeq}{\end{equation}}
\newcommand{\beqa}{\begin{eqnarray}}
\newcommand{\eeqa}{\end{eqnarray}}
\numberwithin{equation}{section}
\begin{document}

\title[Arctic curves for paths with arbitrary fixed starting points]{Arctic curves for paths with arbitrary starting points: a Tangent Method approach}
\author{Philippe Di Francesco} 
\address{
Institut de physique th\'eorique, Universit\'e Paris Saclay, 
CEA, CNRS, F-91191 Gif-sur-Yvette, FRANCE,
and 
Department of Mathematics, University of Illinois, Urbana, IL 61821, U.S.A. \hfill
\break  e-mail: philippe@illinois.edu
}
\author{Emmanuel Guitter}
\address{
Institut de physique th\'eorique, Universit\'e Paris Saclay, 
CEA, CNRS, F-91191 Gif-sur-Yvette, FRANCE. \hfill
\break  e-mail: emmanuel.guitter@ipht.fr
}

\begin{abstract}

We use the tangent method to investigate the arctic curve in a model of non-intersecting lattice paths with arbitrary fixed starting points 
aligned along some boundary and whose distribution is characterized by some arbitrary piecewise differentiable function. 
We find that the arctic curve has a simple explicit parametric representation depending of this function, providing us with a simple 
transform that maps the arbitrary boundary condition to the arctic curve location. We discuss generic starting point distributions as well 
as particular freezing ones which create additional frozen domains adjacent to the boundary, hence new portions for the arctic curve. 
A number of examples are presented, corresponding to both generic and freezing distributions.  
 
\end{abstract}

\maketitle
\date{\today}
\tableofcontents


\section{Introduction}
\label{introsec}
Many tiling problems of finite plane domains of large size are known to exhibit the so-called {\it arctic curve phenomenon}, namely the 
existence of a sharp phase separation between ``crystalline" (i.e. regularly tiled) phases often induced by boundary corners
and ``liquid" (i.e. disordered) phases away from the influence of boundaries. For instance, the celebrated problem of tiling 
the Aztec diamond with dominoes is known to display an arctic circle separating frozen phases induced by the corners of the domain
from an entropic phase 
in the center \cite{CEP,JPS}. Typically, one studies the asymptotics of tilings of scaled domains whose limits
are polygons. More generally, dimer models on regular graphs, which are a dual version of tiling problems,
exhibit the same arctic phenomenon, which received a fairly general treatment in the recent years \cite{KO1,KO2,KOS}. 
Free boundary conditions, where portions of the boundary are allowed to fluctuate were also 
studied \cite{DFR}.

The general method to obtain the arctic curve location is the asymptotic study of bulk expectation values, which requires a certain amount 
of technology, resorting for instance to the machinery of the Kasteleyn matrix. Other rigorous methods use the machinery of cluster integrable 
systems of dimers \cite{DFSG,KP}.

All the models above have an interesting common feature: they can be rephrased in terms of configurations of non-intersecting lattice (or graph)
paths, which arise from conservation laws of the models, and display their underlying fermionic character.  
Typically, we have a set of paths with steps along oriented edges of a regular graph, with fixed starting and ending points,
and subject to the condition that no two paths share the same vertex. These occupy a maximal domain $D$, which is then scaled to
reach a continuum limit. In the path formulation,
frozen phases correspond to regular compact configurations (such as zones with parallel paths only), or to empty domains not visited 
by any path. With such an interpretation, it is easy to track down the arctic curve (or portions thereof) as the asymptotic ``outer shell" of the
path configurations, determined by the outermost paths.
Inspired by this remark, Colomo and Sportiello \cite{COSPO} recently devised a new method for determining the arctic curve in path models,
coined the {\it tangent method}.  The idea is to move the endpoint of one of the outermost paths to some distant point $p$ on the regular graph, 
so as to force this path to exit the domain $D$ say at a point $\ell$. It is then argued that between $\ell$ and $p$, away from the influence 
of the other paths the most likely asymptotic trajectory is a straight line. Inside the domain $D$, the outermost path is expected to first follow
the outer shell, then escape this shell \emph{tangentially} and continue on towards $\ell$, again along a straight line since the crossed region 
is empty from other paths. For any fixed $p$, the most likely position $\ell=\ell(p)$ corresponds to having both straight lines identical. Solving
the corresponding  extremization problem therefore provides a parametric family of straight lines $(\ell(p)\, p)$, all tangent to the arctic curve,
which is then recovered as the envelope of this family of tangents.
The main advantage of this method is that it only requires to estimate a {\it boundary} one-point function, namely that for which the endpoint
of an outer path is moved to a position $\ell$ on the boundary of $D$. Such a function is considerably simpler to compute than bulk expectation 
values.

The method, though non-rigorous, was successfully  tested in a
number of examples \cite{COSPO,DFLAP}. Remarkably, it seems to even apply to situations where the lattice paths interact, 
such as the so-called osculating paths describing configurations of the six-vertex model. In this model, the path configurations are
allowed to form ``kissing points" where a vertex is shared by two neighboring paths. The tangent method predicts in particular the asymptotic shape of large 
alternating sign matrices (ASM) \cite{COSPO} as well as vertically symmetric alternating sign matrices (VSASM) \cite{DFLAP}.

In the present paper, we use the tangent method to investigate
path/tiling models with new kinds of boundary conditions: in the path language, we consider path configurations where the starting points
of the paths take fixed but \emph{arbitrary positions} aligned along some boundary segment. Asymptotically, the distribution of these points 
is simply characterized 
by some arbitrary piecewise differentiable function $\al(u):[0,1]\to \R$. 
Our main result is that the corresponding arctic curve has an explicit parametric representation $(X(t),Y(t))$ for its coordinates in the plane,
 which takes the following simple form:
\begin{equation}\label{mainresult}
\begin{split}
&\left\{ \begin{matrix}
X(t)= & t- \frac{\displaystyle{x(t)(1-x(t))}}{\displaystyle{x'(t)}} \hfill\\
& \\
Y(t)= &\frac{\displaystyle{(1-x(t))^2}}{\displaystyle{x'(t)}}\hfill
\end{matrix}\right. \\
&\qquad \hbox{with}\ \ x(t):= e^{\textstyle{-\int_0^1 \frac{du}{t-\al(u)}}}\ .\\
\end{split}
\end{equation}
This provides us with a direct transform that maps the ``boundary shape" $\al(u)$ to the arctic curve, made in general of
several portions corresponding to various allowed domains of the parameter $t$.  

The paper is organized as follows. 
In Section \ref{model}, we present the general path model that we will consider, together with its tiling interpretation, and compute its partition function. The model involves paths on the edges of the square lattice with
starting points fixed at arbitrary positions along a horizontal segment. As just mentioned, these positions are entirely characterized asymptotically by 
their limiting boundary shape $\al(u)$.
The tiling interpretation allows to rephrase the model in three different (but equivalent) ways, using different sets of paths.

The tangent method is then applied in Sections \ref{puzzleone} and \ref{puzzle2} using two different sets of paths 
to obtain two different portions of the arctic curve. The derivation involves the computation of a boundary 
one-point function, which is performed by using the LU decomposition of the 
Lindstr\"om-Gessel-Viennot matrix, a method advertised and successfully used in \cite{DFLAP} for similar problems. 
Both computations lead to the {\it same} parametric equations for the arctic curve, as given above, in two different parameter domains.

Section \ref{examplesec} presents various examples: the ``pure" case $\al(u)=p\, u$, the case of a piecewise linear $\al(u)$
and finally two instances of some non-linear $\al(u)$. 
Subtleties arise whenever $\al'(u)=1$ on finite segments, corresponding to a certain type of freezing boundary condition
inducing new macroscopic frozen regions inside the path domain. Likewise, macroscopic gaps 
in the distribution of starting points induce another type of freezing. 
These ``freezing boundaries" are investigated in detail in Section \ref{freezesec}, and give rise to additional portions of the arctic curve, still described 
by the parametric equations \eqref{mainresult} above, but for yet other domains of $t$.

We gather a few concluding remarks in Section~\ref{conclusec}.

\section{Definition of the model and partition function}
\label{model}

\subsection{Non-intersecting lattice paths with arbitrary starting points}
\label{NILP}

In its simplest formulation, the model that we wish to study simply describes configurations of non-intersecting lattice paths (NILP)
with prescribed extremities. More precisely, a configuration consists of $n+1$ lattice paths making only west- or north-oriented unit steps along
the edges of the regular square lattice, with respective starting points $O_i$ and endpoints $E_i$, $i=0,\dots, n$, chosen as follows: 
the endpoints $E_i$ are taken with coordinates $(0,i)$ so as to span a vertical segment of length $n$; 
the starting points $O_i$ have coordinates $(a_i,0)$ where $(a_i)_{0\leq i\leq n}$ is a given \emph{arbitrary strictly increasing sequence of integers}
of length $n+1$  with $a_0=0$. These vertices therefore lie on a horizontal segment of length $a_n$ (with $a_n \geq n$)  with prescribed but arbitrary
strictly increasing positions along this segment. The paths are required to be non-intersecting in the sense that any two paths cannot share a common vertex of the lattice.
Figure \ref{fig:path1} shows an example of such path configuration with $n=6$. Note that, due to the non-intersection constraint, the portions of
the paths lying above the line $Y=X$ in the $(X,Y)$ plane (dashed line in the figure) are "frozen" as they necessarily form horizontal segments.
 
\begin{figure}
\begin{center}
\includegraphics[width=8cm]{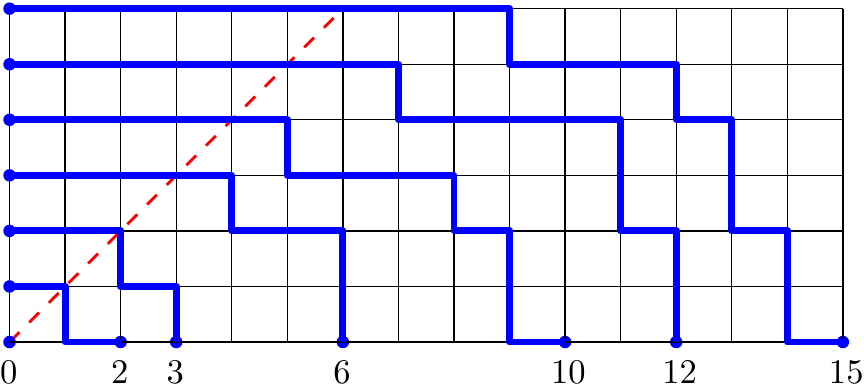}
\end{center}
\caption{A configuration of $n+1=7$ non-intersecting lattice paths made of west- or north-oriented unit steps, with starting points $O_i=(a_i,0)$ and endpoints $E_i=(0,i)$, $i=0,\dots, n$, here in the particular case $(a_i)_{0\leq i\leq n}=(0,2,3,6,10,12,15)$.  The portions of
paths above the dashed line are necessarily "frozen" into horizontal segments.}
\label{fig:path1}
\end{figure}

\subsection{Tiling interpretation and alternative path formulations}
\label{tiling}
As displayed in Figure \ref{fig:tiling123}, any of the above defined configurations of non-intersecting lattice paths may be transformed into a particular
\emph{tiling} for the domain of the plane covered by the paths. More precisely, to each horizontal edge $(p+1,q)\to (p,q)$ carrying a west-oriented step is associated an \emph{upper tile} which is the rhomboid with vertices $(p-1/2,q-1/2),(p+1/2,q-1/2),(p+3/2,q+1/2),(p+1/2,q+1/2)$, to each vertical edge 
$(p,q)\to (p,q+1)$ carrying a north-oriented step is associated a \emph{right tile} which is a rhomboid with vertices $(p-1/2,q-1/2),(p-1/2,q+1/2),(p+1/2,q+3/2),(p+1/2,q+1/2)$ and finally, to each unvisited vertex $(p,q)$ is associated a \emph{front tile} which is a square with vertices $(p-1/2,q-1/2),(p-1/2,q+1/2),(p+1/2,q+1/2),(p+1/2,q-1/2)$. Apart from the original NILP configuration, the resulting tiling naturally gives rise to two other sets of NILP as displayed in the figure.

\begin{figure}
\begin{center}
\includegraphics[width=10cm]{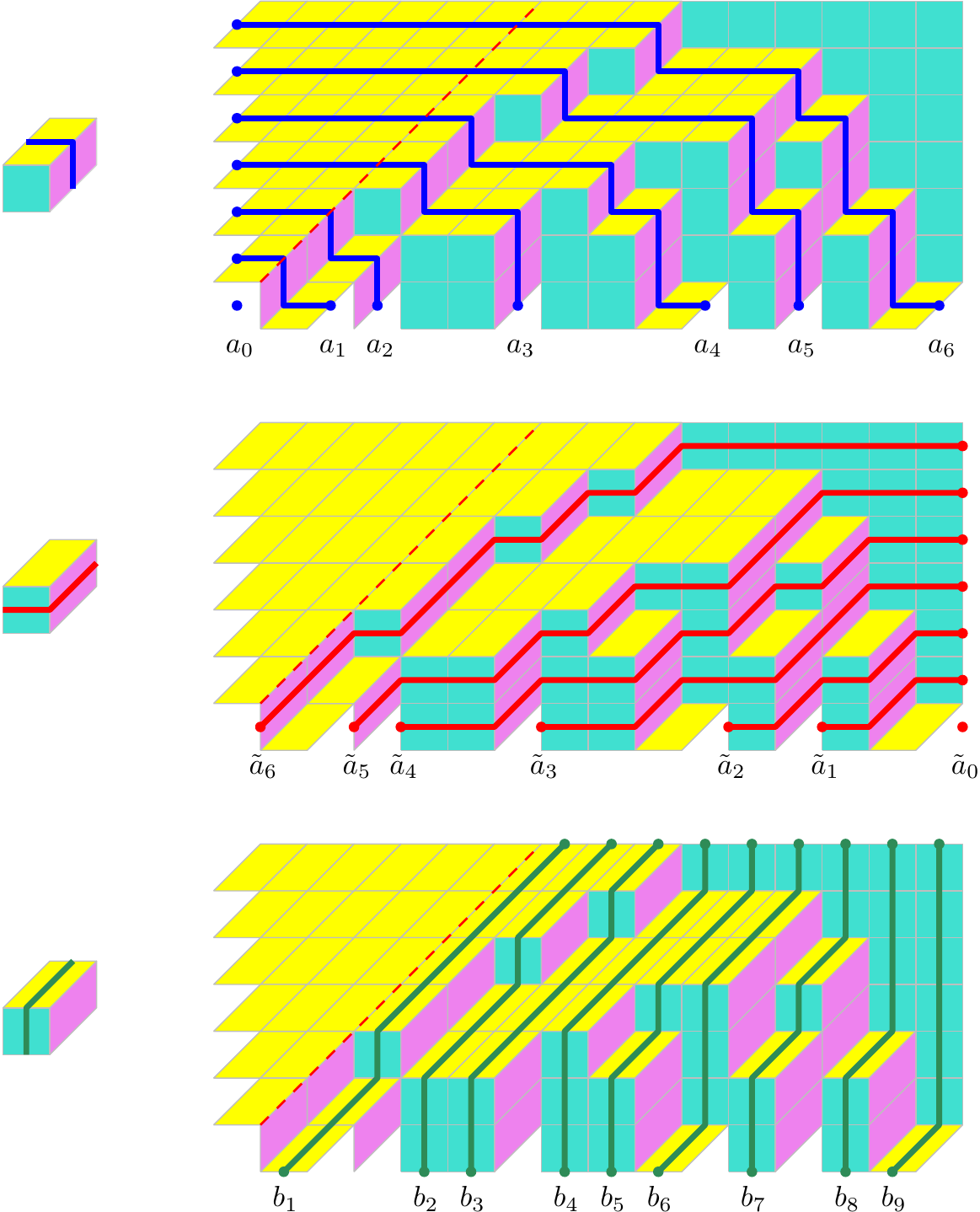}
\end{center}
\caption{Top: the path configuration of Figure \ref{fig:path1} together with the associated tiling configuration, made of upper, right, and front tiles, corresponding respectively to edges carrying a horizontal path step, edges carrying a vertical path step, and vertices not visited by the paths. Middle: 
connecting the vertical sides of both front tiles and right tiles by elementary segments creates a configuration of $(n+1)=7$ non-intersecting lattice paths 
made of east- and northeast-oriented elementary steps. These paths (numbered from right to left) have starting points $\tilde{O}_i=(a_n+1/2-\tilde{a}_i,0)$, where $\tilde{a}_i:=a_n-a_{n-i}$, and endpoints $\tilde{E}_i=(a_n+1/2,i)$, $i=0,\dots, n$. Here, $a_n=15$ and $(\tilde{a}_i)_{0\leq i\leq n}=(0,3,5,9,12,13,15)$. Bottom: connecting the horizontal sides of both front tiles and upper tiles (except for those in the frozen region above the dashed line) by elementary segments creates a configuration of $m=a_n-n=9$ non-intersecting lattice paths 
made of north- and northeast-oriented elementary steps. These paths have starting points $\hat{O}_i=(b_i,-1/2)$ and endpoints $\hat{E}_i=(n+i,n+1/2)$, $i=1,\dots, m$, where $(b_i)_{1\leq j\leq m}$ is the ``complementary sequence" of $(a_i)_{0\leq j \leq n}$ in $\Z\cap [0,a_n]$. Here, $(b_i)_{1\leq i\leq m}=(1,4,5,7,8,9,11,13,14)$.}
\label{fig:tiling123}
\end{figure}

The second set of paths is obtained by associating to the right and front tiles introduced above northeast- and east-oriented steps of the form $(p-1/2,q)\to(p+1/2,q+1)$, and $(p-1/2,q)\to(p+1/2,q)$ respectively. This leads to a configuration of $(n+1)$ NILP with endpoints
$\tilde{E}_i$ of coordinates $(a_n+1/2,i)$ and starting points $\tilde{O}_i$ of coordinates $(a_n+1/2-\tilde{a}_i,0)$ for $i=0,\dots,n$, where
$(\tilde{a}_i)_{0\leq i\leq n}$ is the strictly increasing sequence (with $\tilde{a}_0=0$) defined as:
\begin{equation}
\tilde{a}_i:=a_n-a_{n-i}\ .
\label{eq:tildeaidef}
\end{equation} 
\medskip

As for the the third set of paths, it is obtained by associating to the upper and front tiles mentioned above northeast- and north-oriented steps of the form $(p,q-1/2)\to(p+1,q+1/2)$, and $(p,q-1/2)\to(p,q+1/2)$ respectively. We omit here those upper tiles above the $Y=X$ line as they form a 
regular crystalline pattern and the associated paths play no role. This leads to a configuration of $m=a_n-n$ NILP with endpoints 
$\hat{E}_i$ of coordinates $(n+i,n+1/2)$ and starting points $\hat{O}_i$ of coordinates $(b_i,-1/2)$ for $i=1,\dots,m$, where the 
strictly increasing sequence $(b_i)_{1\leq i\leq m}$ is the \emph{complementary sequence} of the sequence $(a_i)_{0\leq i\leq n}$,
defined for instance via the polynomial identity
\begin{equation}
\prod_{i=1}^m(t-b_i)\, \prod_{i=0}^n(t-a_i)=\prod_{i=0}^{a_n}(t-i)\ , \quad \hbox{with}\ m=a_n-n\ .
\label{eq:bidef}
\end{equation} 
\medskip
 
Clearly, the data of any of the three path configurations allows to recover the two others so that each of the three descriptions
carries all the information about the configuration at hand. We may therefore use any of the three path formulations to describe
our model.

\subsection{Partition function}
\label{pf}
Returning to the original formulation of Section \ref{NILP} with paths made or west- and north-oriented steps, the \emph{partition function 
$Z_n := Z\left((a_i)_{0\leq i\leq n}\right)$}
of the model, namely the \emph{number} of non-intersecting path configurations, may be obtained via the famous   
Lindstr\"om-Gessel-Viennot (LGV) lemma \cite{LGV1,GV}, which states that $Z_n=\det\left((A_{i,j})_{0\leq i,j\leq n}\right)$ 
where $A_{i,j}$ denotes the number
of paths made of west- and north-oriented steps along edges of the square lattice and connecting the starting point $O_i$ to the endpoint $E_j$.
In the present case, we have clearly
\begin{equation*}
A_{i,j}={a_i+j\choose j}
\end{equation*}
since a path from $O_i$ to $E_j$ is made of a total of $a_i+j$ steps among which exactly $j$ are oriented north. This latter determinant
may be easily computed in various ways. We present here a derivation using the so-called LU decomposition of the matrix $A$ with elements 
$A_{i,j}$ above. This method will indeed prove adapted when we will extend our calculation to some more involved determinants with the same flavor
and was successfully applied for determining the arctic curve for various path problems in \cite{DFLAP}.
Recall that the LU decomposition consists in writing  the square matrix $A$, of size $(n+1)\times (n+1)$,  as the product $A=L\, U$ of a lower triangular  square matrix $L$ by an upper triangular square matrix $U$ (both matrices having the same size as $A$). Such a decomposition exists
for suitable matrices (among which is the desired matrix $A$, as made explicit below) and is moreover unique if we demand that $L$ is lower uni-triangular, i.e.\ $L_{i,i}=1$ for all $i=0,\dots,n$. From the knowledge of the matrices $L$ and $U$, we immediately obtain $Z_n$ via
\begin{equation*}
Z_n=\det(A)=\det(L)\times\det(U)=\prod_{i=0}^{n}U_{i,i}
\end{equation*} 
since $U$ is upper triangular and $\det(L)=1$. Note that, in practice, only the knowledge of the diagonal 
elements of $U$ is required to get $Z_n$.

In order to get the LU decomposition of the matrix $A$, it is enough to find a lower triangular square matrix 
$L^{-1}$ with diagonal elements equal to $1$ such that 
$L^{-1}\, A$ is upper triangular. We have the following result:
\begin{thm}
The lower uni-triangular matrix $L^{-1}$ with 
matrix elements
\begin{equation}
L^{-1}_{i,j}=\left\{
\begin{matrix}\frac{\displaystyle{\prod\limits_{s=0}^{i-1}(a_i-a_s)}}{\displaystyle{\prod\limits_{s=0\atop s\neq j}^{i}(a_j-a_s)}} & \hbox{for}\ i\geq j\\
0 & \hbox{for}\ i<j
\end{matrix}
\right.
\label{eq:Lmat}
\end{equation}
is such that $U:=L^{-1}\, A$ is upper triangular. 
\end{thm}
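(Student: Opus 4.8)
The plan is to compute the entries of $U=L^{-1}A$ directly and to show that they vanish strictly below the diagonal by recognizing the relevant sum as a Lagrange interpolation coefficient (equivalently, a divided difference). Since $L^{-1}$ is already prescribed to be lower triangular, it suffices to verify the two things at once: that it is \emph{uni}-triangular and that $U_{i,j}=0$ for $i>j$.

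First I would write, for $i\geq j$,
\[
U_{i,j}=\sum_{k=0}^{i}L^{-1}_{i,k}A_{k,j}
=\left(\prod_{s=0}^{i-1}(a_i-a_s)\right)\sum_{k=0}^{i}\frac{1}{\prod_{s=0,\,s\neq k}^{i}(a_k-a_s)}\binom{a_k+j}{j},
\]
using that the numerator $\prod_{s=0}^{i-1}(a_i-a_s)$ appearing in $L^{-1}_{i,k}$ does not depend on the summation index $k$ and hence factors out. Specializing to $k=i$ also gives $L^{-1}_{i,i}=1$, so $L^{-1}$ is indeed uni-triangular, and the $a_i$ being strictly increasing guarantees that the nodes $a_0,\dots,a_i$ are distinct so that all denominators are nonzero.

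The key observation is that $\binom{x+j}{j}=\frac{(x+1)(x+2)\cdots(x+j)}{j!}$ is a polynomial in $x$ of degree exactly $j$ with leading coefficient $1/j!$. Writing $f_j(x):=\binom{x+j}{j}$, the remaining sum is precisely the coefficient of $x^i$ in the Lagrange interpolant of $f_j$ through the $i+1$ distinct nodes $a_0,\dots,a_i$:
\[
\sum_{k=0}^{i}\frac{f_j(a_k)}{\prod_{s\neq k}(a_k-a_s)}
=[x^i]\left(\sum_{k=0}^{i}f_j(a_k)\prod_{s\neq k}\frac{x-a_s}{a_k-a_s}\right).
\]
Since $f_j$ is a polynomial of degree $j\leq i$, it equals its own interpolant through $i+1$ nodes by uniqueness of polynomial interpolation, so the right-hand side is just the coefficient of $x^i$ in $f_j$ itself.

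The conclusion then follows by degree counting. For $i>j$ we have $\deg f_j=j<i$, so the coefficient of $x^i$ vanishes and $U_{i,j}=0$, establishing that $U$ is upper triangular. For $i=j$ the coefficient of $x^i$ is the leading coefficient $1/i!$, giving the bonus formula $U_{i,i}=\frac{1}{i!}\prod_{s=0}^{i-1}(a_i-a_s)$, which feeds directly into $Z_n=\prod_{i=0}^{n}U_{i,i}$. I do not anticipate a serious obstacle here: the only points requiring care are confirming that $f_j$ has degree \emph{exactly} $j$ (so that vanishing is governed by the sharp inequality $j<i$) and invoking the distinctness of the nodes, both of which are immediate from the hypotheses.
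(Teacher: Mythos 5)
Your proof is correct, and it takes a genuinely different route from the paper. You evaluate the sum $\sum_{k=0}^{i}\binom{a_k+j}{j}/\prod_{s\neq k}(a_k-a_s)$ by recognizing it as the leading coefficient (the coefficient of $x^i$) of the Lagrange interpolant of the degree-$j$ polynomial $f_j(x)=\binom{x+j}{j}$ through the $i+1$ distinct nodes $a_0,\dots,a_i$ --- equivalently, the divided difference $f_j[a_0,\dots,a_i]$ --- and then conclude by uniqueness of interpolation plus degree counting: the coefficient vanishes when $j<i$ and equals $1/i!$ when $j=i$. The paper instead writes $L^{-1}_{i,k}$ as a contour integral $\oint_{\mathcal{C}(a_k)}\prod_{s=0}^{i-1}(a_i-a_s)\,\big/\prod_{s=0}^{i}(t-a_s)\,\frac{dt}{2\pi i}$, lets the choice of contour perform the sum over $k$, and evaluates the resulting integral as minus the residue at infinity, which vanishes for $i>j$ because the integrand is $O(1/t^2)$. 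These are two languages for the same underlying identity (residues at the finite poles of that integrand reproduce exactly your divided-difference sum), but your argument is purely algebraic and more elementary --- no complex analysis is needed, and the distinctness of the nodes is the only hypothesis invoked. What the paper's formulation buys is reusability: the same contour-integral representation, the freedom to enlarge or deform contours (Remark 3.3), and the residue-at-infinity evaluation are precisely the tools used later for the one-point functions $H_{n,\ell}$ and $\tilde H_{n,\ell}$ and for their saddle-point asymptotics, which is where the tangent method actually lives; your interpolation argument, while cleaner here, does not extend as directly to those computations. Both proofs also yield the same bonus formula $U_{i,i}=\frac{1}{i!}\prod_{s=0}^{i-1}(a_i-a_s)$, which is what feeds the partition function.
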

\begin{proof}
The diagonal elements of $L^{-1}$ are clearly equal to $1$ and, for any $i$ and $j$, we may write
\begin{equation*}
L^{-1}_{i,j}=\prod\limits_{s=0}^{i-1}(a_i-a_s) 
\oint_{\mathcal{C}(a_j )} \frac{1}{\displaystyle{\prod\limits_{s=0}^{i}(t-a_s)}}\frac{dt}{2{\rm i} \pi}\ ,
\end{equation*}
where $\mathcal{C}(a_j)$ is a counterclockwise contour in the complex plane which encircles $a_j$ but none of the other $a_s$ for $0\leq s\leq i$. Here and throughout the paper, when referring to a contour integral, we use the notation 
 $\mathcal{C}(z_1,\dots,z_m)$ to indicate that the integral runs over a counterclockwise contour in the complex plane which encircles all
the points ${z_1,\dots,z_m}$ and \emph{does not encircle any pole of the integrand which is not this list}. The specified $z_s$'s will in general be 
themselves poles of the integrand but it may happen that some of them are not,  in which case they do not influence the value of the integral.
Written this way, we have
\begin{equation}
\begin{split}
U_{i,j}\equiv \sum_{k=0}^{n}L^{-1}_{i,k}A_{k,j}&=\sum_{k=0}^{i}L^{-1}_{i,k}{a_k+j\choose j}\\
& = \prod\limits_{s=0}^{i-1}(a_i-a_s) 
\oint_{\mathcal{C}(a_0, a_1,\dots, a_i)} \frac{\displaystyle{ \frac{1}{j!} \prod_{s=0}^{j-1}(t+j-s)}}{\displaystyle{\prod\limits_{s=0}^{i}(t-a_s)}}\frac{dt}{2{\rm i} \pi}\ ,\\
\end{split}
\label{eq:LA}
\end{equation} 
where the summation over $k$ is automatically achieved by the choice of contour which encircles all the poles of the denominator at $t=a_0, \dots, a_i$. Here we simply used the trivial equality
\begin{equation*}
{a \choose m}= \frac{1}{m!} \prod_{s=0}^{m-1}(a-s)
\end{equation*}
for any integers $a\geq 0 $ and $m\geq 0$ to transform the binomial coefficient into a polynomial in $t$.
Since the contour in \eqref{eq:LA} encircles all the poles of the integrand for finite 
$t$, the value of the integral may be obtained as minus the residue of its integrand at infinity. 
Using
\begin{equation*}
 \frac{\displaystyle{ \frac{1}{j!} \prod_{s=0}^{j-1}(t+j-s)}}{\displaystyle{\prod\limits_{s=0}^{i}(t-a_s)}}\underset{t\to \infty}{\sim} \frac{1}{j!}\ 
 t^{j-i-1}=
\left\{\begin{matrix}O\left(\frac{1}{t^2}\right) & \hbox{for}\ i> j\ \ \\
&\\
\displaystyle{\frac{1}{i!}\times \frac{1}{t}} & \hbox{for}\ i=j\ ,
\end{matrix}
\right.
\end{equation*}
we immediately deduce that $U_{i,j}=0$ for $i>j$ since there is no pole at infinity in this case, hence $U$ is upper triangular as wanted.
\end{proof}
Moreover, we have
\begin{equation}\label{eq:Uii}
U_{i,i}=\frac{1}{i!}\, \prod\limits_{s=0}^{i-1}(a_i-a_s) 
\end{equation}
for $i=0,\dots,n$ since the residue at infinity is $-1/i!$. 

From this latest result, we deduce the following expression for the partition function:
\begin{thm}
The partition function reads
\begin{equation}\label{paf}
Z_n=\prod_{i=0}^{n} \frac{\displaystyle{\prod\limits_{s=0}^{i-1}(a_i-a_s)}}{\displaystyle{\prod\limits_{s=0}^{i-1}(i-s)}}=
\frac{\Delta(a_0,a_1,\dots,a_n)}{\Delta(0,1,\dots,n)}\ ,
\end{equation}
where $\Delta(a_0,a_1,\dots,a_n)$ denotes the Vandermonde determinant:
\begin{equation*}
\Delta(a_0,a_1,\dots,a_n):=\det\left((a_i^{j})_{0\leq i,j \leq n}\right)=\prod_{0\leq i<j \leq n} (a_j-a_i)\ .
\end{equation*}
\end{thm}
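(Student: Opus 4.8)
The plan is to assemble the statement directly from the two ingredients already established: the factorization $Z_n=\det(A)=\prod_{i=0}^{n}U_{i,i}$ coming from the LU decomposition, and the explicit diagonal entries $U_{i,i}=\frac{1}{i!}\prod_{s=0}^{i-1}(a_i-a_s)$ recorded in \eqref{eq:Uii}. Substituting the latter into the former gives at once
\[
Z_n=\prod_{i=0}^{n}\frac{1}{i!}\prod_{s=0}^{i-1}(a_i-a_s)\ ,
\]
so the whole content of the theorem reduces to rewriting this single product in the two asserted forms. No new analytic input is needed; everything is elementary algebraic bookkeeping.

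First I would dispose of the factorials in the denominator. As $s$ runs from $0$ to $i-1$, the quantity $i-s$ runs through the values $i,i-1,\dots,1$, so that $i!=\prod_{s=0}^{i-1}(i-s)$. Replacing each $1/i!$ accordingly turns the expression above into the first (doubly-indexed product) form
$Z_n=\prod_{i=0}^{n}\bigl(\prod_{s=0}^{i-1}(a_i-a_s)\bigr)/\bigl(\prod_{s=0}^{i-1}(i-s)\bigr)$ claimed in \eqref{paf}.

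Next I would recognize numerator and denominator as Vandermonde determinants. Collecting the factors over all $i$, the double product $\prod_{i=0}^{n}\prod_{s=0}^{i-1}(a_i-a_s)$ is precisely the product over all pairs $0\le s<i\le n$ of $(a_i-a_s)$; relabeling the indices gives $\prod_{0\le i<j\le n}(a_j-a_i)=\Delta(a_0,\dots,a_n)$. The identical manipulation applied to the denominator, where each $a_i$ is replaced by $i$, yields $\Delta(0,1,\dots,n)$, producing the final ratio form.

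There is essentially no obstacle here: the only point requiring minor care is the index bookkeeping needed to confirm that the double product enumerates each unordered pair $\{s,i\}$ exactly once, so that it matches the standard Vandermonde product $\prod_{0\le i<j\le n}(a_j-a_i)$. The genuine substance of the computation was already carried out in the residue argument leading to \eqref{eq:Uii}; the present statement is simply the clean repackaging of that result.
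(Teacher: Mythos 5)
Your proposal is correct and follows exactly the paper's own route: the paper likewise deduces the theorem directly from $Z_n=\prod_{i=0}^{n}U_{i,i}$ together with the diagonal entries $U_{i,i}=\frac{1}{i!}\prod_{s=0}^{i-1}(a_i-a_s)$ of \eqref{eq:Uii}, with the same elementary rewriting of $i!$ as $\prod_{s=0}^{i-1}(i-s)$ and the identification of the double products as Vandermonde determinants. Nothing is missing.
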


\begin{example}
In the particular case $a_i=p\, i$ for some integer $p\geq 1$, this Theorem yields a partition function
\begin{equation*}
Z_n=p^{\frac{n(n+1)}{2}}
\end{equation*}
in agreement with the result of \cite{DFLAP} for $p=2$. Note also that the matrix $L^{-1}$
then has elements $L^{-1}_{i,j}=(-1)^{i+j}{i \choose j}$ independently of $p$. 
\end{example}

To conclude this section, we note that, by consistency, the same expression for the partition function should be obtained upon using any
of the three possible path formulations of Section \ref{tiling}. 
From the LGV lemma, this allows us to express $Z_n$ as the determinant of the matrix $\tilde{A}$ 
of size $(n+1)\times(n+1)$ whose
elements $\tilde{A}_{i,j}$ enumerate paths made of northeast- and east-oriented elementary steps joining $\tilde{O}_i$ to $\tilde{E}_j$, or equivalently 
as the determinant of the matrix $\hat{A}$ of size $m\times m$ whose
elements $\hat{A}_{i,j}$ enumerate paths made of northeast- and north-oriented elementary steps joining $\hat{O}_i$ to $\hat{E}_j$.
The simple combinatorial formulas for $\tilde{A}_{i,j}$ and $\hat{A}_{i,j}$ lead to the identities:
\begin{equation*}
\det{a_i+j\choose j}_{0\leq i,j\leq n}=\det{\tilde{a}_i\choose j}_{0\leq i,j\leq n}
=\det{n+1\choose b_i-j+1}_{1\leq i,j\leq m}= \frac{\Delta(a_0,a_1,\dots,a_n)}{\Delta(0,1,\dots,n)}
\end{equation*}
with $\tilde{a}_i$ as in \eqref{eq:tildeaidef}, $b_i$ as in \eqref{eq:bidef} and $m=a_n-n$ as before.

\section{Tangent method and one-point function: the first piece of the puzzle}
\label{puzzleone}

The aim of this paper is to further study the arctic curve phenomenon, roughly summarized as follows. For large NILP
configurations, two distinct phases can be distinguished: a frozen phase in which paths follow lattice-like regular patterns,
and a liquid entropic phase where paths display more erratic behaviors. It turns out that for special setups, large 
NILP configurations develop a sharp separation between these two phases, along a curve coined ``arctic" for obvious reasons (see Figure 
\ref{fig:tangentmethod} for an illustration).

\subsection{Tangent method and LU decomposition}
\label{tansec}

Let us first describe here the general setting of the \emph{tangent method}, as devised by Colomo and Sportiello \cite{COSPO} 
for the derivation of arctic curves in path models. As opposed to the standard approach consisting in computing bulk expectation values, this method only
requires the knowledge of a much simpler {\it boundary} one-point function. The method goes as follows: we consider NILP configurations with fixed starting and
ending points say $\bv=\{ v_i\}_{i=0,1,\dots,n}$ and $\bw=\{w_j\}_{j=0,1,\dots,n}$ with steps along the oriented edges of some given underlying lattice.
The partition function $Z_{\bv,\bw}$ is given by a LGV determinant: $Z_{\bv,\bw}=\det(A)$,
where the matrix element $A_{i,j}=Z_{v_i,w_j}$ enumerates the possible configurations for a single path joining $v_i$ to $w_j$.
At finite $n$, the NILP configurations for this problem occupy a maximal domain $D$ whose size grows with $n$. 
We may now consider an asymptotic version of the problem with $n$ large, with a suitable rescaling of the underlying lattice so that $D$ 
tends to a scaled domain $\mathcal{D}$ remaining finite when $n\to \infty$.

\begin{figure}
\begin{center}
\includegraphics[width=8cm]{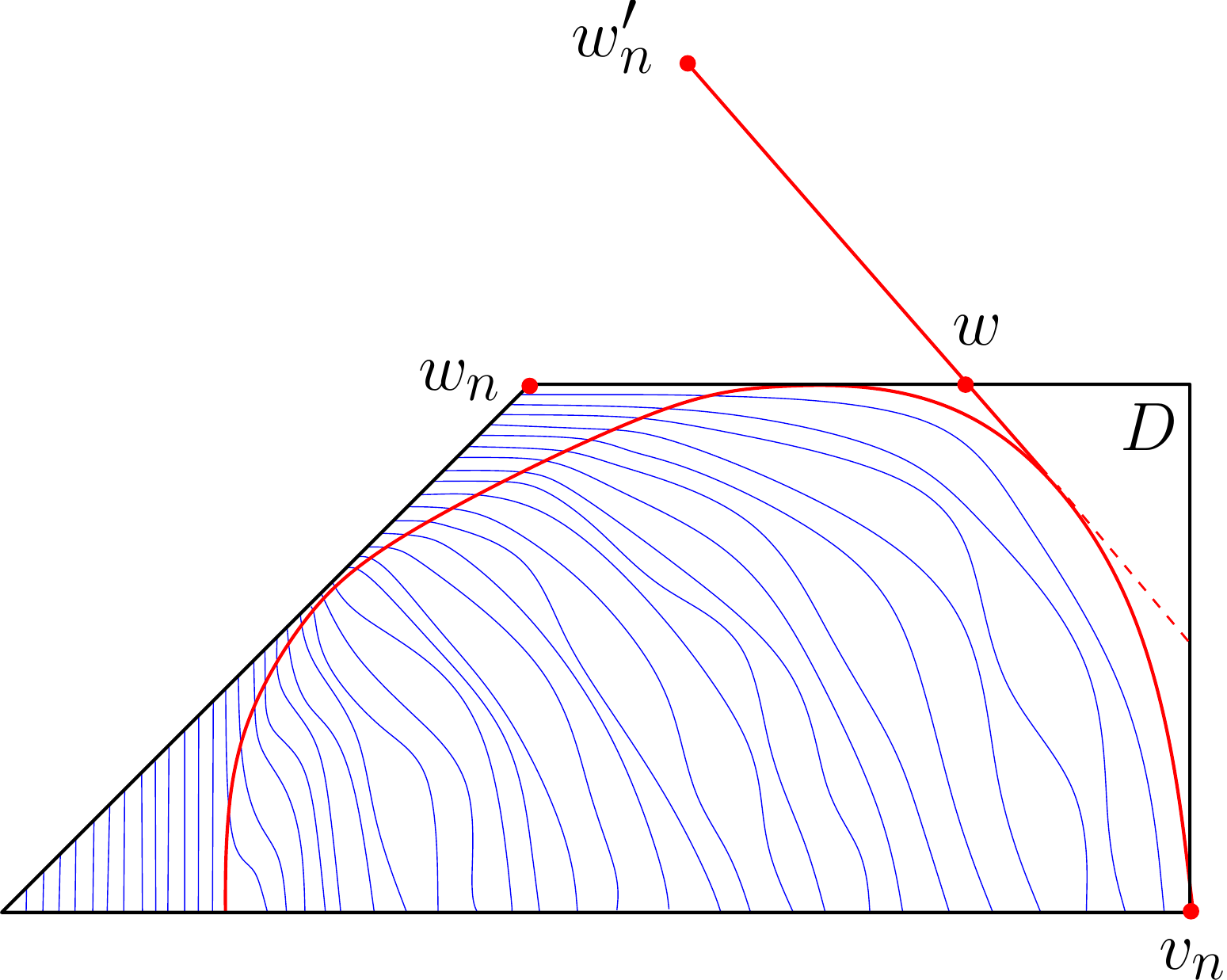}
\end{center}
\caption{A schematic picture of the tangent method: moving the endpoint of the outermost path from $w_n$ to $w'_n$ forces the path
to escape from the originally reachable domain $D$ at some point $w$ on the boundary of $D$. The most likely choice for $w$ is such that
the straight line $(ww'_n)$ is tangent to the arctic curve since the most likely route followed by the outermost path, starting from $v_n$,  consists in sticking
to the arctic curve and escaping from this curve tangentially towards $w'_n$. }
\label{fig:tangentmethod}
\end{figure}

The tangent method relies on the assumption that
\emph{outermost paths} say from $v_n$ to $w_n$ will follow asymptotically the boundary between the frozen and liquid phases of the system,
which sharpens into the arctic curve as $n$ becomes large. To investigate this curve, we simply have to move the endpoint $w_n$ 
to another point $w_n'$ away from $D$ so that paths from $v_n$ to $w_n'$ must escape the domain $D$ (see Figure \ref{fig:tangentmethod}). Let $w$ be the last vertex
of $D$ ($w\in \partial D$) visited by such a path. It is then argued that asymptotically, as it lies away from the influence of the other paths, the escaping
path is most likely to follow a straight line from $w$ to $w_n'$. This line extends within $D$ until the arctic curve is met, and is argued to be 
\emph{tangent} to the latter if we picked for $w$ the most likely escape point from $D$. 
By moving around the new endpoint $w_n'$, we may thus determine lines of most likely escape, which form a parametric family of tangents
to the arctic curve. The latter is then recovered as the envelope of this family of lines.
The modified partition function, normalized by the original one, reads simply $Z_{\bv,\{w_0,\dots,w_{n-1},w_n'\}}/Z_{\bv,\bw}$. By
an asymptotic analysis, we may determine the most likely exit point $w$ from $D$ of the outermost path, which together with 
$w_n'$ defines the tangent line.
This is done is all generality by performing the decomposition
\begin{equation}\label{decompHY}
\frac{Z_{\bv,\{w_0,\dots,w_{n-1},w_n'\}}}{Z_{\bv,\bw}}=\sum_{w\in \partial D} H_{\bv,\bw}^{(w)}  Y_{w,w_n'}\ ,
\end{equation}
where $H_{\bv,\bw}^{(w)}=Z_{\bv,\{w_0,\dots,w_{n-1},w\}}/Z_{\bv,\bw}$ is the so-called \emph{boundary one-point function} in which the outermost 
path ends at $w$ on the boundary of $D$. The last term $Y_{w,w_n'}$ simply enumerates path configurations outside $D$ from $w$ to $w'_n$. 

In practice, the boundary one-point function $H_{\bv,\bw}^{(w)}$ can be computed explicitly by the LU decomposition method \cite{DFLAP}:
first we use for the new partition function the LGV determinant expression $Z_{\bv,\{w_0,\dots,w_{n-1},w\}}=\det(A')$, where the matrix $A'$ differs from $A$
\emph{only in its last column}, which now consists of the partition functions $Z_{v_i,w}$, $i=0,1,\dots,n$.
Assume we found a lower uni-triangular matrix $L$ such that $L^{-1}\, A=U$ is upper triangular. Then, since $A$ and $A'$ differ only in their last column,
$L^{-1}\, A'=U'$ is again upper triangular and differs from $U$ in its last column only. 
We immediately deduce that 
\begin{equation}\label{oneptfromU} H_{\bv,\bw}^{(w)}=\frac{U'_{n,n}}{U_{n,n}} \ .
\end{equation}
As for $Y_{w,w_n'}$, it is in general obtained straightforwardly as it involves configurations of a single path from $w$ to $w_n'$ lying outside $D$,
hence away from the domain of influence of the other paths. 
The most likely exit point $w$ for fixed endpoint $w_n'$ can then be found by an asymptotic analysis of the 
explicit decomposition \eqref{decompHY}, which leads to a parametric family of tangents to the arctic curve.

\subsection{One-point function}
\label{Hnlsec}

Let us now apply the tangent method to our specific problem.
As clear from Figure \ref{fig:path1}, the domain $D$ in which the paths are confined is here a rectangle of vertical size $n$ and horizontal size $a_n$. 
As described above, we now modify the partition function for NILP by moving the topmost endpoint 
$E_n=(0,n)$ along the vertical line to some other position say $E_n'=(0,n+r)$ with a varying $r\in \Z_+$. This choice is somewhat arbitrary but it is easy to check that the final result for the arctic curve would be the same for any other prescription of endpoint that would induce an exit point on 
the segment $(0,n)$--$(a_n,n)$ (for instance by taking $E_n''=(r,n+r)$ instead).

\begin{figure}
\begin{center}
\includegraphics[width=8cm]{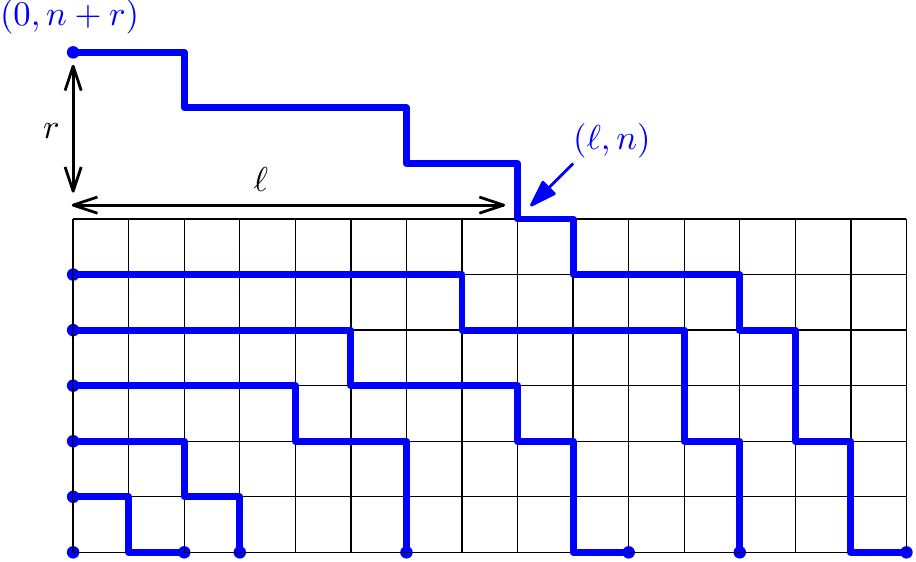}
\end{center}
\caption{The tangent method applied to the NILP under study: the endpoint of the outermost path is moved from $E_n=(0,n)$ to $E'_n=(0,n+r)$ 
with $r\in \Z_+$, forcing the path to escape from the domain $D$ (here the displayed grid) by a north-oriented step at some position $(\ell,n)$ on the boundary of $D$.}
\label{fig:Hnl}
\end{figure}

Let us first compute the one-point function
$H_{n,\ell}$ corresponding to an outermost path from $O_n=(a_n,0)$ exiting at the position $E=(\ell,n)$ from the rectangular
domain $D$ along a north-oriented vertical step $(\ell,n)\to (\ell,n+1)$ pointing out of $D$ (see Figure \ref{fig:Hnl}). 
The LGV matrix $A'$ for such paths reads:
$$A'_{i,j}=\left\{ \begin{matrix} A_{i,j} & {\rm if} \ j<n\ \ \\ & \\  \displaystyle{{a_i+n-l\choose n}} & {\rm if} \ j=n\ .
\end{matrix} \right. $$

\begin{thm}
The one-point function $H_{n,\ell}$ reads:
\begin{equation}\label{exactH}
H_{n,\ell}=\oint_{{\mathcal C}(S_\ell)}  \frac{dt}{2{\rm i} \pi} \prod_{s=0}^n \frac{1}{(t- a_s)} \prod_{s=1}^n (t-\ell+s)\ ,
\end{equation}
where $S_\ell=\{a_s\, \vert a_s\geq \ell\}$.
\end{thm}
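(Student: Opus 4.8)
The plan is to run the modified LGV matrix $A'$ through exactly the same LU machinery that produced the previous theorem, and then read off $H_{n,\ell}$ from \eqref{oneptfromU}. Since $A'$ agrees with $A$ in all but its last column, the lower uni-triangular matrix $L^{-1}$ of \eqref{eq:Lmat} still brings it to upper-triangular form $U'=L^{-1}A'$, and $U'$ differs from $U$ only in its last column. Hence it suffices to evaluate the single entry $U'_{n,n}=\sum_{k=0}^{n}L^{-1}_{n,k}A'_{k,n}$ and divide by $U_{n,n}=\frac{1}{n!}\prod_{s=0}^{n-1}(a_n-a_s)$, the value recorded in \eqref{eq:Uii}.

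First I would rewrite the last-column entries as the evaluation of a single polynomial. For $a_k\geq\ell$ the entry is a genuine binomial coefficient, and using $\binom{a}{m}=\frac{1}{m!}\prod_{s=0}^{m-1}(a-s)$ one gets
\begin{equation*}
A'_{k,n}=\binom{a_k+n-\ell}{n}=\frac{1}{n!}\prod_{s=1}^{n}(a_k-\ell+s)=:P(a_k)\ .
\end{equation*}
Next, exactly as in the proof of the previous theorem, I would represent each $L^{-1}_{n,k}$ as a contour integral,
\begin{equation*}
L^{-1}_{n,k}=\prod_{s=0}^{n-1}(a_n-a_s)\oint_{\mathcal{C}(a_k)}\frac{1}{\prod_{s=0}^{n}(t-a_s)}\frac{dt}{2{\rm i}\pi}\ ,
\end{equation*}
so that summing over the relevant indices $k$ merges the individual contours $\mathcal{C}(a_k)$ into a single contour encircling the corresponding poles, while replacing $P(a_k)$ by $P(t)$ in the integrand. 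Dividing by $U_{n,n}$ then cancels both the prefactor $\prod_{s=0}^{n-1}(a_n-a_s)$ and the $1/n!$, leaving precisely $\oint \frac{\prod_{s=1}^{n}(t-\ell+s)}{\prod_{s=0}^{n}(t-a_s)}\frac{dt}{2{\rm i}\pi}$, as claimed.

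The step I expect to require the most care — and the one that fixes the contour to be $\mathcal{C}(S_\ell)$ rather than $\mathcal{C}(a_0,\dots,a_n)$ — is the identification of which poles $a_k$ are actually encircled. The true entry $A'_{k,n}$ counts west/north paths from $O_k=(a_k,0)$ to $(\ell,n)$; since such paths can never increase the abscissa, $A'_{k,n}=0$ whenever $a_k<\ell$. The interpolating polynomial $P(t)$ reproduces this vanishing only partially: one has $P(a_k)=0$ for $a_k\in\{\ell-n,\dots,\ell-1\}$ (a factor $a_k-\ell+s$ vanishes), but $P(a_k)\neq 0$ for $a_k<\ell-n$, where it would inject a spurious residue absent from the genuine sum. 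I would therefore restrict the summation to those $k$ with $a_k\geq\ell$ from the outset, using that $A'_{k,n}$ truly vanishes for $a_k<\ell$; this is exactly what turns the merged contour into $\mathcal{C}(S_\ell)$ with $S_\ell=\{a_s\,\vert\,a_s\geq\ell\}$. Including the harmless intermediate poles in $\{\ell-n,\dots,\ell-1\}$ would leave the value unchanged (their residues vanish), but one must be careful to \emph{exclude} the poles with $a_k<\ell-n$, and in particular $a_0=0$ whenever $\ell>n$; handling this discrepancy correctly is the crux of the computation.
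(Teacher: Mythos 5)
Your proposal is correct and follows essentially the same route as the paper's proof: the LU decomposition with the matrix $L^{-1}$ of \eqref{eq:Lmat}, the contour-integral representation of $L^{-1}_{n,k}$, summation merged into a single contour, and normalization by $U_{n,n}$ from \eqref{eq:Uii}. Your treatment of the contour is in fact more explicit than the paper's (which only says the contour picks up poles where the binomial coefficient is ``well-defined and non-zero''): you correctly pinpoint that the interpolating polynomial vanishes at $a_k\in\{\ell-n,\dots,\ell-1\}$ but not for $a_k<\ell-n$, so the latter poles must genuinely be excluded, which is exactly the content of the contour $\mathcal{C}(S_\ell)$ and of Remark \ref{remcontour}.
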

\begin{proof}
We use the LU decomposition method with the matrix $L^{-1}$ displayed in \eqref{eq:Lmat}
to compute:
\begin{eqnarray*}
U'_{n,n}&=&\sum_{k=0}^n (L^{-1})_{n,k} A'_{k,n} =\sum_{k=0}^n 
\frac{\displaystyle{\prod\limits_{s=0}^{n-1}(a_n-a_s)}}{\displaystyle{\prod\limits_{s=0\atop s\neq k}^{n}(a_k-a_s)}} {a_k+n-\ell\choose n}  \\
&=&\displaystyle{\prod\limits_{s=0}^{n-1}(a_n-a_s)}  \oint_{{\mathcal C}(S_\ell)} \frac{dt}{2{\rm i} \pi} 
\frac{1}{\displaystyle{\prod\limits_{s=0}^{n}(t-a_s)}} 
\frac{1}{n!} \prod_{s=0}^{n-1}(t+n-\ell-s)\ ,
\end{eqnarray*}
where the contour integral picks up the residues at \emph{all the poles for which the binomial coefficient is 
well-defined and non-zero}, namely at all the points $a_s$ such that $a_s\geq \ell$. The Theorem follows from the identity \eqref{oneptfromU},
by normalizing by $U_{n,n}=\frac{1}{n!}\prod\limits_{s=0}^{n-1}(a_n-a_s)$,
as given by \eqref{eq:Uii}, and changing $s$ into $n-s$ in the last product.

\end{proof}
\begin{remark}
\label{remcontour}
Note that the contour ${\mathcal C}(S_\ell)$ in \eqref{exactH} may be extended into ${\mathcal C}(S_{\ell-n})$ i.e.\ encircle also those $a_s$ between $\ell-n$ and $\ell-1$
since $\prod_{s=1}^n (t-\ell+s)$ vanishes for all integers $t$ in this range.
\end{remark}

Finally, the single path partition function from the exit point $(\ell,n+1)$ to the remote endpoint $(0,n+r)$ is simply
\begin{equation}\label{exactY}Y_{\ell,r}= {\ell+r-1\choose \ell}\ . \end{equation}

\subsection{Asymptotic analysis and arctic curve I}
\label{asymptoone}

We now study the large $n$ asymptotics of the identity \eqref{decompHY} for our model. To this end, let us introduce rescaled
variables
\begin{equation}
 \ell=n\, \xi,\quad r=n\, z, \quad a_i= n \, \al\left(\frac{i}{n}\right)\ ,
 \label{eq:rescaling}
\end{equation}
where $u\mapsto \alpha(u)$ is a fixed piecewise differentiable increasing function from $[0,1]\to \R_+$ encoding the fixed limiting
endpoint distribution. Note that moreover $\al'(u)\geq 1$ whenever the derivative of $\al$ is well-defined due to the 
condition $a_{i+1}-a_i\geq 1$.
\medskip
The main result of this section may be summarized into the following theorem.
\begin{thm}\label{arcticonethm}
The portion of arctic curve obtained with the tangent method for the path setup in which the target endpoint is moved away from $D$ in the northwest corner
and the escape point  is on the top boundary of $D$
has the following parametric representation:
\begin{equation}\label{arcticone}
\left\{ \begin{matrix}
X=X(t):= & t- \frac{\displaystyle{x(t)(1-x(t))}}{\displaystyle{x'(t)}} \hfill\\
& \\
Y=Y(t):= &\frac{\displaystyle{(1-x(t))^2}}{\displaystyle{x'(t)}}\hfill
\end{matrix}\right. \qquad (t\in [\al(1),+\infty))\ ,
\end{equation}
where the quantity $x(t)$ is defined as:
\begin{equation}\label{defx} x(t):= e^{\textstyle{-\int_0^1 \frac{du}{t-\al(u)}}}\ . \end{equation}
\end{thm}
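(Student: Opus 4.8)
The plan is to evaluate both sides of the decomposition \eqref{decompHY} by a two-variable steepest-descent analysis, treating the sum over exit points $\ell$ and the contour integral \eqref{exactH} for $H_{n,\ell}$ on the same footing, and then to recover the arctic curve as the envelope of the resulting one-parameter family of tangent lines. First I would insert the rescaling \eqref{eq:rescaling} together with $t=n\tau$ into \eqref{exactH}. The powers of $n$ produced by the two products and by $dt$ cancel, leaving an integrand of the form $e^{nS_H(\tau,\xi)}$, where replacing the sums $\sum_s\log(\cdot)$ by Riemann integrals gives
\[
S_H(\tau,\xi)=-\int_0^1\log(\tau-\al(u))\,du+\int_0^1\log(\tau-\xi+v)\,dv\ .
\]
Applying Stirling's formula to $Y_{\ell,r}=\binom{\ell+r-1}{\ell}$ from \eqref{exactY} produces $Y_{\ell,r}\approx e^{nS_Y(\xi)}$ with
\[
S_Y(\xi)=(\xi+z)\log(\xi+z)-\xi\log\xi-z\log z\ ,
\]
the $\log n$ contributions cancelling since the arguments sum correctly. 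Turning $\sum_\ell$ into $\int d\xi$, the right-hand side of \eqref{decompHY} becomes a double integral over $(\tau,\xi)$ governed by the total action $S_H+S_Y$, to be evaluated at its dominant critical point.

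The core computation is to solve the two saddle equations. The $\partial_\tau$ equation reads
\[
-\int_0^1\frac{du}{\tau-\al(u)}+\log\frac{\tau-\xi+1}{\tau-\xi}=0\ ,
\]
in which the integral is exactly $-\log x(\tau)$ by the definition \eqref{defx}; it reduces to $x(\tau)=(\tau-\xi)/(\tau-\xi+1)$, i.e.\ $\xi=\tau-x/(1-x)$ with $x=x(\tau)$. The $\partial_\xi$ equation $-\log\frac{\tau-\xi+1}{\tau-\xi}+\log\frac{\xi+z}{\xi}=0$ then gives $(\xi+z)/\xi=1/x$, hence $z=\xi(1-x)/x$. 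Writing $t:=\tau$ and using these relations, both the exit abscissa $\xi$ and the displacement $z$ become explicit functions of the single saddle parameter $t$, with $1+z=(1-x)t/x$ and $z/\xi=(1-x)/x$.

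Finally, in the scaled plane the escaping path runs along the straight line through the exit point $(\xi,1)$ and the moved endpoint $(0,1+z)$; substituting the saddle values shows that this family of tangents takes the compact form $Y=\frac{1-x(t)}{x(t)}(t-X)$. I would then compute the envelope by imposing $F=0$ and $\partial_tF=0$ on $F(X,Y,t):=\frac{1-x}{x}(t-X)-Y$: the derivative equation yields $t-X=x(1-x)/x'$, which is exactly $X(t)$ in \eqref{arcticone}, and back-substitution into $F=0$ gives $Y(t)=(1-x)^2/x'$, as claimed. The admissible range $t\in[\al(1),+\infty)$ is fixed by the requirement that the saddle be well-defined, namely $t-\al(u)>0$ for all $u\in[0,1]$, which since $\al$ is increasing amounts to $t\geq\al(1)$.

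The step I expect to be the genuine obstacle is the justification of the saddle-point evaluation itself: interchanging the sum and the contour integral, deforming the contour ${\mathcal C}(S_\ell)$ so that it crosses the critical point along a steepest-descent direction, and verifying that the coupled $(\tau,\xi)$ critical point is the dominant contribution rather than a subdominant saddle or a boundary/endpoint term, together with checking that the saddle value of $\xi$ lies in the admissible window. Since the tangent method is avowedly heuristic, once the dominant saddle is granted, the remaining work — the Riemann-sum and Stirling asymptotics and the envelope algebra — is routine bookkeeping.
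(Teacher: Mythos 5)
Your route is essentially the paper's: the same rescaled actions (your $S_H$, $S_Y$ are the paper's $S_0$, $S_1$ in \eqref{Szero}), the same two saddle equations with the same solution $\xi(t)=t-\frac{x(t)}{1-x(t)}$, $z(t)=t\frac{1-x(t)}{x(t)}-1$, the same tangent family $x(t)\,Y+(1-x(t))(X-t)=0$, and the same envelope algebra leading to \eqref{arcticone}. All of that is correct and matches the paper's computation step for step.

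The genuine gap is your justification of the parameter range $t\in[\al(1),+\infty)$, which is part of the statement being proved. You argue that the saddle is "well-defined" only if $t-\al(u)>0$ for all $u$, but this requirement has no basis: the function $x(t)$ of \eqref{defx} is equally well-defined, real and positive for $t\le 0$ (where $t-\al(u)<0$ for all $u$), and the saddle equations are formally solvable there too. Indeed, the complementary range $t\in(-\infty,0]$ yields \emph{another} portion of the arctic curve (Theorem \ref{arctictwothm}), obtained from a different tangent-method geometry; a proof of the present theorem must explain why \emph{this} setup selects $[\al(1),+\infty)$ and not $(-\infty,0]$, and positivity of $t-\al(u)$ cannot do that. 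The paper's argument is the one you flagged as "the genuine obstacle" but did not supply: reality of $\xi(t)$ and $z(t)$ forces $t$ real; the identity $(\xi(t)-t)(t-(\xi(t)-1))=-x(t)/(1-x(t))^2<0$ excludes $t\in[\xi-1,\xi]$; and the contour constraint — ${\mathcal C}(S_\ell)$, extended via Remark \ref{remcontour}, can cross the real axis only in $[\xi-1,\xi]$ on the left of the pole segment and in $[\al(1),+\infty)$ on its right — then leaves only $t\in[\al(1),+\infty)$ as a location through which the steepest-descent contour can pass. (An alternative consistency check reaching the same conclusion: on the branch $t\le 0$ one has $x(t)>1$, whence $z(t)=t\frac{1-x(t)}{x(t)}-1<0$, an inadmissible displacement of the endpoint in this setup.) Without an argument of this kind, the theorem's range assignment — and hence the identification of which portion of the curve this setup produces — is unproved.
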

\noindent Here $X$ and $Y$ denote rescaled coordinates in the plane, as obtained by {\emph{after rescaling all coordinates by $n$} so that $D$ becomes a rectangle $\mathcal{D}$ of vertical size $1$ and horizontal size $\alpha(1)$. 
\begin{proof}
The exact formulas \eqref{exactH}-\eqref{exactY} lead to the following leading asymptotic behaviors:
\begin{eqnarray} && \qquad H_{n,n\xi} \sim \oint  \frac{dt}{2{\rm i} \pi} e^{nS_0(t,\xi)}\ ,\qquad Y_{n\xi,n z} \sim e^{nS_1(\xi,z)}\ ,\nonumber  \\
&&\ \ S_0(t,\xi) =\int_0^1 du\, {\rm Log}\left(\frac{t+u-\xi}{t-\al(u)}\right)\label{Szero}\\
&& \quad \quad \quad  =-1+(t+1-\xi){\rm Log}(t+1-\xi)-(t-\xi){\rm Log}(t-\xi)-\int_0^1 du\, {\rm Log}(t-\al(u))\ ,\nonumber  \\
&&\ \  S_1(\xi,z)= (\xi+z){\rm Log}(\xi+z) -\xi{\rm Log}(\xi)-z{\rm Log}(z)\ .\nonumber  
\end{eqnarray}
Note that we performed a harmless rescaling of the integration variable $t\to n t$. In this new variable, the integration contour (originally
${\mathcal C}(S_\ell)$), must encircle the segment $[\xi,\alpha(1)]$. On the left side of this segment, we note, using remark \ref{remcontour}, 
that the contour may cross the real axis anywhere between $\xi-1$ and $\xi$. On the right side, it may cross the real axis at any position $t\in [\al(1),+\infty)$.
At large $n$, the contour integral is evaluated by a simple saddle-point estimate, i.e.\  picking $t$ such that  $\partial_t S_0=0$. Note that
it is important that the saddle-point solution is compatible with the contour constraint. As it will appear, the corresponding value of $t$ is real
and must lie in $[\al(1),+\infty)$.  

The most likely rescaled exit position $\xi$ must maximize the total action\footnote{Indeed, at the saddle-point 
$t=t^*(\xi)$, we have $\frac{d}{d\xi} S(t^*(\xi),\xi,z)=\partial_\xi  S(t^*(\xi),\xi,z)$.} $S(t,\xi,z)=S_0(t,\xi)+S_1(\xi,z)$. 
Writing $\partial_t S_0=\partial_\xi S=0$, we find:
\begin{equation*}
\frac{t+1-\xi}{t-\xi}\ e^{\textstyle{-\int_0^1 \frac{du}{t-\al(u)} }}=1\quad {\rm and}\quad 
\frac{(\xi+z)(t-\xi)}{\xi(t+1-\xi)} =1\ .
\end{equation*}
In terms of the quantity $x(t)$ of \eqref{defx}, this leads to the solution:
\begin{equation*}
\xi=\xi(t):= t- \frac{x(t)}{1-x(t)} \quad {\rm and} \quad z=z(t):=t\,\frac{1-x(t)}{x(t)}-1\ .
\end{equation*}
Clearly, we want $\xi(t)$ and $z(t)$ real, which implies $t$ real. Moreover, we have $(\xi(t)-t)(t-(\xi(t)-1))=-x(t)/(1-x(t))^2 <0$ as $x(t)>0$,
which means that $t$ cannot lie in the interval $[\xi-1,\xi]$. This leaves us with the range $t\in [\al(1),+\infty)$: the result above
is only valid if $t$ lies in this range. Letting $t$ vary from $\alpha(1)$ to $+\infty$ corresponds in turn to letting $x(t)$
increase from $0$ to $1$.

The (tangent) line passing through the rescaled escape point $(\xi(t),1)$ and the rescaled moved endpoint $(0,1+z(t))$ 
is defined by the equation $\xi(t)\, Y+z(t)\,X=\xi(t)(1+z(t))$,
or equivalently
\begin{equation}
x(t)\, Y+(1-x(t))\,(X-t)=0\ .
\label{eq:tgeq}
\end{equation} 
In particular, this allows us to interpret the parameter $t$ as the \emph{intercept of the tangent line with the $X$-axis}. 
The range $t\in [\al(1),+\infty)$
corresponds to negative slopes $-(1-x(t))/x(t)$.
The envelope of this parametric family of lines is obtained by solving the system
\begin{equation*}
\begin{split}
& x(t)\, Y+(1-x(t))\,(X-t)=0\\ 
&x'(t)\, Y-x'(t)\,(X-t)-1+x(t)=0\\ 
\end{split} 
\end{equation*}
and leads immediately to \eqref{arcticone}.
\end{proof}

Let us stress again that, due to the setup that we have used for applying the tangent method, namely that we decided to move the topmost endpoint
$E_n=(0,n)$ to $E'_n=(0,n+r)$, the Theorem \ref{arcticonethm} above 
provides us only with a {\it portion}  of the arctic curve. Other portions will be studied below. Let us examine the limiting points of 
the current portion: in the limit $t\to\infty$ ($x(t)\to 1$), we have the expansion
$$x(t)= 1-\frac{1}{t}+\frac{1}{t^2}\left(\frac{1}{2}-\int_0^1 \al(u) du\right)+O\left(\frac{1}{t^3}\right)\ , $$
hence the limiting point on the arctic curve has coordinates $(X_1,Y_1)$ with
\begin{equation}
 X_1=\frac{1}{2}+\int_0^1 \al(u)du,\qquad Y_1=1
 \label{eq:X1}
 \end{equation}
and corresponds to a horizontal tangent. Note that, from the conditions $\al(0)=0$ and $\al'(u)\geq 1$ for all $u$, we deduce 
the bounds $1\leq X_1 \leq \alpha(1)$.
At the other end when $t\to \al(1)$ ($x(t)\to 0$), writing $t=\al(1)+\theta$ for small $\theta$ leads to the estimate:
\begin{eqnarray*}{\rm Log}(x(t))&=&-\int_0^1 \frac{du}{\theta+(1-u)\al'(1)} 
-\int_0^1 du\left\{\frac{1}{\theta+\al(1)-\al(u)}-\frac{1}{\theta+(1-u)\al'(1)} \right\}\\
&=& \frac{1}{\al'(1)}{\rm Log}\left(\frac{\theta}{\al'(1)}\right)-\int_0^1 du\left\{\frac{1}{\al(1)-\al(u)}-\frac{1}{(1-u)\al'(1)} \right\}+O(\theta)\ ,
\end{eqnarray*}
where the subtraction term was devised so that the integral in the second line is finite.
We deduce from \eqref{arcticone} that, since $x(t)/x'(t)\sim \al'(1)\, \theta$,  $X\to \al(1)$ whereas 
$$Y=\frac{(1-x(t))^2}{x'(t)}\simeq \frac{\theta^{1-\frac{1}{\al'(1)}} }{\al'(1)^{1+\frac{1}{\al'(1)}}} 
e^{\textstyle{\int_0^1 du\left\{\frac{1}{\al(1)-\al(u)}-\frac{1}{(1-u)\al'(1)} \right\}}}\ .$$
We see that if $\al'(1)>1$ then $Y\to 0$, and the endpoint of the arctic curve has coordinates $(X_0,Y_0)=(\al(1),0)$ with a vertical tangent.
On the other hand, if $\al'(1)=1$, then $Y$ has a finite limit, and the endpoint is:
\begin{equation}
\label{eq:X0Y0}
X_0=\al(1),\quad Y_0=e^{\textstyle{\int_0^1 du\left\{\frac{1}{\al(1)-\al(u)}-\frac{1}{(1-u)} \right\}}}
\end{equation}
with a vertical tangent. The case where $\al'(u)=1$ on a finite interval $[1-\gamma,1]$ will be treated in Section \ref{freezesec} below. 

The above discussion assumed implicitly 
that $\al'(1)$ is finite. For $\al'(1)=+\infty$, we must consider the two integrals $I_1=\int_0^1 \frac{du}{\al(1)-\al(u)}$
and $I_2=\int_0^1 \frac{du}{(\al(1)-\al(u))^2}$.
Assuming the behavior $\al(1)-\al(u)\sim C (1-u)^a$ for $0<a<1$, we see that both $I_1$ and $I_2$ are finite for $a<\frac{1}{2}$, while $I_1$ is finite positive and $I_2$ diverges for $a\geq \frac{1}{2}$. 
When both $I_1$ and $I_2$ are finite,
we have $\lim_{t\to \al(1)} x(t)=e^{-I_1}<1$ and $\lim_{t\to \al(1)} \frac{x'(t)}{x(t)}=I_2>0$. This leads to the endpoint
$$X_0=\al(1)-\frac{1-e^{-I_1}}{I_2},\quad Y_0=\frac{(1-e^{-I_1})^2}{I_2\, e^{-I_1}}\ ,$$
with a tangent of negative slope $\lim_{t\to \al(1)} (x(t)-1)/x(t)=1-e^{I_1}$ so that the arctic curve is tangent to the line connecting $(X_0,Y_0)$ to
$(\al(1),0)$.
When $I_1$ is finite and $I_2$ diverges, this leads as before to an endpoint $(X_0,Y_0)=(\al(1),0)$ 
but with now a finite negative slope $1-e^{I_1}$.

\begin{example}
To illustrate our result, we display in Figure \ref{fig:tangentshalfp3} the portion of arctic curve given by \eqref{arcticone} in the particular case $\al(u)=3u$
together with some set of tangents enveloping this curve. In this case $x(t)=\left(\frac{t-3}{t}\right)^{1/3}$ from \eqref{defx}, $(X_0,Y_0)=(3,0)$ and
$(X_1,Y_1)=(2,1)$.

\begin{figure}
\begin{center}
\includegraphics[width=10.5cm]{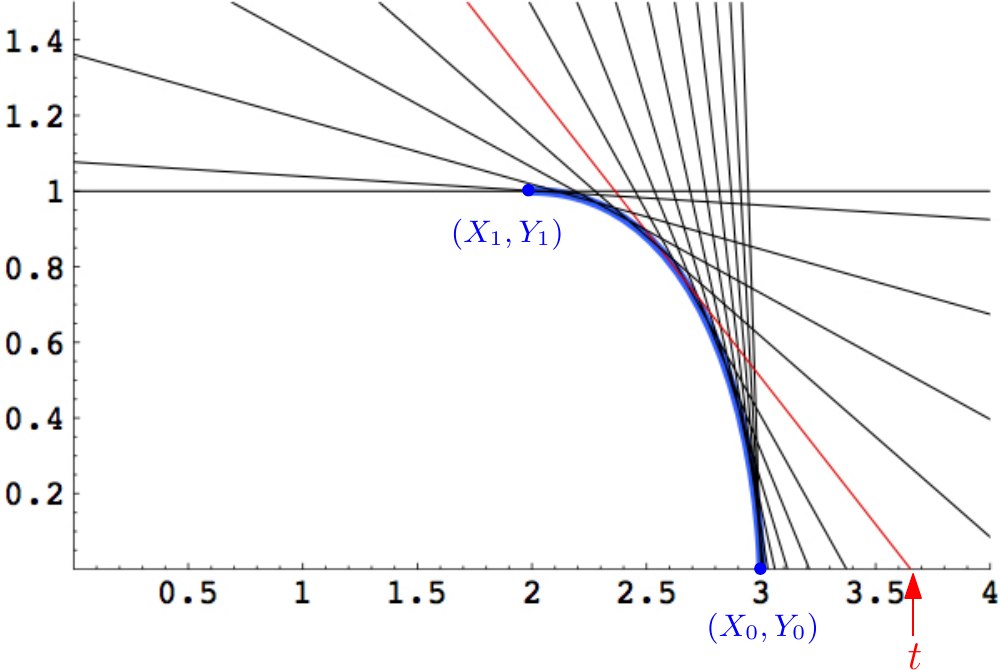}
\end{center}
\caption{The portion of arctic curve given by \eqref{arcticone} in the particular case $\al(u)=3u$ (thick solid line), with extremities $(X_0,Y_0)=(3,0)$ and
$(X_1,Y_1)=(2,1)$. We also displayed a set of tangents enveloping this curve, as given by \eqref{eq:tgeq} for values of $t$ in the range $[\al(1),+\infty)$ (here $\al(1)=3$). The parameter $t$ corresponds to the abscissa of the intersection point of the tangent with the $X$-axis.}
\label{fig:tangentshalfp3}
\end{figure}
\end{example}

\section{The second piece of the puzzle}
\label{puzzle2}
As we just mentioned, Theorem \ref{arcticonethm} 
solves only one part of the puzzle by providing only a portion of the arctic curve, corresponding to an $X$-coordinate larger than
$X_1$, as given by \eqref{eq:X1}. Let us now derive a second portion of the arctic curve, corresponding
to $X$-coordinates smaller than $X_1$. This is done by repeating the tangent method analysis, now applied to the 
second family of NILP, made of northeast- and east-oriented elementary steps.

\subsection{A simple reflection principle}
\label{reflection}

\begin{figure}
\begin{center}
\includegraphics[width=8.5cm]{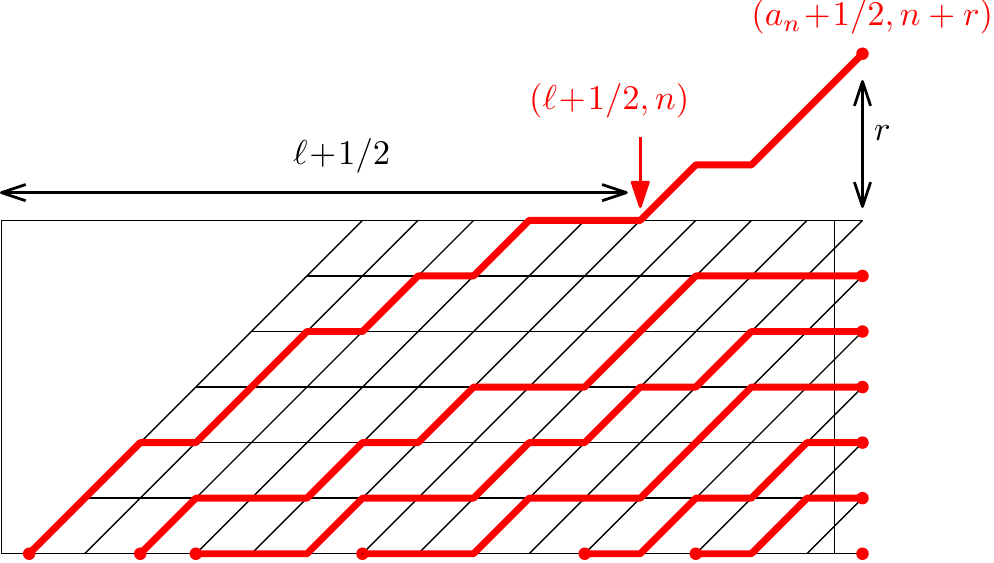}
\end{center}
\caption{The tangent method applied to NILP made of east- and northeast-oriented steps: the endpoint of the outermost path is moved from $\tilde{E}_n=(a_n+1/2,n)$ to $E'_n=(a_n+1/2,n+r)$ 
with $r\in \Z_+$, forcing the path to escape from the domain $D$ (displayed rectangle) by a northeast-oriented step at some position $(\ell+1/2,n)$ on the boundary of $D$.}
\label{fig:Hnltilde}
\end{figure}

Let us consider the equivalent formulation of our problem in terms of the second family of paths. 
These paths, made of northeast- and east-oriented elementary steps, connect
starting points $\tilde{O}_i$ of coordinates $(a_n+1/2-\tilde{a}_i,0)$, with $\tilde{a}_i$ as in \eqref{eq:tildeaidef}, to endpoints
$\tilde{E}_i$ of coordinates $(a_n+1/2,i)$,  for $i=0,\dots,n$. 
We may again apply the tangent method and compute the one-point function
$\tilde{H}_{n,\ell}$ corresponding to an outermost path starting from $\tilde{O}_n=(1/2,0)$ and escaping at the position $\tilde{E}=(\ell+1/2,n)$ from the rectangular domain $D$ along a northeast-oriented diagonal step $(\ell+1/2,n)\to (\ell+3/2,n+1)$ pointing out of $D$ (see Figure \ref{fig:Hnltilde}). 
Note that, since elementary steps are northeast- or east-oriented, the smallest possible $X$-coordinate for the escape point is $(n+1/2)$ hence we have 
now the condition $\ell\geq n$.
The escape path is then eventually extended to a new endpoint, say $\tilde{E}_n'=(a_n+1/2,n+r)$, $r\in \Z_+$, corresponding to moving the original
endpoint $\tilde{E}_n$ by $r$ elementary steps to the north.
The single path partition function from the exit point $(\ell+3/2,n+1)$ to the remote endpoint $(a_n+1/2,n+r)$ is simply
\begin{equation}\label{exacttildeY}\tilde{Y}_{\ell,r}= {a_n-\ell-1\choose r-1}\ . \end{equation}

\begin{figure}
\begin{center}
\includegraphics[width=10cm]{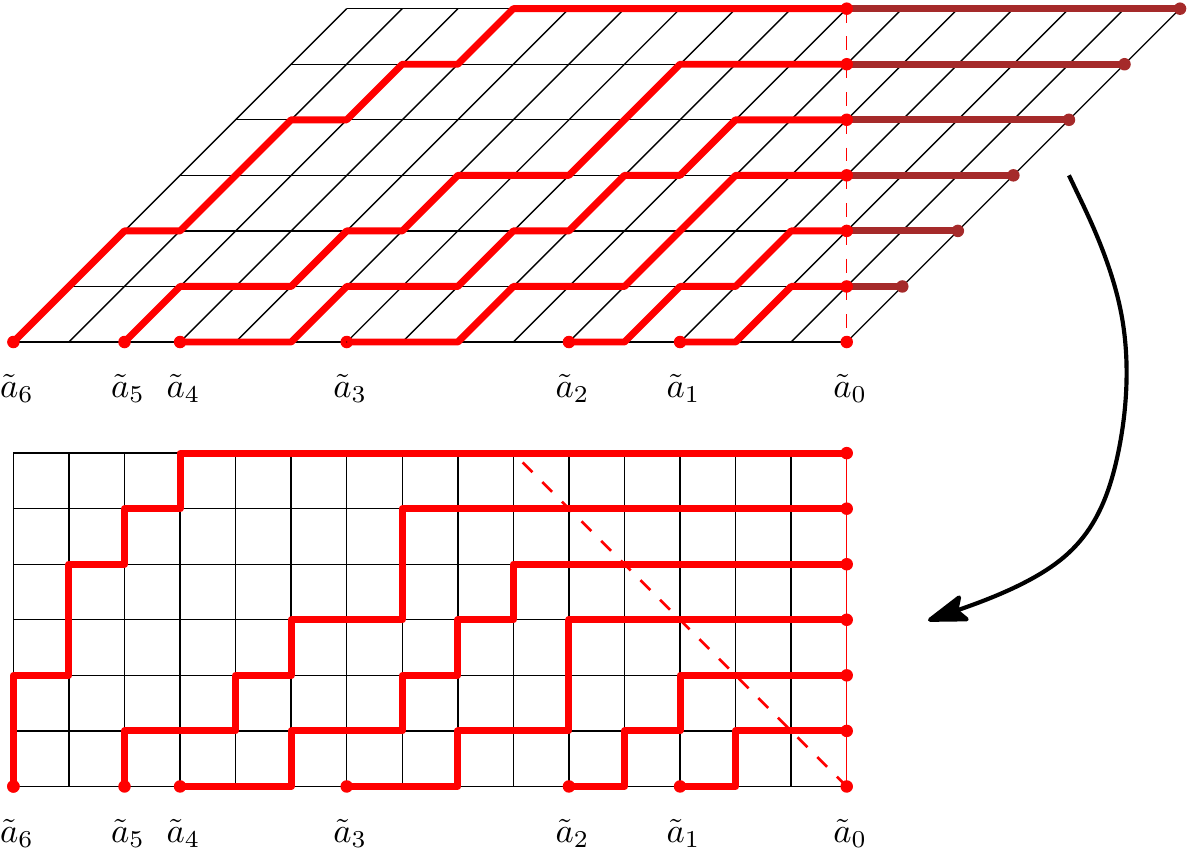}
\end{center}
\caption{For a NILP configuration with paths made of east- and northeast-oriented steps,  moving the endpoints $\tilde{E}_i$ from position 
$(a_n+1/2,i)$ to position $(a_n+1/2+i,i)$ does not modify the enumeration problem since the added portions of path are frozen into horizontal segments. A simple 
shear transforms this extended NILP into a NILP made of north- and east-oriented steps which, upon reflection is of the same type as that of Figure \ref{fig:path1} up to the change of sequence from $(a_i)_{0\leq i\leq n}\to (\tilde{a}_i)_{0\leq i\leq n}$. }
\label{fig:reflection}
\end{figure}

As for the new one-point function, we have the following theorem:

\begin{thm}
The one-point function $\tilde{H}_{n,\ell}$ ($\ell\geq n$) reads:
\begin{equation}\label{exacttildeH}
\tilde{H}_{n,\ell}=-
\oint_{{\mathcal C}(S_0\setminus S_{\ell-n+1})}  \frac{dt}{2{\rm i} \pi} \prod_{s=0}^n \frac{1}{(a_s-t)} \prod_{s=0}^{n-1} (\ell-t-s)\ ,
\end{equation}
where $S_0\setminus S_{\ell-n+1}=\{a_s\, \vert a_s\leq \ell-n\}$.
\end{thm}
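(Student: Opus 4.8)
The plan is to reuse, rather than rederive, the Lindström--Gessel--Viennot analysis of the first family by invoking the reflection principle of Section~\ref{reflection}. That principle identifies the present northeast/east family, built on the reflected sequence $(\tilde a_i)$, with the original west/north family for the data $(\tilde a_i)$: a shear followed by a horizontal reflection carries one configuration to the other and, in particular, carries the outermost path from $\tilde O_n=(1/2,0)$ to its rerouted endpoint onto the outermost path of the first family. Consequently the boundary one-point function $\tilde H_{n,\ell}$ should coincide with the already established expression \eqref{exactH}, but evaluated for the sequence $(\tilde a_i)$ and at the \emph{reflected} escape abscissa $\ell'$. I would fix $\ell'$ by transporting the escape point: the shear sends $(\ell+1/2,n)$ to $(\ell+1/2-n,n)$ and the reflection then to $(a_n+n-\ell,n)$, so that $\ell'=a_n+n-\ell$. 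As a cross-check this value is exactly the one for which the two modified last columns agree, since a northeast/east path from $\tilde O_i$ to $(\ell+1/2,n)$ is counted by $\binom{\ell-a_n+\tilde a_i}{n}$ while the first-family last column for $(\tilde a_i)$ and escape $\ell'$ is $\binom{\tilde a_i+n-\ell'}{n}$.

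Granting this identification, \eqref{exactH} applied to $(\tilde a_i)$ reads
\begin{equation*}
\tilde H_{n,\ell}=\oint_{{\mathcal C}(\tilde S_{\ell'})}\frac{dt}{2{\rm i}\pi}\,\frac{\prod_{s=1}^n(t-\ell'+s)}{\prod_{s=0}^n(t-\tilde a_s)},\qquad \tilde S_{\ell'}=\{\tilde a_s\mid \tilde a_s\geq \ell'\},\ \ \ell'=a_n+n-\ell.
\end{equation*}
The decisive step is then the change of variables $t=a_n-\tau$, which is precisely the reflection that undoes $\tilde a_s=a_n-a_{n-s}$. Under it the factor $t-\tilde a_s$ becomes $a_{n-s}-\tau$, so after reindexing $s\mapsto n-s$ the denominator turns into $\prod_{s=0}^n(a_s-\tau)$; the numerator $t-\ell'+s=\ell-n+s-\tau$ ranges, for $s=1,\dots,n$, over $\ell-\tau-s$ with $s=0,\dots,n-1$, giving $\prod_{s=0}^{n-1}(\ell-\tau-s)$; and the pole selection $\tilde a_s\geq\ell'$ becomes $a_{n-s}\leq \ell-n$, i.e.\ the contour now encircles $\{a_s\mid a_s\leq\ell-n\}=S_0\setminus S_{\ell-n+1}$. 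Because $t\mapsto a_n-t$ reverses orientation, each residue flips sign ($\mathrm{Res}_{t=t_0}g=-\mathrm{Res}_{\tau=\tau_0}g(a_n-\tau)$ when $t_0=a_n-\tau_0$), and this is the origin of the overall minus in \eqref{exacttildeH}. Renaming $\tau\to t$ and collecting the residues back into a contour integral then produces exactly \eqref{exacttildeH}.

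As a safeguard I would keep ready the purely algebraic route, which needs no geometric bijection: here the LGV matrix is $\tilde A_{i,j}=\binom{\tilde a_i}{j}$, which has the same structure as the first family's $\binom{a_i+j}{j}$ once $\frac{1}{j!}\prod_{s=0}^{j-1}(t+j-s)$ is replaced by $\frac{1}{j!}\prod_{s=0}^{j-1}(t-s)$ and $a\to\tilde a$. The lower uni-triangular matrix \eqref{eq:Lmat} built from $(\tilde a_i)$ again triangularizes it, and computing $\tilde U'_{n,n}/\tilde U_{n,n}$ with the modified last column $\binom{\ell-a_n+\tilde a_i}{n}$ reproduces the displayed integral before any change of variables. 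I expect the main obstacle to lie not in the algebra but at the geometry/formula interface: convincingly justifying that the escape point maps to $\ell'=a_n+n-\ell$ for the \emph{full} non-intersecting configuration (the ``extend by frozen horizontal segments'' picture used for the partition function is delicate when a single path is rerouted), and then transporting the reflection cleanly through both the integrand and the integration contour so that the resulting domain is $S_0\setminus S_{\ell-n+1}$ with the correct sign. Should the geometric identification of $\ell'$ resist a clean argument, the LU computation supplies the same formula unconditionally.
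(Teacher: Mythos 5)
Your proposal is correct and takes essentially the same route as the paper's own proof: the reflection principle (shear plus horizontal reflection, with the frozen horizontal extensions of the endpoints, valid also with an escape point) identifies $\tilde H_{n,\ell}$ with the first-family one-point function for the sequence $(\tilde a_i)$ at the reflected abscissa $\tilde\ell=a_n+n-\ell$, after which the change of variable $t\mapsto a_n-t$ with the reindexing $s\to n-s$ --- including the orientation-reversal minus sign and the transport of the pole set to $S_0\setminus S_{\ell-n+1}$ --- yields \eqref{exacttildeH} exactly as in the paper. Your binomial cross-check of the escape abscissa and the backup LU computation are sensible extra safeguards, but the core argument coincides with the paper's.
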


\begin{proof}
Let us show how to derive the expression of $\tilde{H}_{n,\ell}$ directly from our previous result for $H_{n,\ell}$ 
via a simple \emph{reflection principle}.
As displayed in Figure \ref{fig:reflection}, the endpoints $\tilde{E}_i$ of coordinates $(a_n+1/2,i)$ for the second family of paths can be moved 
toward east to position $(a_n+1/2+i,i)$ without changing the path enumeration problem. Indeed, the constraint of non-intersection
of the paths forces the path extensions to form straight horizontal segments. The obtained configuration may then be transformed
into a set of north- and east-oriented NILP on a square grid by the simple (shear) mapping $(X,Y)\mapsto (X-Y,Y)$ (see  Figure \ref{fig:reflection}).
Up to a reflection $(X,Y)\to (1/2+a_n-X,Y)$, we immediately recognize the setting of our first set of NILP (made of north- and west-oriented
elementary steps), where the strictly increasing sequence $(a_i)_{0\leq i\leq n}$ is simply replaced by the strictly increasing sequence 
$(\tilde{a}_i)_{0\leq i\leq n}$. This identification holds also in the presence of some escape point for the uppermost path. If
this point has coordinates $(\ell+1/2,n)$ as in Figure \ref{fig:Hnltilde}, its $X$-coordinate is transformed by the two successive mappings above 
(shear and reflection) and takes the
value $\tilde{\ell}=a_n-\ell+n$. We may therefore transpose the expression \eqref{exactH} for $H_{n,\ell}$ and write directly, without new calculation,
 \begin{equation*}
\tilde{H}_{n,\ell}=\oint_{{\mathcal C}(\tilde{S}_{\tilde{\ell}})}  \frac{dt}{2{\rm i} \pi} \prod_{s=0}^n \frac{1}{(t- \tilde{a}_s)} \prod_{s=1}^n (t-\tilde{\ell}+s)
=\oint_{{\mathcal C}(\tilde{S}_{\tilde{\ell}})}  \frac{dt}{2{\rm i} \pi} \prod_{s=0}^n \frac{1}{(t\!-\!a_n\!+\!a_{n-s})} \prod_{s=1}^n (t-a_n+\ell-n+s)\ ,
\end{equation*}
where $\tilde{S}_{\tilde\ell}=\{\tilde{a}_s\, \vert \tilde{a}_s\geq \tilde{\ell}\}$.
Performing the change of variable $t\mapsto a_n-t$ (and changing $s\to n-s$ in both products), we immediately obtain \eqref{exacttildeH}. 
Indeed, after changing variable, the contour explored by the (new) $t$ variable must encircle the $a_n-\tilde{a}_{s}$ such that 
$\tilde{a}_{s}\geq \tilde{\ell}$ hence, using $\tilde{\ell}=n+a_n-\ell$ and $\tilde a_s=a_n-a_{n-s}$ (and changing the dummy variable $s$ into $n-s$), the $a_{s}$ with $a_{s} \leq \ell-n$. This latter set $\{a_s\, \vert a_s\leq \ell-n\}$ is nothing but $S_0\setminus S_{\ell-n+1}$.
\end{proof}
As before, we have the following remark:
\begin{remark}
\label{remcontourtilde}
The contour ${\mathcal C}(S_0\setminus S_{\ell-n+1})$ in \eqref{exacttildeH} may be  extended to ${\mathcal C}(S_0\setminus S_{\ell+1})$ i.e.\ encircle only those $a_s$ between $0$ and $\ell$. Indeed, the integrand in \eqref{exacttildeH} vanishes for all integers $t$ between 
$\ell-n+1$ and $\ell$. 
\end{remark}

\subsection{A combinatorial sum rule}
\label{combHHtile}
Before we discuss the asymptotics of $\tilde{H}_{n,\ell}$ and the associated tangent method result, let us make some comment on the close relation between the one-point functions $\tilde{H}_{n,\ell}$ and $H_{n,\ell}$.
From their expressions \eqref{exactH} an \eqref{exacttildeH}, we deduce the equality, for $\ell \geq n+1$,
\begin{equation*}
\begin{split}
H_{n,\ell}+\tilde H_{n,\ell-1}&=\oint_{{\mathcal C}(S_\ell)}  \frac{dt}{2{\rm i} \pi} \prod_{s=0}^n \frac{1}{(t\!-\!a_s)} \prod_{s=1}^n (t\!-\!\ell\!+\!s)
-\oint_{{\mathcal C}(S_0\setminus S_{\ell})}  \frac{dt}{2{\rm i} \pi} \prod_{s=0}^n \frac{1}{(a_s\!-\!t)} \prod_{s=0}^{n-1} (\ell\!-\!1\!-\!t\!-\!s)\\
&=\oint_{{\mathcal C}(S_0)}  \frac{dt}{2{\rm i} \pi} \prod_{s=0}^n \frac{1}{(t- a_s)} \prod_{s=1}^n (t\!-\!\ell\!+\!s)\\
\end{split}
\end{equation*}
where, using Remark \ref{remcontourtilde}, we extended the contour for $\tilde{H}_{n,\ell-1}$ from ${\mathcal C}(S_0\setminus S_{\ell-n})$ to 
${\mathcal C}(S_0\setminus S_{\ell})$. The final contour
 ${\mathcal C}(S_0)$ encircles all the $a_s$, $s=0,\dots,n$, hence all the (finite) poles of the integrand. The integral may thus be computed as minus the residue at infinity. At large $t$, the integrand behaves as $1/t$, hence the residue is $-1$, leading to the sum rule
\begin{equation}
H_{n,\ell}+\tilde H_{n,\ell-1}=1\ .
\label{eq:sumrule}
\end{equation}
\begin{figure}
\begin{center}
\includegraphics[width=8cm]{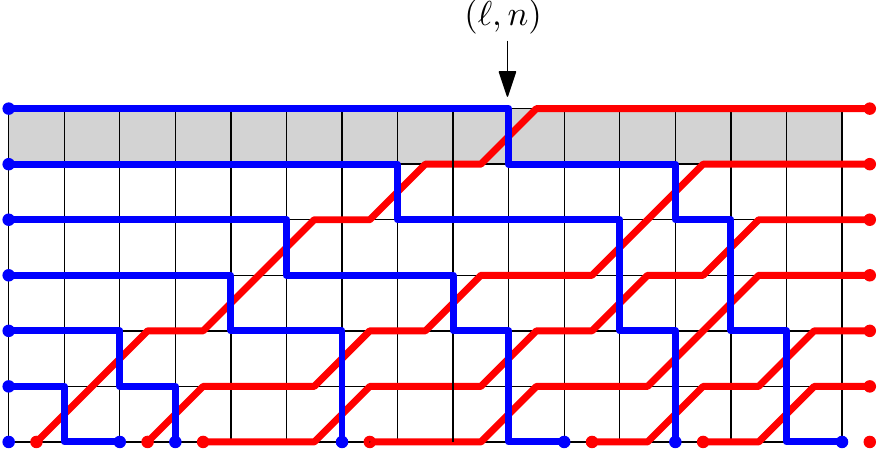}
\end{center}
\caption{A configuration of NILP in $D$. In the language of paths made of north- and west-oriented steps, the configuration 
has a unique vertical step in the uppermost horizontal strip of $D$, which is part of the outermost path and leads to position $(\ell,n)$ on the boundary. In the equivalent description by east- and northeast-oriented step paths, this step is dual to the unique northeast-oriented step in
the uppermost strip, itself part of the outermost path and leading to position $(\ell+1/2,n)$ on the boundary.}
\label{fig:HHtilde}
\end{figure}
This sum rule has a nice combinatorial interpretation, which we explain now. 
In the original setting with north- and west-oriented step paths, the quantity $Z_n\, H_{n,\ell}$ enumerates configurations where the
$n$'th path exits the domain $D$ by a north-step starting at position $(\ell,n)$. Alternatively, $Z_n\, H_{n,\ell}$ may be interpreted
as configurations where the $n$'th path goes from $O_n$ to $E_n$, hence \emph{remains in the domain $D$ but is required to pass via 
the position $(\ell,n)$}. Indeed, once the position $(\ell,n)$ is reached, the path from $(\ell,n)$ to $(0,n)$ is uniquely determined, made of a straight horizontal
segment of length $\ell$. The quantity $Z_n\, H_{n,\ell}-Z_n\, H_{n,\ell+1}$ therefore enumerates NILP in $D$ where the $n$'th path passes 
via  $(\ell,n)$ but not via $(\ell+1,n)$. This path necessarily reaches $(\ell,n)$ by a north step $(\ell,n-1)\to (\ell,n)$, which is moreover the unique 
vertical step in the uppermost horizontal strip of $D$ (i.e.\ the subdomain of $D$ with $Y$-coordinate between $n-1$ and $n$), see Figure \ref{fig:HHtilde}.
Using now the equivalent description by east- and northeast-oriented step paths, the corresponding $n$'th path in this set necessarily 
has a northeast-oriented step from $(\ell-1/2,n-1)$ to $(\ell+1/2,n)$ hence reaches position $(\ell+1/2,n)$ without passing via position 
$(\ell-1/2,n)$. By the same 
argument as above, configurations satisfying this requirement are enumerated by $Z_n\, \tilde{H}_{n,\ell}-Z_n\, \tilde{H}_{n,\ell-1}$. 
Using this bijective correspondence and simplifying by $Z_n$, we deduce the identity
\begin{equation*}
H_{n,\ell}-H_{n,\ell+1} = \tilde{H}_{n,\ell}-\tilde{H}_{n,\ell-1}\quad 
\Leftrightarrow \quad H_{n,\ell+1}+\tilde{H}_{n,\ell}=H_{n,\ell} +\tilde{H}_{n,\ell-1}\ .
\end{equation*}
This equality states that the quantity $H_{n,\ell} +\tilde{H}_{n,\ell-1}$ \emph{does not depend on $\ell$}, and remains valid for $\ell = n$ with the convention that $\tilde{H}_{n,n-1}=0$ since the outermost path in the second path family setting 
cannot pass via the vertex $(n-1,n)$. Note that $H_{n,n}=1$ (since the outermost path in the original path family setting necessarily passes through the
vertex $(n,n)$) so that the actual common value of $H_{n,\ell} +\tilde{H}_{n,\ell-1}$ for all $\ell\geq n$ is $1$. This is precisely 
the sum rule \eqref{eq:sumrule}.

\subsection{Asymptotic analysis and arctic curve II}
\label{asymptotwo}
Applying now the tangent method to the second family of paths, we may complete Theorem \ref{arcticonethm} by the following 
statement:
\begin{thm}\label{arctictwothm}
The portion of arctic curve obtained with the tangent method for the path setup in which the target endpoint is moved away from $D$ in the northeast corner
and the escape point  is on the top boundary of $D$
has the following parametric representation:
\begin{equation}\label{arctictwo}
\left\{ \begin{matrix}
X=X(t):= & t- \frac{\displaystyle{x(t)(1-x(t))}}{\displaystyle{x'(t)}} \hfill\\
& \\
Y=Y(t):= &\frac{\displaystyle{(1-x(t))^2}}{\displaystyle{x'(t)}}\hfill
\end{matrix}\right. \qquad (t\in (-\infty,0])\ ,
\end{equation}
with $x(t)$ as in \ref{arcticonethm}.
\end{thm}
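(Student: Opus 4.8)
The plan is to reproduce verbatim the tangent-method analysis of Section~\ref{asymptoone}, feeding in the exact boundary one-point function $\tilde H_{n,\ell}$ of \eqref{exacttildeH} and the single-path weight $\tilde Y_{\ell,r}$ of \eqref{exacttildeY} in place of $H_{n,\ell}$ and $Y_{\ell,r}$. The geometry is the mirror image of the first computation: the escape point now sits at $(\ell+1/2,n)$ on the top boundary, rescaling to $(\xi,1)$ with $\xi\geq 1$ (since $\ell\geq n$), while the displaced endpoint lies in the \emph{northeast} corner, $\tilde E_n'=(a_n+1/2,n+r)$, rescaling to $(\alpha(1),1+z)$. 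I therefore expect to recover the very same one-parameter family of tangents \eqref{eq:tgeq}, but explored over a complementary range of the intercept $t$.

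First I would apply the rescaling \eqref{eq:rescaling} and extract the leading exponential behaviours $\tilde H_{n,n\xi}\sim\oint \frac{dt}{2{\rm i}\pi}\,e^{n\tilde S_0(t,\xi)}$ and $\tilde Y_{n\xi,nz}\sim e^{n\tilde S_1(\xi,z)}$ from \eqref{exacttildeH} and \eqref{exacttildeY}, the overall sign and prefactors being irrelevant to the saddle location. After the harmless rescaling $t\to nt$ of the integration variable, a short computation gives
\begin{equation*}
\tilde S_0(t,\xi)=\int_0^1 du\,{\rm Log}\frac{\xi-u-t}{\alpha(u)-t}=S_0(t,\xi),
\end{equation*}
\begin{equation*}
\tilde S_1(\xi,z)=(\alpha(1)-\xi){\rm Log}(\alpha(1)-\xi)-z\,{\rm Log}\,z-(\alpha(1)-\xi-z){\rm Log}(\alpha(1)-\xi-z),
\end{equation*}
the latter by Stirling applied to $\binom{a_n-\ell-1}{r-1}$. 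The crucial observation is that $\tilde S_0$ is, up to the branch of the logarithm, the \emph{same} function $S_0$ of \eqref{Szero}: rewriting $\tfrac{\xi-u-t}{\alpha(u)-t}=\tfrac{t+u-\xi}{t-\alpha(u)}$ shows that the saddle equation $\partial_t\tilde S_0=0$ again reads $\tfrac{t+1-\xi}{t-\xi}\,x(t)=1$, with $x(t)$ as in \eqref{defx}, hence reproduces $\xi=\xi(t)=t-\tfrac{x(t)}{1-x(t)}$.

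Next I would extremize the total action $\tilde S=\tilde S_0+\tilde S_1$ over the exit position, solving $\partial_\xi\tilde S=0$. This yields the new relation $\tfrac{\alpha(1)-\xi-z}{\alpha(1)-\xi}=\tfrac{t+1-\xi}{t-\xi}=1/x(t)$, that is $z=z(t)=(\alpha(1)-\xi(t))\,\tfrac{x(t)-1}{x(t)}$, which fixes the most likely displaced endpoint. The tangent is then the line through the rescaled escape point $(\xi(t),1)$ and the rescaled endpoint $(\alpha(1),1+z(t))$; substituting the two relations above, I would check exactly as after \eqref{eq:tgeq} that this line collapses to $x(t)\,Y+(1-x(t))(X-t)=0$, i.e.\ the \emph{same} family \eqref{eq:tgeq}, with $t$ still the $X$-intercept. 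Taking the envelope through the identical $2\times 2$ linear system of Section~\ref{asymptoone} then delivers \eqref{arctictwo}, which is literally \eqref{arcticone}.

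The real content --- and the step I expect to be the main obstacle --- is pinning down the admissible range $t\in(-\infty,0]$. The identity $(\xi(t)-t)(t-(\xi(t)-1))=-x/(1-x)^2<0$ again forbids $t\in[\xi-1,\xi]$, but the correct branch is now dictated by the contour of \eqref{exacttildeH}, which (using Remark~\ref{remcontourtilde}) encircles only the poles with $a_s\le \ell$, hence the $\alpha$-values to the left of $\xi$; for the saddle to be reachable by deforming this contour without crossing the remaining poles at $\alpha(u)\ge 0$, it must sit to the left of the whole spectrum, forcing $t\leq 0$. Equivalently, the positivity constraints $0<z<\alpha(1)-\xi$ inherited from $\tilde S_1$ select $x(t)>1$ and $\xi<\alpha(1)$, which is exactly the regime $t\in(-\infty,0]$ (with $x(t)$ decreasing from $+\infty$ at $t=0$ down to $1$ as $t\to-\infty$). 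As a consistency check I would verify that letting $t\to-\infty$ gives $x(t)\to 1^+$ and reproduces precisely the limiting point $(X_1,Y_1)=(\tfrac12+\int_0^1\alpha(u)\,du,\,1)$ of \eqref{eq:X1}, so that this second arc joins the first portion \eqref{arcticone} with a common horizontal tangent; the sum rule \eqref{eq:sumrule} provides a further global check that $\tilde H_{n,n\xi}\to 1-H_{n,n\xi}$ across the liquid region.
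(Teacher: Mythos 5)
Your proposal is correct and follows essentially the same route as the paper's own proof: the same saddle-point asymptotics for $\tilde H_{n,n\xi}$ and $\tilde Y_{n\xi,nz}$, the same pair of equations $\partial_t\tilde S_0=\partial_\xi(\tilde S_0+\tilde S_1)=0$ yielding $\xi(t)=t-\tfrac{x(t)}{1-x(t)}$ and the equivalent form of $\tilde z(t)$, the identification of the tangent family with \eqref{eq:tgeq}, and the same reality/contour argument excluding $t\in[\xi-1,\xi]$ to pin down $t\in(-\infty,0]$. The only cosmetic additions are your remark that $\tilde S_0$ coincides with $S_0$ up to logarithm branches and the supplementary positivity check $0<z<\alpha(1)-\xi$, both consistent with (and implicit in) the paper's argument.
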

In other words, the arctic curve parametrization of Theorem \ref{arcticonethm} extends to values of $t$ in $(-\infty,0]$, leading to a new 
portion of the arctic curve which we will describe below.

\begin{proof}
Using the same rescaling \eqref{eq:rescaling} as in Section \ref{asymptoone}, we now get from the exact formulas \eqref{exacttildeH}-\eqref{exacttildeY} the asymptotic behaviors, valid for $\xi\geq 1$ (recall that $\ell \geq n$ in $\tilde{H}_{n,\ell}$):
\begin{eqnarray*} && \qquad \tilde{H}_{n,n\xi} \sim - \oint  \frac{dt}{2{\rm i} \pi} e^{n\tilde{S}_0(t,\xi)}\ ,\qquad \tilde{Y}_{n\xi,n z} \sim e^{n\tilde{S}_1(\xi,z)}\ ,\\
\tilde{S}_0(t,\xi) &=&\int_0^1 du\, {\rm Log}\left(\frac{\xi-t-u}{\al(u)-t}\right)\\
&=& -1-(\xi-t-1){\rm Log}(\xi-t-1)+(\xi-t){\rm Log}(\xi-t)-\int_0^1 du\, {\rm Log}(\al(u)-t)\ , \\
 \tilde{S}_1(\xi,z)&=& (\al(1)-\xi){\rm Log}(\al(1)-\xi) -z{\rm Log}(z)-(\al(1)-\xi-z){\rm Log}(\al(1)-\xi-z)\ .
\end{eqnarray*}

Here the contour in the (rescaled) $t$ variable must encircle the segment $[0,\xi-1]$ and, using remark \ref{remcontourtilde}, may
cross the real axis anywhere between $\xi-1$ and $\xi$ on the right side of this segment. On the left side, any position $t\in (-\infty,0]$ is acceptable.
As in Section \ref{asymptoone}, the asymptotic evaluation of the contour integral amounts to picking $t$ such that  $\partial_t \tilde{S}_0=0$ which will
produce a real value of $t$ in the interval $(-\infty,0]$. The most likely rescaled exit position $\xi$ is obtained as before by maximizing the total action $\tilde{S}(t,\xi,z)=\tilde{S}_0(t,\xi)+\tilde{S}_1(\xi,z)$. 
Setting $\partial_t \tilde{S}_0=\partial_\xi \tilde{S}=0$ now leads to:
\begin{equation*}
\frac{\xi-t-1}{\xi-t}\, x(t)=1\quad {\rm and}\quad 
\frac{(\xi-t)(\al(1)-\xi-z)}{(\al(1)-\xi)(\xi-t-1)} =1
\end{equation*}
with $x(t)$ as in \eqref{defx}. We deduce
\begin{equation*}
\xi=\xi(t)= t+ \frac{x(t)}{x(t)-1} \quad {\rm and} \quad z=\tilde{z}(t)=(\al(1)-t)\,\frac{x(t)-1}{x(t)}-1\ .
\end{equation*}
Again $t$ must be real and cannot lie in the segment $[\xi-1,\xi]$ and this leaves us with the range $t\in (-\infty,0]$. Letting $t$ vary from $-\infty$ to $0$ corresponds to letting $x(t)$
increase from $1$ to $+\infty$.

The (tangent) line passing through the rescaled escape point $(\xi(t),1)$ and the rescaled endpoint $(\al(1),1+\tilde{z}(t))$ is defined by the equation $(\xi(t)-\al(1))\, Y+\tilde{z}(t)\,X=\xi(t)(1+\tilde{z}(t))-\al(1)$,
or, after substitution and simplification,
\begin{equation}
x(t)\, Y+(1-x(t))\,(X-t)=0\ .
\label{eq:tgeqtwo}
\end{equation} 
Remarkably, the equation for the tangent lines is \emph{the same} as that \eqref{eq:tgeq} in the setting of Section \ref{asymptoone}. Only the
range of $t$, now in the interval $(-\infty,0]$, is changed and corresponds to positive slopes $(x(t)-1)/x(t)$.
The envelope of this new parametric family of lines has therefore the same parametric form \eqref{arcticone} as for Theorem \ref{arcticonethm}
and this leads immediately to \eqref{arctictwo}, hence Theorem \ref{arctictwothm}.
\end{proof}
Again we may examine the limiting points of 
the new portion of arctic curve: in the limit $t\to-\infty$ ($x(t)\to 1$), we recover the point $(X_1,Y_1)$ of \eqref{eq:X1}
with a horizontal tangent.
At the other end of the curve, when $t\to 0$ ($x(t)\to +\infty$), we have the estimate:
\begin{eqnarray*}{\rm Log}(x(t))&=&-\int_0^1 \frac{du}{t-u\, \al'(0)} 
-\int_0^1 du\left\{\frac{1}{t-\al(u)}-\frac{1}{t-u\, \al'(0)} \right\}\\
&=& -\frac{1}{\al'(0)}{\rm Log}\left(\frac{-t}{\al'(0)}\right)-\int_0^1 du\left\{\frac{1}{u\, \al'(0)}-\frac{1}{\al(u)} \right\}+O(t)
\end{eqnarray*}
with a second integral being finite.
We obtain the estimates
\begin{equation*}
\begin{split}
&x(t)\underset{t\to 0^-}{\sim} K \left(\frac{\al'(0)}{-t}\right)^{1/\al'(0)}\ , \\
&x'(t)\underset{t\to 0^-}{\sim} \frac{K}{\al'(0)^2} \left(\frac{\al'(0)}{-t}\right)^{1+1/\al'(0)}\ , \\
&K=e^{\textstyle{-\int_0^1 du\left\{\frac{1}{u\, \al'(0)}-\frac{1}{\al(u)} \right\}}}\ .\\
\end{split}
\end{equation*}
Note that both $x(t)$ and $x'(t)$ tend to $\infty$ for $t\to 0$ with
\begin{equation*}
\frac{x(t)^2}{x'(t)}\underset{t\to 0^-}{\sim} K\, \al'(0)^2 \left(\frac{\al'(0)}{-t}\right)^{1/\al'(0)-1}\ .
\end{equation*}
For $\al'(0)>1$, this ratio tends to $0$ and the endpoint of the arctic curve has coordinates $(X_\infty,Y_\infty)=(0,0)$ with a slope
$1$ since $(x(t)-1)/x(t)$ tends to $1$.
On the other hand, if $\al'(0)=1$, then $X$ and $Y$ have a finite limit, and the endpoint is:
$$X_\infty=Y_\infty=e^{\textstyle{-\int_0^1 du\left\{\frac{1}{u}-\frac{1}{\al(u)} \right\}}}$$
with again a slope $1$. Since the paths cannot enter the domain $Y>X$, the arctic curve is naturally extended from $(X_\infty,Y_\infty)$
to $(0,0)$ by a segment.
The case where $\al'(u)=1$ on a finite interval $[0,\gamma]$ is special in this respect, and will be discussed in Section \ref{freezesec} below. 

The above discussion assumed implicitly 
that $\al'(0)$ is finite. 
For $\al'(0)=+\infty$, we have to be more precise.  Let us assume the behavior $\al(u)\sim C u^a$ when $u\to 0$
with $0<a<1$. We have to consider the two integrals $J_1=\int_{0}^1 \frac{du}{\al(u)}$
and $J_2=\int_{0}^1 \frac{du}{\al(u)^2}$. We note that for $a<\frac{1}{2}$ both integrals are finite, 
while for $a\geq \frac{1}{2}$, $J_1$ is finite and $J_2$ diverges.
If both integrals are finite, then $\lim_{t\to 0} x(t)=e^{J_1}$ and
$\lim_{t\to 0} \frac{x'(t)}{x(t)}= J_2$, and we find the endpoint for $t\to 0$:
$$ X_\infty=\frac{e^{J_1}-1}{J_2} , \quad Y_{\infty}=\frac{(e^{J_1}-1)^2}{J_2\, e^{J_1}} $$
with a tangent of positive slope $\lim_{t\to 0}(x(t)-1)/x(t)=1-e^{-J_1}<1$ so that the arctic curve is tangent to the line connecting $(X_\infty,Y_\infty)$
to $(0,0)$.
If $J_2$ diverges and $J_1$ is finite, then $(X_\infty,Y_\infty)=(0,0)$, and the tangent at the origin has slope
$1-e^{-J_1}<1$.

As a final remark, we note that when the starting point pattern is symmetric by reflection, i.e.\ whenever $\tilde{a}_i=a_i$, hence
$\al(u)=\al(1)-\al(1-u)$,
the arctic curve is 
symmetric under the involution $(X,Y)\mapsto (\alpha(1)-X+Y,Y)$ as a direct consequence of the reflection principle detailed in Section \ref{reflection} above. This is visible 
in the parametric equation of the curve: indeed, using $\al(u)=\al(1)-\al(1-u)$, we get the identity
$x(\al(1)-t)=1/x(t)$. Plugged into the parametric equation, it yields $X(\al(1)-t)=\al(1)-X(t)+Y(t)$ and $Y(\al(1)-t)=Y(t)$.
The above symmetry of the arctic curve is therefore associated with the involution $t\mapsto \al(1)-t$ for the parameter $t$ .

\section{Examples}
\label{examplesec}

In this section, we present various examples to illustrate the general results of Sections \ref{puzzleone} and  \ref{puzzle2} above.
As a preliminary remark, we note that any continuous piecewise differentiable increasing function $\alpha(u)$ on $[0,1]$ with $\al'(u)\geq 1$ (when it is defined)
may be realized by taking starting points $(a_i,0)$ with
\begin{equation}
a_i=\Big\lfloor n\, \al\left(\frac{i}{n}\right)\Big\rfloor \ .
\label{eq:floor}
\end{equation}  
The condition $\al'(u)\geq 1$ guarantees that this sequence is indeed strictly increasing\footnote{As we shall see later, it is interesting to also address the case where $\al(u)$ presents discontinuities with
positive jumps $\delta_k$. In that case, eq.~\eqref{eq:floor} is only valid for large enough $n\geq \max_k(1/\delta_k)$ to ensure
that the sequence $(a_i)$ is strictly increasing.} and its scaling limit is clearly described
by $\al(u)$.

\subsection{The pure case $\al(u)=p\, u$}

We consider the case where $\al(u)=p\, u$ for some real number $p>1$. 
For instance, the particular case $p\in \N\setminus\{1\}$ is obtained as the large $n$ limit of the points $a_i=p\, i$, $i=0,1,\dots,n$.

Substituting $\al(u)=p\, u$ into \eqref{defx} yields
\begin{equation} 
x(t)=e^{\textstyle{-\int_0^1 \frac{du}{t- p u}}}= \left(1-\frac{p}{t}\right)^{\frac{1}{p}}\ .
\end{equation}
The two portions of the arctic curve correspond respectively to $t\in (-\infty,0]$ and $t\in [p,+\infty)]$, namely to $x(t)\in [0,+\infty)$.
More precisely, we may express the arctic curve of Theorems \ref{arcticonethm} and \ref{arctictwothm} in terms of the parameter $x\equiv x(t)$, by noting that 
$t=p/(1-x^p)$ and $x'(t)= (1-x^p)^2/(p^2\, x^{p-1})$ as:
\begin{equation}\left\{
\begin{split}
&X=\frac{p}{(1-x^{p})}\left( 1-\frac{p\, (1-x)}{(1-x^{p})}\, x^{p}\right)\\
&Y=\frac{p^2(1-x)^2}{(1-x^{p})^2}\, x^{p-1}
 \end{split}\right. (x\in [0,+\infty))\ .
\label{purep}\end{equation}
The special points on the curve, corresponding respectively to $x=+\infty, 1,0$, are the origin $(X_\infty,Y_\infty)=(0,0)$ with a tangent of slope $1$, 
the maximum $(X_1,Y_1)=(\frac{p+1}{2},1)$ with horizontal tangent and the endpoint $(X_0,Y_0)=(p,0)$ with vertical tangent.
When $p$ is an integer, eq.\eqref{purep} may be recast into:
\begin{equation}\left\{
\begin{split}
&X=\frac{p\, (1+2\,x+3\,x^2+\cdots +p\, x^{p-1})}{(1+x+x^2+\cdots +x^{p-1})^2}\\
&Y=\frac{p^2\, x^{p-1}}{(1+x+x^2+\cdots +x^{p-1})^2}
 \end{split}\right. (x\in [0,+\infty))\ .
\end{equation}
For $p=2$, this simplifies drastically, as we may eliminate $x=Y/(2(X-Y))$, and we recover the arctic parabola of \cite{DFLAP}:
$$ (2X-Y)^2-8(X-Y)=0\ .$$
For $p=3$, eliminating $x$ leads to the following quartic arctic curve:
$$  (3 X^2 - 3 X Y + Y^2)^2 -2 (3 X - Y) (9 X^2 - 15 X Y + 7 Y^2)+81 (X - Y)^2 =0\ .$$
\begin{figure}
\begin{center}
\includegraphics[width=14cm]{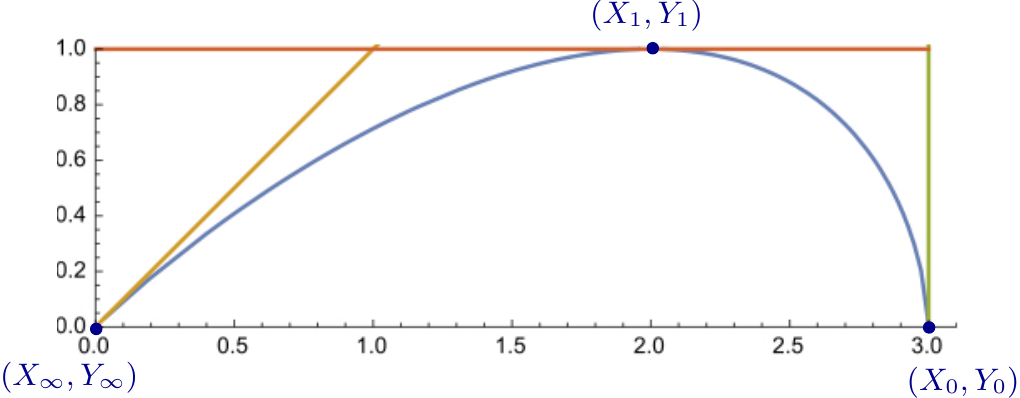}
\end{center}
\caption{The arctic curve in the case $\al(u)=3\, u$. The slope is horizontal at $(X_1,Y_1)=(2,1)$ on the upper boundary of $\mathcal{D}$, vertical  at $(X_0,Y_0)=(3,0)$ on the right boundary of $\mathcal{D}$, and $1$ at $(X_\infty,Y_\infty)=(0,0)$ on the left boundary of $\mathcal{D}$ so that 
the arctic curve is tangent to the indicated line $Y=X$ (above which paths are fully frozen even for finite $n$).}
\label{fig:p3}
\end{figure}
The corresponding curve is displayed in Figure \ref{fig:p3} for illustration.
For higher integer values of $p$, by eliminating $x$, one can show that the arctic curve is an algebraic curve of degree $2p-2$.
The case of rational $p\geq 1$ also leads to an algebraic arctic curve. For instance, for $p=3/2$ we find:
\begin{equation*}
\begin{split}
& 32 (3 X^2 - 3 X Y + Y^2)^2 -16 (54 X^3 - 135 X^2 Y + 99 X Y^2 - 19 Y^3)\\
&\qquad \qquad \qquad \qquad \qquad \qquad+162 (5 X - 8 Y) (X - Y)-243 (X - Y)=0 \ .\\
\end{split}
\end{equation*}

It is interesting to notice that there is a well-defined large $p$ limit of the arctic curve, provided one rescales the $X$ coordinate by a factor $1/p$.
In the new coordinates $(\tilde X,\tilde Y)=(X/p,Y)$, using the finite parameter $e^y=x^p$, i.e.\ setting ${\rm Log}(x)=\frac{y}{p}$ and letting $p\to\infty$,
we find
\begin{equation}\left\{
\begin{split}
&{\tilde X}=\frac{1}{(1-e^y)}\left( 1+\frac{y\, e^y}{(1-e^y)}\right)=\frac{y}{4\sinh^2(y/2)}-\frac{e^{-y/2}}{2\sinh(y/2)}\\
&{\tilde Y}=\frac{y^2e^y}{(1-e^y)^2}=\frac{y^2}{4\sinh^2(y/2)}
 \end{split}\right. (y\in \R)\ .
\end{equation}
Note the following symmetry: under $y\to -y$, we have $({\tilde X},{\tilde Y})\to  (1-{\tilde X},{\tilde Y})$ so that the arctic curve is symmetric
with respect to the vertical line ${\tilde X}=1/2$. The tangents at the endpoints $(0,0)$
and $(1,0)$ are vertical, while that at the maximum $(\frac{1}{2},1)$ is horizontal.

To end this section, it is interesting to revisit the connection between the asymptotic result for the one-point function $H_{n,\ell}$ and its discrete counterpart. 
Let us for instance consider the case $a_i=3i$ ($p=3$). The one-point function $H_{n,\ell}$ may easily be obtained from the LU decomposition
as 
\begin{equation*}
H_{n,\ell}=\frac{\displaystyle{\sum_{k=\lfloor \ell/3\rfloor}^n (-1)^{k+n} {n \choose k}{3\, k+n-\ell\choose n}}}{\displaystyle{\sum_{k=0}^n (-1)^{k+n} {n \choose k}{3\, k+n\choose n}}}\ .
\end{equation*} 
\begin{figure}
\begin{center}
\includegraphics[width=11cm]{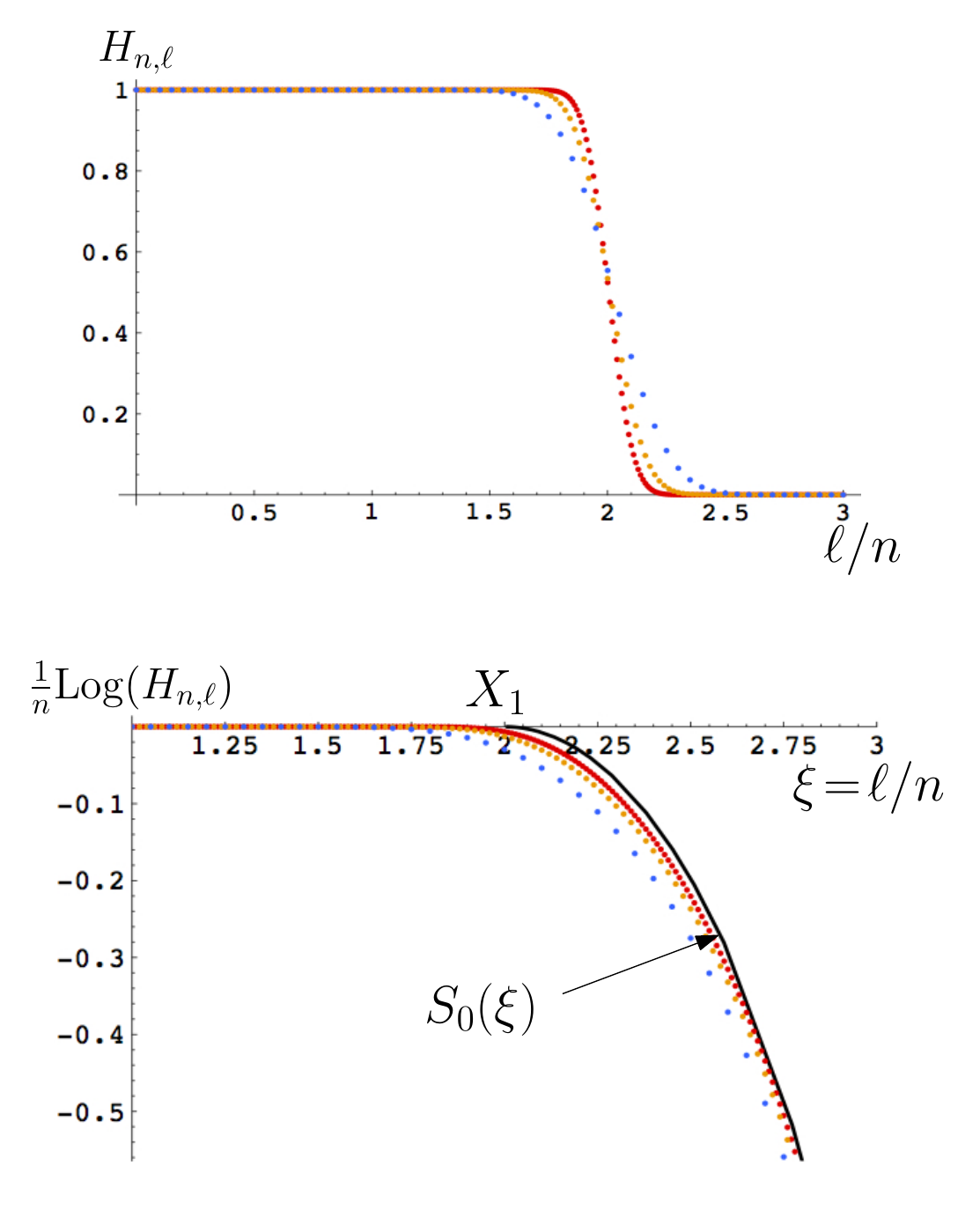}
\end{center}
\caption{Top: the one-point function $H_{n,\ell}$ for the sequence $a_i=3\, i$ and for finite $n=20,50,100$ versus $\ell/n$ presents a sharp transition around $\ell/n=2$ from a limiting value $1$ for small $\ell/n$ to a limiting value $0$ for large $\ell/n$. Bottom: the asymptotic limiting shape of the right part of the transition curve is captured by the quantity 
$\frac{1}{n}{\rm Log}(H_{n,\ell})$ as a function of $\xi=\ell/n$, which, for $\xi\geq X_1=2$, tends at large $n$ towards the scaling function $S_0(\xi)$ of \eqref{eq:S0p3}.}
\label{fig:Hnlscaling}
\end{figure}
 
Figure \ref{fig:Hnlscaling} shows a plot of $H_{n,\ell}$ as a function of $\ell/n$ for increasing values of $n=20,50,100$. We observe a sharp jump
from the value $1$ to the value $0$ taking place at a value of $\ell/n$ tending to $X_1=2$ in this case. The corresponding asymptotics,
describing the large $n$ behavior of $H_{n,\ell}$ for $\ell/n \geq X_1$ is captured by the quantity $\frac{1}{n}{\rm Log}(H_{n,\ell})$ which tends to 
a continuous function $S_0(\xi)$ equal to $S_0(t,\xi)$ of \eqref{Szero} taken at the saddle-point solution $t=t^*(\xi)$ where
$\partial_tS_0(t,\xi)=0$. We find the parametric expression
\begin{equation}
\begin{split}
&\xi=t-\frac{x(t)}{1-x(t)}\\
&S_0(\xi)=\frac{1}{3}(t\!-\!3){\rm Log}(t\!-\!3)-\frac{1}{3}t\, {\rm Log}(t)+\frac{1}{1\!-\!x(t)}{\rm Log}\left(\frac{1}{1\!-\!x(t)}\right)\\&
\qquad \qquad \qquad \qquad \qquad \qquad \qquad \qquad -
\frac{x(t)}{1\!-\!x(t)}{\rm Log}\left(\frac{x(t)}{1\!-\!x(t)}\right)\\
&x(t):=\left(\frac{t-3}{t}\right)^{1/3}\ .\\
\end{split}
\label{eq:S0p3}
\end{equation}
This asymptotic analysis is corroborated by the plot of $\frac{1}{n}{\rm Log}(H_{n,\ell})$ as a function of $\ell/n$ displayed in Figure \ref{fig:Hnlscaling}, for increasing values of $n=20,50,100$, together with the expected limit $S_0(\ell/n)$. The function $S_0(\xi)$ is well defined for $\xi$ between $X_1=2$ ($t\to \infty$) and $3$ ($t\to 3$) and vanishes at $\xi=2$. For $0\leq \xi \leq 2$, the limit of $\frac{1}{n}{\rm Log}(H_{n,\ell})$ vanishes identically, meaning that
$H_{n,\ell}\to 1$ at large $n$ for $\ell\leq 2n$.

\subsection{The case of a piecewise linear $\al(u)$}
\label{sec:piecewise}

Let us consider real numbers  $\gamma_1,\gamma_2,\dots,\gamma_k>0$ such that $\sum_{i=1}^k \gamma_i=1$, and 
real numbers $p_1,p_2,\dots,p_k\geq 1$. We define the function $\al(u)$ to be continuous and piecewise linear with constant derivative
$p_1$ on the interval $[0,\gamma_1]$, $p_{2}$ on $[\gamma_1,\gamma_1+\gamma_2]$, etc.\ , $p_k$ on $[\gamma_1+\cdots+\gamma_{k-1},1]$.
Define variables $\varphi_i:=\sum_{j=1}^{i}\gamma_j$ and $\theta_i:=\sum_{j=1}^i p_j\gamma_j$ for $i=0,1,\dots,k$ with $\varphi_0=0$, $\varphi_k=1$,
and $0=\theta_0<\theta_1<\cdots <\theta_{k-1}<\theta_k=\al(1)$. 
We have for $i=1,2,\dots,k$:
$$ \al(u)= \theta_{i-1} +p_i (u-\varphi_{i-1}) \qquad (u\in [\varphi_{i-1},\varphi_{i}]) \ .$$
The corresponding value of $x(t)$ from \eqref{defx} reads:
\begin{equation}
 x(t)=e^{\textstyle{-\sum\limits_{i=1}^k \int_{\varphi_{i-1}}^{\varphi_i} \frac{du}{t-\theta_{i-1} -p_i (u-\varphi_{i-1})}}}=\prod_{i=1}^k \left(\frac{t-\theta_{i}}{t-\theta_{i-1}}\right)^{\frac{1}{p_i}}=\prod_{i=1}^k \left( 1-\frac{\theta_i}{t}\right)^{\frac{1}{p_{i}}-\frac{1}{p_{i+1}}}
 \label{piecesol}
 \end{equation}
with the convention that $p_{k+1}=+\infty$. 

The maximum with horizontal tangent has coordinates:
$$ X_1=\frac{1}{2}+\sum_{i=1}^k\frac{\theta_i^2-\theta_{i-1}^2}{2\, p_i}\ ,\quad Y_1=1\ .$$
The other special points on the arctic curve depend crucially on the values of $p_1$ and $p_k$. We have $(X_\infty,Y_\infty)=(0,0)$ unless $p_1=1$,
and $(X_0,Y_0)=(\al(1),0)=(\theta_k,0)$ unless $p_k=1$. The situation where either $p_1=1$ or $p_k=1$ is more subtle and will be discussed in Section \ref{freezesec} below.
\begin{figure}
\begin{center}
\includegraphics[width=14cm]{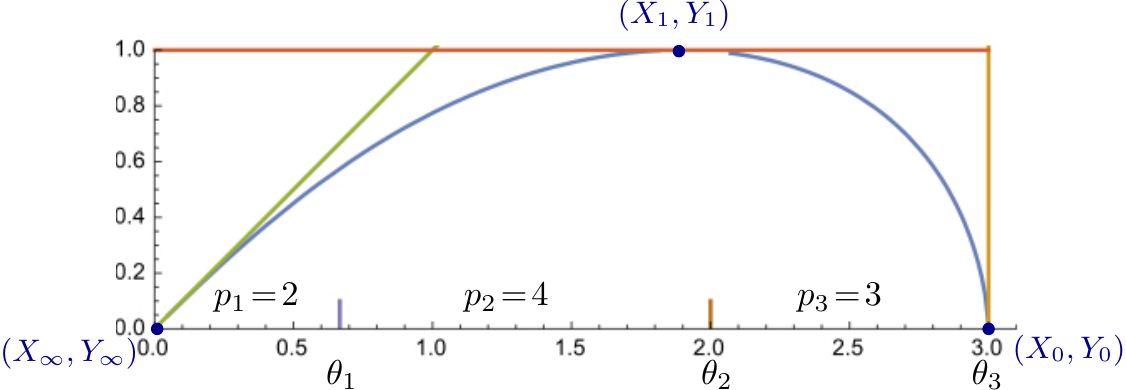}
\end{center}
\caption{The arctic curve when $\al(u)$ is continuous piecewise linear, made of $k=3$ linear pieces of respective widths $\gamma_1=\gamma_2=\gamma_3=1/3$ and slopes $p_1=2$, $p_2=4$ and $p_3=3$ (so that $\theta_1=2/3$, $\theta_2=2$ and $\theta_3=3$).}
\label{fig:p2p4p3}
\end{figure}
Figure \ref{fig:p2p4p3} presents a plot of the arctic curve in the particular case of $k=3$ linear pieces, with $\gamma_1=\gamma_2=\gamma_3=1/3$,
$p_1=2$, $p_2=4$ and $p_3=3$. 

\subsection{A first non-linear case: $\al(u)=p\, u +q\, u^2$}
In the case when $\al(u)=p\, u+q\, u^2$ with $p,q$ real numbers such that $p\geq 1$ and $q> 0$, we have by eq. \eqref{defx}:
$$ x(t)=e^{\textstyle{-\int_0^1 \frac{du}{t-p u-q u^2}}}=\left(\frac{p - 2 t +\sqrt{p^2 + 4 q t}}{ p - 2 t - \sqrt{p^2 + 4 q t}}\right)^{\frac{1}{\sqrt{p^2 + 4 q t}}} \ .$$
The special points are for $p>1$:
$$ (X_\infty,Y_\infty)=(0,0)\ , \quad (X_1,Y_1)= \left(\frac{p+1}{2}+\frac{q}{3},1\right)\ ,\quad (X_0,Y_0)=(p+q,0)\ ,$$
whereas for $p=1$ we have $(X_\infty,Y_\infty)=\left(\frac{1}{1+q},\frac{1}{1+q}\right)$.
\begin{figure}
\begin{center}
\includegraphics[width=11cm]{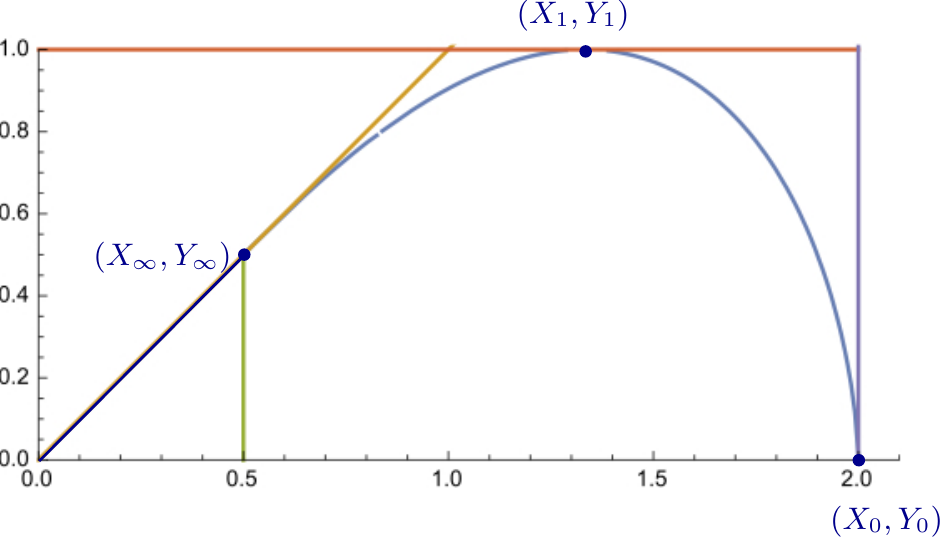}
\end{center}
\caption{The arctic curve when $\al(u)=u+u^2$ hits the $Y=X$ line at the point $(X_\infty,Y_\infty)=(1/2,1/2)$. For $X<1/2$, the limit
between the ``crystalline" and the ``liquid" phase occurs on the $Y=X$ line.}
\label{fig:nonlin}
\end{figure}
Figure \ref{fig:nonlin} presents a plot of the arctic curve in the particular case $p=q=1$. 

\subsection{A second non-linear case: $\al(u)=\frac{1}{a} u^a$}
We consider the case $\al(u)=\frac{1}{a} u^a$ for some fixed real number $a\in (0,1)$.
We have by eq. \eqref{defx}:
$$ x(t)=e^{\textstyle{-\int_0^1 \frac{du}{t-\frac{1}{a} u^a} }}=e^{\textstyle{- {}_2F_{1}\left(1,\frac{1}{a};1+\frac{1}{a}\Big\vert \frac{1}{a\, t} \right)/t}} \ ,$$
in terms of the hypergeometric function
$$ {}_2F_{1}\left(1,\frac{1}{a};1+\frac{1}{a}\Big\vert x \right)=\sum_{n\geq 0} \frac{x^n}{na +1}\ . $$
The special points are as follows: for $t\to \infty$: $(X_1,Y_1)= \left(\frac{1}{2}+\frac{1}{a(a+1)},1\right)$ with horizontal tangent. 
For $t\to 0$, we have, according to the discussion at the end of Section \ref{asymptotwo}:
$$(X_\infty,Y_\infty)= \left\{ 
\begin{split}
& \left((1-2a)\frac{e^\frac{a}{1-a}-1}{a^2},(1-2a)\frac{(e^\frac{a}{1-a}-1)^2}{a^2\,e^\frac{a}{1-a}}\right) 
\quad {\rm if}\ a<\frac{1}{2} \\
& \ \ (0,0) \qquad {\rm if}\ a\geq \frac{1}{2}\ , \\
\end{split}
\right.
$$
\begin{figure}
\begin{center}
\includegraphics[width=12cm]{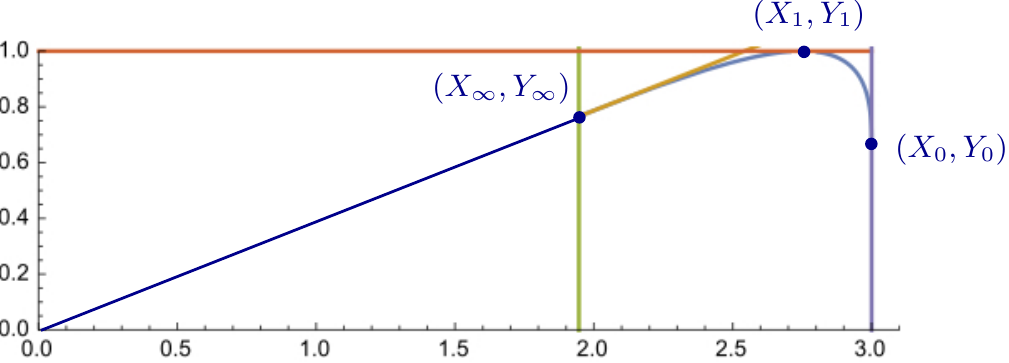}
\end{center}
\caption{The arctic curve when $\al(u)=3\, u^{1/3}$ (see text for the values of the special points). The slope at the point $(X_\infty,Y_\infty)$ is equal to 
$1-1/\sqrt{e}$. For $X<X_\infty$, the limit
between the ``crystalline" and the ``liquid" phase occurs on the line $Y=(1-1/\sqrt{e})X$.}
\label{fig:a1-3}
\end{figure}
\begin{figure}
\begin{center}
\includegraphics[width=12cm]{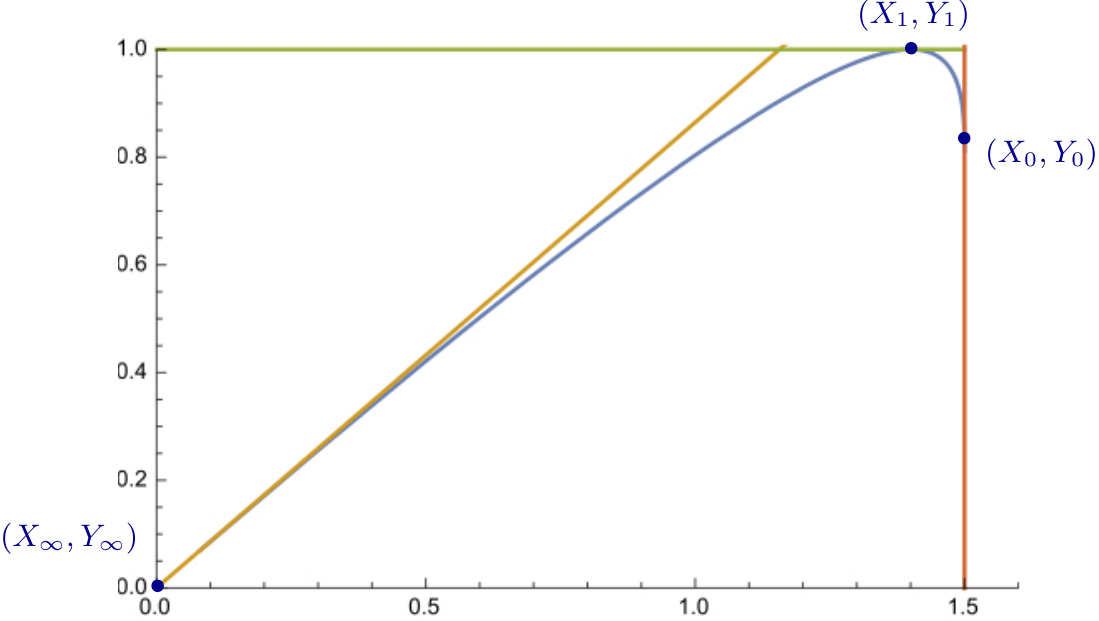}
\end{center}
\caption{The arctic curve when $\al(u)=\frac{3}{2}\, u^{2/3}$ (see text for the values of the special points). The slope at the point $(X_\infty,Y_\infty)$ is equal to 
$1-1/e^2$.}
\label{fig:a2-3}
\end{figure}
where we have used the value $J_1=\frac{a}{1-a}$ while $J_2=\frac{a^2}{1-2a}$ 
when $a<\frac{1}{2}$ and diverges otherwise. In both cases the tangent has slope $1-e^{-\frac{a}{1-a}}$.
Finally, when $t\to \al(1)=1/a$, we have $\al'(1)=1$, leading to the endpoint
$$ (X_0,Y_0)=\left(\frac{1}{a},\frac{1}{a}e^{-\gamma_E-\psi(a^{-1})}\right)
$$
by applying \eqref{eq:X0Y0}, and where $\gamma_E=.5772...$ is Euler's Gamma constant and
$\psi(u)=\Gamma'(u)/\Gamma(u)$.
We have represented the cases $a=\frac{1}{3}$ and $a=\frac{2}{3}$ in Figures \ref{fig:a1-3} and \ref{fig:a2-3} respectively. 
The special points read respectively:
\begin{eqnarray*}
&& \hskip -.6cm a=\frac{1}{3}:\ (X_1,Y_1)=\left(\frac{11}{4},1\right)\!, \ (X_\infty,Y_\infty)=\left(3(\sqrt{e}\!-\!1),3\frac{(\sqrt{e}\!-\!1)^2}{\sqrt{e}}\right)\!,
\ (X_0,Y_0)=\left(3,\frac{3}{e\sqrt{e}}\right), \\
&&\hskip -.6cm  a=\frac{2}{3}:\ (X_1,Y_1)=\left(\frac{7}{5},1\right),    \ (X_\infty,Y_\infty)=(0,0),
\ (X_0,Y_0)=\left(\frac{3}{2},\frac{6}{e^{2}}\right),
\end{eqnarray*}
with horizontal tangents at $(X_1,Y_1)$, vertical tangents at $(X_0,Y_0)$, and tangents of respective slopes
$1-1/\sqrt{e}$ and $1-1/e^2$ at $(X_\infty,Y_\infty)$.

\section{Freezing boundaries}
\label{freezesec}

\begin{figure}
\begin{center}
\includegraphics[width=14cm]{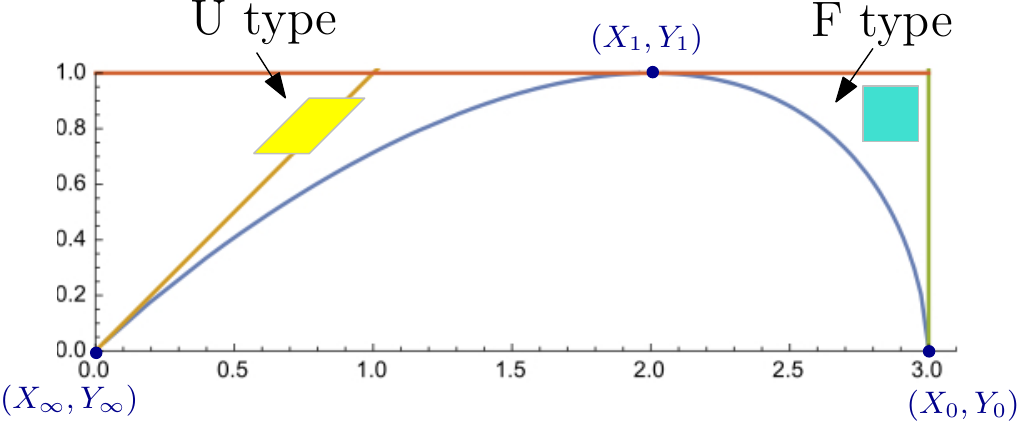}
\end{center}
\caption{Frozen domains for a generic $\al(u)$ (here in the case $\al(u)=3\, u$) are made of upper tiles (U-type) above the portion of arctic curve from $(X_\infty,Y_\infty)$ to $(X_1,Y_1)$ and made of front tiles (F-type) above the portion of arctic curve from $(X_1,Y_1)$ to $(X_0,Y_0)$.}
\label{fig:UF}
\end{figure}

So far we discussed two portions of the arctic curve, one going from $(X_\infty,Y_\infty)$ to $(X_1,Y_1)$ and 
one from $(X_1,Y_1)$ to $(X_0,Y_0)$. For a generic function $\alpha(u)$, we expect that these two portions 
build the entire arctic curve, which therefore defines two frozen domains in $\mathcal{D}$. The domain
lying above the portion from $(X_\infty,Y_\infty)$ to $(X_1,Y_1)$ corresponds in the original path family setting  
to a region where the paths are frozen into horizontal segments, or equivalently, in the second path family setting, to a region
not visited by the paths. In the tiling language, this corresponds to a frozen domain made of \emph{upper tiles}: we therefore shall refer to 
such freezing as being \emph{of type U} (for upper), see Figure \ref{fig:UF} for an illustration. As for the domain 
lying above the portion from $(X_1,Y_1)$ to $(X_0,Y_0)$, it corresponds to a region
not visited by the paths in the original path family setting  and to a region where paths of the second family form horizontal segments.
In other words, we have here a frozen domain  of type F (i.e.\ made of front tiles).
\begin{figure}
\begin{center}
\includegraphics[width=8cm]{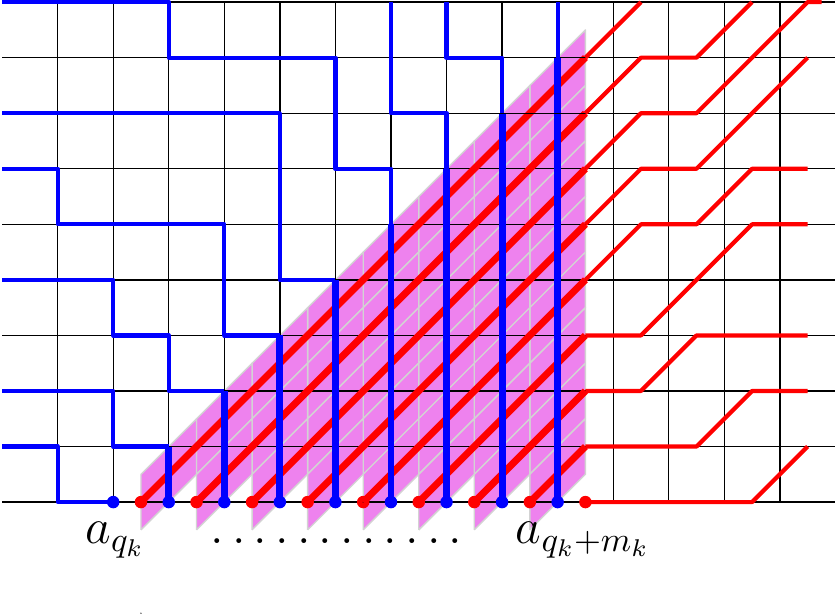}
\end{center}
\caption{A schematic picture of a freezing boundary, where $a_{i+1}-a_i=1$ for $i$ in some interval $I_k=\{q_k,q_k+1,\dots,q_k+m_k-1\}$. 
The non-intersection constraint creates a fully frozen triangular region made of right tiles only. 
This region will serve as a germ for a larger frozen domain of type R around it.}
\label{fig:freezing}
\end{figure}
A frozen domain with the third possible type of freezing, of type R (i.e. made of right tiles with paths of the first family frozen vertically, or equivalently, paths of the second family frozen along diagonal lines)
will not appear in general since for a generic increasing sequence, the spacing between the successive $a_i$'s leaves enough space
for the paths to develop some fluid erratic behavior \emph{in the horizontal direction}. 

New portions of arctic curve may still appear in the presence of what may be called \emph{freezing boundaries},
i.e.\ for particular sequences $(a_i)_{0\leq i\leq n}$ which induce new frozen domains adjacent to the lower boundary of the domain $D$.

A first kind of such freezing boundary corresponds to a case for which there is no (horizontal) spacing left in-between successive $a_i$'s. 
In other words, it may happen that $a_{i+1}-a_i=1$ for $i$ lying in one or several "macroscopic" intervals $I_k=\{q_k,q_k+1,\dots,q_k+m_k-1\}$ where the length $m_k$ of $I_k$ 
scales like $n$. As displayed in Figure \ref{fig:freezing}, the non-intersection constraint in this case creates, for any
such interval, a triangular region which is fully frozen, of type $R$. 
We expect these fully frozen regions to then serve as germs for even larger frozen domains of type R, 
hence to create new portions for the arctic curve. Note that, for the third family of paths made of north- and northeast-oriented steps, 
these freezing domains of type R correspond to regions not visited by the paths.

The condition that $a_{i+1}-a_i=1$ for $i\in I_k$ translates into the condition $\alpha'(u)=1$ for $u$ in some finite interval $[u_k,u_k+\gamma_k]$
(with $u_k=q_k/n$ and $\gamma_k=m_k/n>0$ in the large $n$ limit). When several intervals co-exist, they may be arranged into a family of (maximal) disjoint intervals 
$[u_k,u_k+\gamma_k]$ (where $u_{k+1}>u_k+\gamma_k$), which may possibly include boundary intervals of the form $[0,\gamma]$ or 
$[1-\gamma,1]$.
 
\begin{figure}
\begin{center}
\includegraphics[width=8cm]{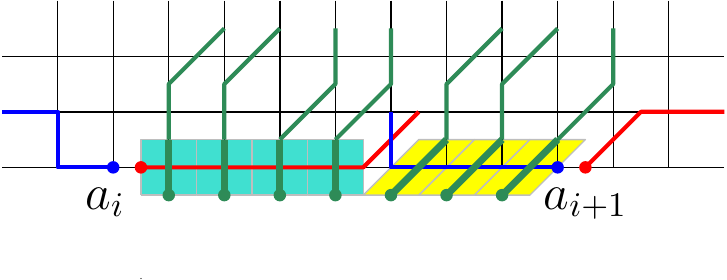}
\end{center}
\caption{A schematic picture of a freezing boundary corresponding to a ``macroscopic" gap in the 
sequence between $a_i$ and $a_{i+1}$ for some $i$. This forces the lower layer to be made of a sequence of front tiles 
followed by a sequence of upper tiles.  This frozen layer will serve as germ for two extended frozen domains: one of type F above the left part
and one of type U above the right part.}
\label{fig:freezingtwo}
\end{figure}
\medskip
Another type of freezing boundary corresponds to the opposite case where there is one or several "macroscopic" gaps in the 
sequence $(a_i)_{0\leq i\leq n}$, namely intervals $I_k=\{q_k,q_k+1,\dots,q_k+m_k-1\}$ (with $m_k$ scaling like $n$) which contain
no $a_i $ at all. As displayed in Figure \ref{fig:freezingtwo}, this case creates, for any
such interval, a fully frozen layer made of a sequence of front tiles followed by a sequence of upper tiles (so that the lower boundary 
of the layer is horizontal). We expect these frozen layers to serve as germs for extended frozen domains of type F above their left part
and of type U above their right part, creating again new portions for the arctic curve.

The presence of gaps translates into the fact that $\alpha(u)$ is discontinuous and presents a jumps of height $\delta_k=m_k/n$ at $u_k=q_k/n$.

This section is devoted to a heuristic study of these freezing boundaries, of both types, creating new portions of arctic curve.

\subsection{The case of a piecewise linear $\al(u)$ revisited}\label{revisited}
We may easily introduce freezing boundaries in the framework studied in Section \ref{sec:piecewise} where $\al(u)$ is a continuous and piecewise 
linear function made of $k$ pieces, as defined in Section \ref{sec:piecewise}. Let us start with freezing boundaries creating frozen domains of type R.
Such boundaries exist whenever $p_i=1$ for one or several $i$'s in $\{1,\dots,k\}$. 

To describe new portions of the arctic curve, we note that, in all generality, the two already known portions are described by 
\emph{the same parametric equations}, given by \eqref{arcticone} or \eqref{arctictwo} with the same expression \eqref{defx} for $x(t)$.
Only the range of $t$ differs between the two portions, namely $(-\infty,0]$ for one portion and $[\al(1),+\infty)$ for the other.
This range covers the allowed $X$-coordinates of the points at which the tangents intersect the $X$-axis, whose value is precisely $t$. 
The allowed values of $t$ correspond moreover to positive real values of $x(t)$
ranging from $0$ to $\infty$, the slope of the tangent parametrized by $t$ being precisely $-(1-x(t))/x(t)$.

 It is tempting to conjecture that, in the presence of freezing boundaries, the expected new portions of the arctic curve are again given by \eqref{arcticone} (or \eqref{arctictwo}) and simply correspond
to new possible values of the parameter $t$. In order for these parametric equations to remain meaningful, we must insist on having a \emph{real value}
for $x(t)$. On the other hand, releasing the constraint that $x(t)$ be positive seems harmless. Let us now see how this may be realized in the piecewise linear case.

From  the expression \eqref{piecesol} for $x(t)$, written as
\begin{equation*}
 x(t)=\prod_{i=1}^k \left(\frac{t-\theta_{i}}{t-\theta_{i-1}}\right)^{\frac{1}{p_i}}\ ,
  \end{equation*}
we immediately see that the $i$-th term in the product leads to a cut of $x(t)$ on the real interval $[\theta_{i-1},\theta_i]$ when $p_i>1$.
If all the $p_i$'s are strictly larger than $1$, then $x(t)$ has a cut on the real axis along the whole interval $[0,\theta_k]$ and, for real $t$, is
well-defined only for $t\geq \theta_k=\al(1)$ or $t\leq 0$ (for which $x(t)$ is moreover real and positive) corresponding
to the known two portions of the arctic curve. On the other hand, if $p_m=1$ for some $m$, 
then the above formula is well defined on $[\theta_{m-1},\theta_m]$ and takes the value:
\begin{equation*}
  x(t) =- \prod_{i=1}^{m-1} \left(\frac{t-\theta_{i}}{t-\theta_{i-1}}\right)^{\frac{1}{p_i}}\times \left(\frac{\theta_{m}-t}{t-\theta_{m-1}}\right)\times \prod_{i=m+1}^k \left(\frac{\theta_{i}-t}{\theta_{i-1}-t}\right)^{\frac{1}{p_i}}\ \  \hbox{for}\ \  t\in [\theta_{m-1},\theta_m]\  \ (p_m=1)\ .
 \end{equation*}
Taking $p_m=1$ therefore gives rise to a domain $[\theta_{m-1},\theta_m]$ of $t$ for which $x(t)$ is real and negative.
This new range of $t$ in turn gives rise via the equation \eqref{eq:tgeq} (or equivalently \eqref{eq:tgeqtwo}) to a new set of 
tangent lines with positive slope $-(1-x(t))/x(t)$ crossing the $X$-axis at $(t,0)$ with $t\in [\theta_{m-1},\theta_m]$, which is precisely the location of the base of the triangular fully frozen region of type R (as displayed in 
Figure \ref{fig:freezing}).  It is easily checked that the slope of the tangent is equal to $1$ for $t=\theta_{m-1}$ and $\infty$ for $t=\theta_m$
and that the envelope of these tangents for $t\in [\theta_{m-1},\theta_m]$ presents a cusp. We conjecture that this envelope is precisely 
the outer boundary of a larger frozen domain
enclosing the fully frozen triangular region and  tangent to this region at its endpoints. We thus have here a new portion of arctic curve.

\begin{figure}
\begin{center}
\includegraphics[width=10cm]{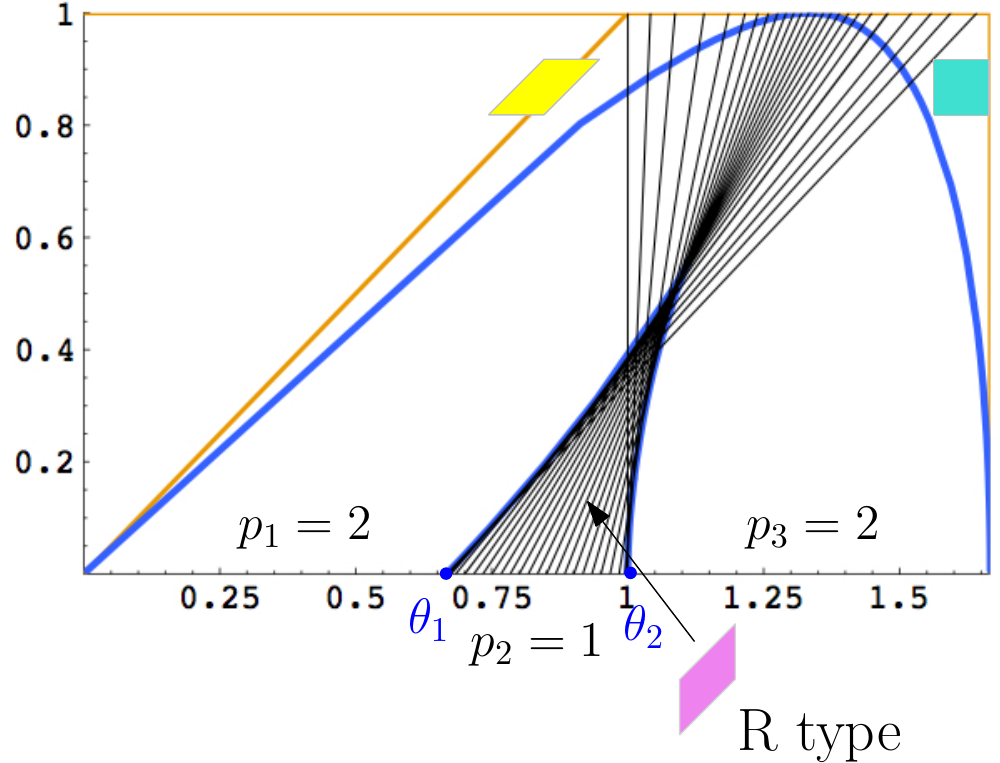}
\end{center}
\caption{The complete arctic curve when $\al(u)$ is continuous piecewise linear with $k=3$ linear pieces of respective widths $\gamma_1=\gamma_2=\gamma_3=1/3$
and slopes $p_1=2$, $p_2=1$ and $p_3=2$ (so that $\theta_1=2/3$ and $\theta_2=1$), giving rise to a freezing boundary along the segment $[\theta_1,\theta_2]$. A new frozen domain of type R emerges above this segment, separated from the ``liquid" phase by a new portion of arctic curve
forming a cusp. As displayed, this new portion is obtained as the envelope of a family of tangents whose intercepts with the $X$-axis have abscissa $t\in [\theta_1,\theta_2]$.}
\label{fig:reentrance}
\end{figure}

Figure \ref{fig:reentrance} displays for illustration the complete (including conjectured portions) arctic curve in the case $k=3$, $\gamma_1=\gamma_2=\gamma_3=1/3$, 
$p_1=p_3=2$ and $p_2=1$.  Clearly, when $p_m=1$ for several values of $m$ (which we take non consecutive as, in the piecewise linear setting, it is 
implicitly assumed that consecutive slopes are different), each piece where $p_m=1$ gives rise to a new frozen domain. 
When a freezing boundary occurs in the first piece (i.e.\ when $p_1=1$), it is easily checked that $Y_\infty>0$ and that the new frozen domain
is enclosed by a new portion of arctic curve from $(X_\infty,Y_\infty)$ to $(\theta_1,0)$. Similarly, when a freezing boundary occurs in the last piece 
(i.e.\ when $p_k=1$), the new frozen domain
is enclosed by a new portion of arctic curve from $(X_0,Y_0)$ (where $Y_0> 0$) to $(\theta_{k-1},0)$. Figure \ref{fig:p1p2p1p2p1} displays
a situation where both $p_1$ and $p_k$ are equal to $1$, namely the case $k=5$, $p_1=p_3=p_5=1$, $p_2=p_4=2$ and $\gamma_i=1/5$ for
$i=1,\dots,5$, giving rise to three new frozen domains.
  
\begin{figure}
\begin{center}
\includegraphics[width=12cm]{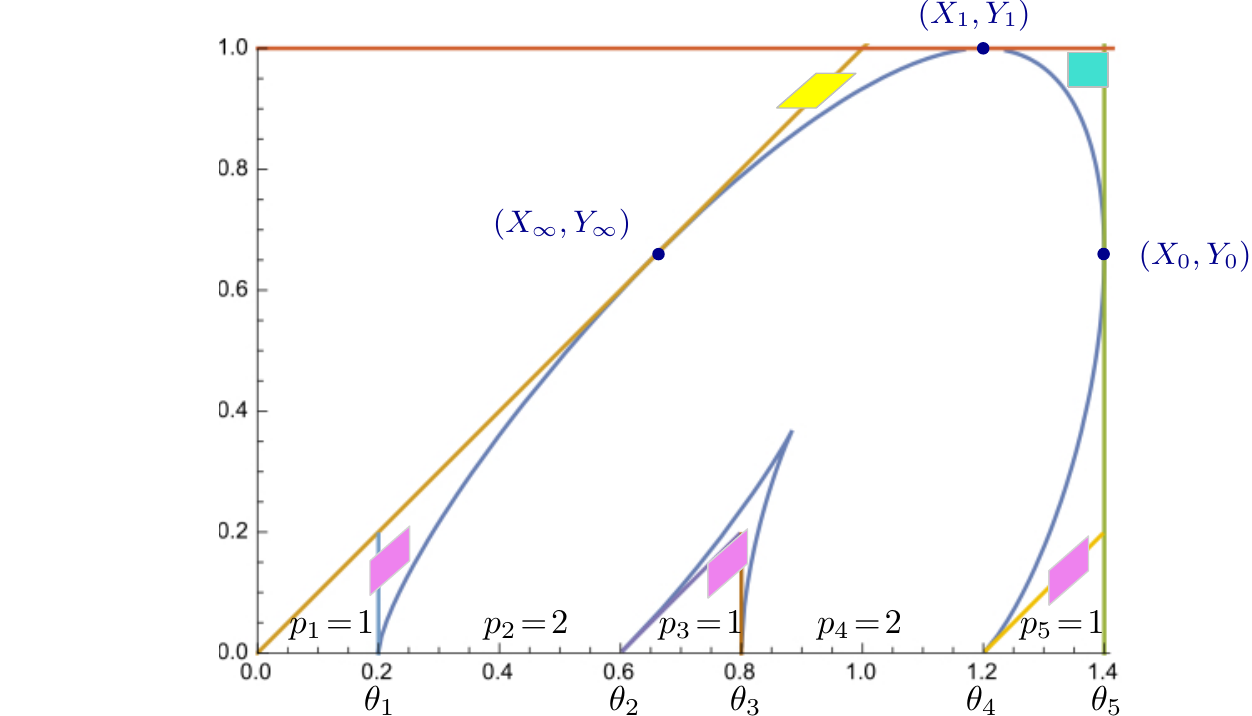}
\end{center}
\caption{The complete arctic curve when $\al(u)$ is continuous piecewise linear, made of $k=5$ linear pieces of widths $\gamma_i=1/5$ ($i=1,\dots,5$) and slopes $p_1=p_3=p_5=1$, $p_2=p_4=2$. The three pieces where $p_i=1$ give rise to freezing boundaries which generate frozen domains of type R.
The first frozen domain of type R is separated from the ``liquid" phase by a portion of arctic curve joining $(\theta_1,0)$ to $(X_\infty,Y_\infty)$ and from 
a frozen domain of type U by the $Y=X$ line for $X\leq X_\infty$. The third frozen domain of type R is separated from the ``liquid" phase by a portion of arctic curve joining $(\theta_4,0)$ to $(X_0,Y_0)$. For each frozen domain we indicated its triangular ``germ".  }
\label{fig:p1p2p1p2p1}
\end{figure}

\medskip
Let us now come to the case of freezing boundaries arising from a gap in the $a_i$'s, creating frozen domains of type F and U. This situation also \
occurs in the setting of piecewise linear functions $\al(u)$ in the following limit. A discontinuity in the function $\al(u)$ may be obtained 
by letting  $\gamma_m\to 0$ for some $m$ together with $p_m\to \infty$, keeping the product $p_m \gamma_m=\delta_m$ finite. This creates 
a jump in the function $\al(u)$ by $\delta_m$ at the position $u=\varphi_{m-1}=\varphi_m$ (recall that $\varphi_i:=\sum_{j=1}^i \gamma_j$).
Using again the parameters $\theta_i:=\sum_{j=1}^i p_j\gamma_j$ to express $x(t)$, we have the identification $\delta_m=\theta_m-\theta_{m-1}$
so that we may use the form \eqref{piecesol} for $x(t)$, now with $p_m=\infty$ to write
\begin{equation*}
  x(t) = \prod_{i=1}^{m-1} \left(\frac{t-\theta_{i}}{t-\theta_{i-1}}\right)^{\frac{1}{p_i}}\times \prod_{i=m+1}^k \left(\frac{\theta_{i}-t}{\theta_{i-1}-t}\right)^{\frac{1}{p_i}}\ \ \ (p_m=\infty)\ .
 \end{equation*}
\begin{figure}
\begin{center}
\includegraphics[width=10cm]{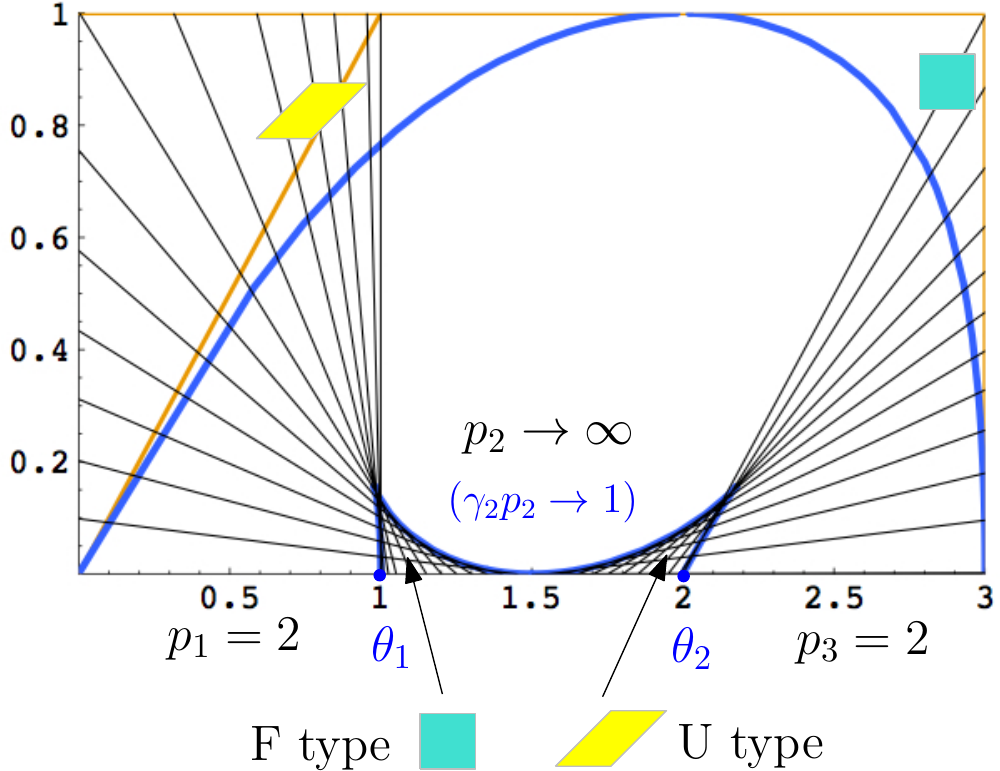}
\end{center}
\caption{The complete arctic curve when $\al(u)$ is a piecewise linear function made of two pieces with a discontinuity in-between, 
obtained as limit of a continuous piecewise linear function made of $k=3$ pieces of widths $\gamma_1=\gamma_3=1/2$, $\gamma_2\to 0$, and slopes
$p_1=p_3=2$, $p_2\to \infty$,  with $p_2\gamma_2\to \delta_2=1$ (so that $\theta_1=1$ and $\theta_2=\theta_1+\delta_2=2$). The discontinuity gives rise to a freezing boundary with a gap along the segment $[\theta_1,\theta_2]$. Two new frozen domains of respective type F and U emerge 
above this segment, separated from the ``liquid" phase by a new portion of arctic curve
forming two cusps and being tangent to the $X$-axis at some point with abscissa between $\theta_1$ and $\theta_2$ (here equal to $3/2$ by symmetry). As displayed, this new portion is obtained as the envelope of a family of tangents whose intercepts with the $X$-axis have abscissa $t\in [\theta_1,\theta_2]$. For clarity, the $Y$-axis has been stretched.  }
\label{fig:reentrancegap}
\end{figure}
Apart from the domains $t\leq 0$ and $t\geq \al(1)=\theta_k$, this opens a new domain $[\theta_{m-1},\theta_m]$ of linear size $\delta_m$
for the allowed values of $t$, leading to real and positive values of $x(t)$. As displayed in
Figure \ref{fig:reentrancegap} (which shows the resulting complete arctic curve in the simple case $k=3$, $\gamma_1=\gamma_3=1/2$, 
$p_1=p_3=2$ and $p_2\to \infty$, $\gamma_2\to 0$, $p_2\gamma_2\to \delta_2=1$),  the corresponding family of tangents
creates a new portion of arctic curve made of three parts: a part on the left leaving the point $(\theta_{m-1},0)$ with a vertical slope,  a part on the right 
leaving the point $(\theta_m,0)$ with a slope $1$  and a middle part which is tangent to the $X$-axis at a point $(\theta,0)$ for some 
$\theta\in [\theta_{m-1},\theta_m]$. This in turn creates two frozen domains, one of type F on the left, and one of type U on the right.

\subsection{Freezing the right edge: exact derivation}

So far, the expressions for the new portions that we obtained are based on the conjecture that the parametric equation 
for the arctic curve is not only valid for $t$ in the range $(-\infty,0]\cup[\al(1),+\infty)$ but holds in a larger range of values 
corresponding to real values of $x(t)$. This hypothesis may be tested in the particular case where the freezing boundary 
lies on the edge of the domain $D$. More precisely, this section is devoted to the study of the effect of ``freezing the right edge" of our paths 
by imposing that the rightmost starting points obey
$a_{i+1}-a_{i}=1$ for $i=n-r+1,n-r+2,\dots,n-1$, while $a_{n-r+1}-a_{n-r}>1$, 
and letting $r$ grow proportionally to $n$ when $n$ becomes large. In turn, letting $r=\rho n$, this amounts to the condition 
$\al'(u)=1$ on the segment $[1-\rho,1]$. We expect in this case a frozen domain of type R below a new portion of arctic curve 
connecting the point $(\al(1-\rho),0)$ to the point $(X_0,Y_0)$ (where $Y_0>0$ in this case). Let us show that this is indeed the case.

\subsubsection{Partition function: a new derivation}

It is easier to describe the present situation in terms of the complementary
starting points $b_i$, $i=1,2,\dots,m$, for the paths with north- and northeast-oriented steps of Section \ref{pf}, where $m+n=a_n$. 
The above condition simply forces the position $b_m=a_n-r$ of the rightmost starting point.
As mentioned in Section \ref{pf}, the partition function for paths with north- and northeast-oriented steps, starting at $(b_i,-1/2)$, $i=1,2,\dots,m$ and ending at 
$(n+j,n+1/2)$, $j=1,2,\dots,m$ is given by the determinant of the LGV matrix ${\hat A}_{i,j}$ with entries:
\begin{equation}\label{hatAlgv}
{\hat A}_{i,j}={n+1\choose b_i-j+1} \qquad (i,j=1,2,\dots,m)\ .
\end{equation}
Let us use again the LU decomposition method to compute the determinant directly in terms of the $b$'s. We have the following explicit result:

\begin{thm}
The lower uni-triangular matrix ${\hat L}^{-1}$ with elements:
\begin{equation}
{\hat L}^{-1}_{i,j}=\left\{
\begin{matrix}\frac{\displaystyle {n+m\choose b_i} {n+m-b_i\choose m+1-i}}{\displaystyle {n+m\choose b_j}{n+m-b_j\choose m+1-i}} \ 
\frac{\displaystyle \prod_{s=1}^{i-1} (b_i-b_s)}{\displaystyle \prod_{s=1\atop s\neq j}^i (b_j-b_s)}
& \hbox{for}\ i\geq j\\
0 & \hbox{for}\ i<j
\end{matrix}
\right.
\end{equation}
is such that ${\hat U}:= {\hat L}^{-1}{\hat A}$ is upper triangular.
\end{thm}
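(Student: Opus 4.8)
The plan is to mirror the proof of the earlier LU-decomposition Theorem \ref{} verbatim in spirit: rewrite the entries of $\hat{L}^{-1}$ as contour integrals, fuse the sum $\hat{U}_{i,j}=\sum_{k=1}^{i}\hat{L}^{-1}_{i,k}\hat{A}_{k,j}$ into a single contour integral, and show that for $i>j$ the resulting integrand has no residue at infinity, forcing $\hat{U}_{i,j}=0$. First I would check that $\hat{L}^{-1}$ is genuinely lower uni-triangular: for $i=j$ the two binomial quotients cancel and the Vandermonde-type quotient reduces to $1$, so $\hat{L}^{-1}_{i,i}=1$.

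The key preliminary simplification is to expand the binomial prefactor. Using $\binom{n+m}{b}\binom{n+m-b}{m+1-i}=\tfrac{(n+m)!}{(m+1-i)!}\,\tfrac{1}{b!\,(n+i-1-b)!}$, the $b$-independent factor cancels in the quotient and one gets $\frac{\binom{n+m}{b_i}\binom{n+m-b_i}{m+1-i}}{\binom{n+m}{b_j}\binom{n+m-b_j}{m+1-i}}=\frac{g_i(b_i)}{g_i(b_j)}$ with $g_i(b):=1/\big(b!\,(n+i-1-b)!\big)$. The crucial point is that this prefactor is \emph{separable}: only the factor $1/g_i(b_j)$ depends on the column index, i.e.\ on the pole location. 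Exactly as before I would then write
\[
\hat{L}^{-1}_{i,j}=g_i(b_i)\prod_{s=1}^{i-1}(b_i-b_s)\oint_{\mathcal{C}(b_j)}\frac{dt}{2{\rm i}\pi}\,\frac{1}{g_i(t)\prod_{s=1}^{i}(t-b_s)}\ ,
\]
where $1/g_i(t)=\Gamma(t+1)\Gamma(n+i-t)$; since every $b_j$ with $j\le i$ obeys $0\le b_j\le n+i-1$, this factor is regular and nonzero at $t=b_j$, so the residue there reproduces $\hat{L}^{-1}_{i,j}$.

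Next I would insert $\hat{A}_{k,j}=\binom{n+1}{b_k-j+1}$ and promote the $b_k$-dependent binomial to $\binom{n+1}{t-j+1}$ inside the integrand, so the sum over $k=1,\dots,i$ merges the individual contours into a single $\mathcal{C}(b_1,\dots,b_i)$ encircling all finite poles. Writing $\binom{n+1}{t-j+1}=(n+1)!/\big(\Gamma(t-j+2)\Gamma(n-t+j+1)\big)$, the full $t$-dependence becomes $\tfrac{\Gamma(t+1)}{\Gamma(t-j+2)}\,\tfrac{\Gamma(n+i-t)}{\Gamma(n+j+1-t)}\,\tfrac{1}{\prod_{s}(t-b_s)}$. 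For $j\ge1$ the first ratio collapses to the polynomial $\prod_{r=0}^{j-2}(t-r)$ of degree $j-1$, and for $i>j$ the second collapses to $\prod_{r=0}^{i-j-2}(n+j+1-t+r)$ of degree $i-j-1$; all Gamma poles cancel and the integrand is a genuine rational function whose only finite poles are the $b_s$. Counting degrees gives numerator degree $(j-1)+(i-j-1)=i-2$ against denominator degree $i$, so the integrand decays like $t^{-2}$. As the contour encircles every finite pole, the integral equals minus the residue at infinity, which vanishes, whence $\hat{U}_{i,j}=0$ for $i>j$.

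The main obstacle is exactly the reduction of the Gamma-function quotients to polynomials together with the verification that no spurious poles are introduced: one must confirm that promoting the discrete factorials $b_k!,\,(n+i-1-b_k)!$ and the binomial $\binom{n+1}{b_k-j+1}$ to functions of $t$ is legitimate, i.e.\ that the contour picks up precisely the intended residues. This is where the constraints $k\le i\Rightarrow b_k\le n+i-1$ and $j\ge1$ enter, and where one checks that the binomial vanishing (for $b_k\le j-2$ or $n+j+1\le b_k\le n+i-1$) is matched by the vanishing of the numerator polynomial at those same points, so that including such poles contributes nothing. Once these consistency checks are in place the degree count is immediate and the upper-triangularity of $\hat{U}$ follows.
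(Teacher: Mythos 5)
Your proof is correct and takes essentially the same approach as the paper's: both rewrite $\hat{L}^{-1}_{i,k}$ as a contour integral, merge the sum over $k$ into a single contour encircling $b_1,\dots,b_i$ after checking that the poles where the binomial $\binom{n+1}{b_k-j+1}$ vanishes (namely $b_k\leq j-2$ or $n+j+1\leq b_k\leq n+i-1$) are killed by matching zeros of the numerator, and conclude for $i>j$ by the degree count (numerator degree $i-2$ versus denominator degree $i$) that the residue at infinity vanishes. The only cosmetic difference is your use of Gamma-function ratios as intermediate bookkeeping before collapsing them to the same polynomial factors the paper writes down directly.
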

\begin{proof}
We compute:
\begin{eqnarray*}
{\hat U}_{i,j}&=& \sum_{k=1}^m ({\hat L}^{-1})_{i,k} {n+1\choose b_k-j+1}\\
&=& \sum_{k=1}^i\frac{\displaystyle {n+m\choose b_i} {n+m-b_i\choose m+1-i}}{\displaystyle {n+m\choose b_k}{n+m-b_k\choose m+1-i}} \ 
\frac{\displaystyle \prod_{s=1}^{i-1} (b_i-b_s)}{\displaystyle \prod_{s=1\atop s\neq k}^i (b_k-b_s)} \, {n+1\choose b_k-j+1}\ .\\
\end{eqnarray*}
Note that, due to the binomial factors, only the values of $k$ for which $j-1\leq b_k\leq n+j$ and $b_k\leq n+i-1$ contribute to the sum.
When this holds, the combination of the five binomial factors above may then be rewritten as
\begin{equation*}
\left\{\begin{split}
&\frac{(n+1)!}{b_i!(n+i-1-b_i)!}\ \prod_{s=0}^{j-2}(b_k-s) \prod_{s=j+1}^{i-1}(n-b_k+s)\quad \hbox{for}\ i>j\\
&\frac{(n+1)!}{b_i!(n+i-1-b_i)!}\ \frac{\displaystyle{\prod_{s=0}^{j-2}(b_k-s)}}{\displaystyle {\prod_{s=i}^{j}(n-b_k+s)}}\quad \hbox{for}\ i\leq j\ .\\
\end{split}\right.
\end{equation*}
Assume now that $i>j$ so that the constraint over $b_k$ reduces to $j-1\leq b_k\leq n+j$. We way then write 
$$ {\hat U}_{i,j}= \frac{(n+1)!}{b_i!(n+i-1-b_i)!} \prod_{s=1}^{i-1} (b_i-b_s) 
\oint_{{\mathcal C}(\hat{S}_{j})} \frac{dt}{2{\rm i}\pi}
\prod_{s=0}^{j-2}(t-s)\, \prod_{s=j+1}^{i-1} (n-t+s) \, \prod_{s=1}^i \frac{1}{(t-b_s)}
$$
where the contour encompasses only the set $\hat{S}_j=\{b_s \vert  j-1\leq b_s\leq n+j\}$.

Due to the factor $\prod_{s=j+1}^{i-1} (n-t+s)$ which vanishes for $t=n+j+1,n+j+2,\dots,n+i-1$ and to the factor $\prod_{s=0}^{j-2}(t-s)$ which 
vanishes for $t=0,1,\dots, j-2$, the contour of integration 
can be extended harmlessly so as to encircle all the poles $b_1,b_2,\dots,b_i$ as the residues of the unwanted contributions
vanish (recall that $b_i\leq n+i-1$ since $b_i<b_{i+1}<\dots<b_m<n+m$). 
In turn, by the Cauchy theorem, the integral can be expressed as minus the contribution of the pole at $\infty$. But for large $t$,
the integrand behaves as $t^{-2}$, hence the residue at $\infty$ vanishes, and we conclude that ${\hat U}_{i,j}=0$ for $i>j$, i.e. $\hat U$
is upper triangular.
\end{proof}

The diagonal matrix elements ${\hat U}_{i,i}$ are also easily obtained from the above:
\begin{equation*} {\hat U}_{i,i}=
 \frac{(n+1)!}{b_i!(n+i-1-b_i)!} \prod_{s=1}^{i-1} (b_i-b_s) 
\oint_{{\mathcal C}(b_1,b_2,\dots,b_i)} \frac{dt}{2{\rm i}\pi}\,\prod_{s=1}^i \frac{1}{(t-b_s)}
\, \frac{\displaystyle\prod_{s=0}^{i-2}(t-s)}{(n+i-t)}
\end{equation*}
where the contour encompasses all $b_s$ for $s=1,2,\dots,i$, but not $n+i$. Indeed the original contour must select those $b_s$ with
$i-1\leq b_s\leq n+i-1$ and may be extended to those $b_s$ with $0\leq b_s\leq n+i-1$ (due to the vanishing of $\prod_{s=0}^{i-2}(t-s)$ for $t=0,1,\dots, i-2$),
which includes all $b_s$ for $s=1,2,\dots,i$  since the condition $b_s\leq n+i-1$ is automatically satisfied (due to $b_i\leq n+i-1$). 
As before we note that the integrand behaves as $1/t^2$
for large $t$, hence the residue at $\infty$ vanishes. By the Cauchy theorem, we may therefore re-express $ {\hat U}_{i,i}$ as minus
the residue at the excluded pole $n+i$. We find:
\begin{eqnarray*} {\hat U}_{i,i}&=&
 \frac{(n+1)!}{b_i!(n+i-1-b_i)!} \prod_{s=1}^{i-1} (b_i-b_s) 
\, \prod_{s=1}^i \frac{1}{(n+i-b_s)}
\, \prod_{s=0}^{i-2}(n+i-s)\\
&=& {n+i\choose b_i} \prod_{s=1}^{i-1} \frac{(b_i-b_s)}{(n+i-b_s)}\ .
\end{eqnarray*}
This leads to the following result:
\begin{thm}
The partition function expressed in terms of the sequence $(b_i)_{1\leq i\leq m}$ reads:
\begin{equation*}
Z_n=\frac{\Delta(0,1,\dots,n+m)}{\Delta(0,1,\dots,n)}\,
\frac{\displaystyle \Delta(b_1,b_2,\dots,b_m)}{\displaystyle\prod_{i=1}^m b_i! (n+m-b_i)!}\ .\label{Zofb}
\end{equation*}
\end{thm}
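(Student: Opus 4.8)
The plan is to read off $Z_n$ from the LU decomposition just established and then simplify the resulting product of diagonal terms. Since $\hat{L}^{-1}$ is lower uni-triangular and $\hat{U}=\hat{L}^{-1}\hat{A}$ is upper triangular, the LGV lemma gives
\[
Z_n=\det(\hat{A})=\det(\hat{U})=\prod_{i=1}^m \hat{U}_{i,i}
=\prod_{i=1}^m \left[\binom{n+i}{b_i}\prod_{s=1}^{i-1}\frac{b_i-b_s}{n+i-b_s}\right],
\]
where I substitute the closed form for $\hat{U}_{i,i}$ derived above. Writing $\binom{n+i}{b_i}=(n+i)!/(b_i!\,(n+i-b_i)!)$, the first step is simply to sort the factors of this product into three groups: the differences $b_i-b_s$ in the numerators, the factorials $(n+i)!$, and all the remaining denominator factors.

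Next I would identify the first two groups with the desired quantities. The numerators assemble into $\prod_{1\le s<i\le m}(b_i-b_s)=\Delta(b_1,\dots,b_m)$, which is exactly the Vandermonde in the target formula. The factorials give $\prod_{i=1}^m (n+i)!=\prod_{k=n+1}^{n+m}k!$, and using the standard superfactorial identity $\Delta(0,1,\dots,N)=\prod_{k=0}^{N}k!$ this equals $\Delta(0,1,\dots,n+m)/\Delta(0,1,\dots,n)$, producing the prefactor in the statement. The factors $b_i!$ in the denominators already match $\prod_i b_i!$. So it only remains to show that the leftover denominator factors combine into $\prod_{i=1}^m (n+m-b_i)!$, i.e.\ to prove
\[
\prod_{i=1}^m\left[(n+i-b_i)!\prod_{s=1}^{i-1}(n+i-b_s)\right]=\prod_{i=1}^m (n+m-b_i)!\,.
\]

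This last identity is the crux of the argument, and the plan is to prove it by reorganizing the left-hand product \emph{per index $s$} rather than per $i$. A fixed value $b_s$ appears in $(n+i-b_i)!$ only through the term $i=s$, contributing $(n+s-b_s)!$, and it appears in the inner products $\prod_{s'=1}^{i-1}(n+i-b_{s'})$ precisely for the slots $s'=s$ with $i=s+1,\dots,m$, contributing $\prod_{i=s+1}^m(n+i-b_s)$. Reindexing $j=n+i-b_s$ shows this last product telescopes as $\prod_{i=s+1}^m(n+i-b_s)=(n+m-b_s)!/(n+s-b_s)!$, so the total contribution of $b_s$ is $(n+s-b_s)!\cdot(n+m-b_s)!/(n+s-b_s)!=(n+m-b_s)!$. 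Taking the product over $s$ yields the claimed right-hand side. I expect the only real obstacle here to be the bookkeeping of which factors carry a given $b_s$; once the grouping is done correctly the telescoping is immediate, and assembling the three groups then gives the formula exactly as stated.
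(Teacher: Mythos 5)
Your proof is correct and takes essentially the same route as the paper: both start from $Z_n=\det(\hat A)=\prod_{i=1}^m \hat U_{i,i}$ via the LU decomposition and regroup the factors into the Vandermonde $\Delta(b_1,\dots,b_m)$, the superfactorial ratio $\Delta(0,1,\dots,n+m)/\Delta(0,1,\dots,n)$, and the denominator $\prod_{i=1}^m b_i!\,(n+m-b_i)!$. The only difference is presentational: the identity $\prod_{i=1}^m\bigl[(n+i-b_i)!\prod_{s=1}^{i-1}(n+i-b_s)\bigr]=\prod_{i=1}^m (n+m-b_i)!$, which you isolate and prove by regrouping per index $s$, is exactly the step the paper performs implicitly when it replaces $\binom{n+i}{b_i}\prod_{s<i}(n+i-b_s)^{-1}$ by $(n+i)!/\bigl(b_i!\,(n+m-b_i)!\bigr)$ inside the product over $i$.
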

\begin{proof}
We compute
\begin{eqnarray*}
Z_n&=&\prod_{i=1}^m {\hat U}_{i,i}=\prod_{i=1}^m \left\{ {n+i\choose b_i} \prod_{s=1}^{i-1} \frac{(b_i-b_s)}{(n+i-b_s)}\right\}=\prod_{i=1}^m  
\left\{\frac{(n+i)!}{b_i! (n+m-b_i)!}\,\prod_{s=1}^{i-1}(b_i-b_s)\right\} \\
&=&\frac{\displaystyle\prod_{i=1}^{m+n} \left\{ i!\ \displaystyle\prod_{1\leq i<j \leq m}^{i-1}(b_j-b_i)\right\}}{\displaystyle\prod_{i=1}^{n}
\left\{ i!\ \displaystyle\prod_{i=1}^m b_i! (n+m-b_i)!\right\}}\,
 =\frac{\Delta(0,1,\dots,n+m)}{\Delta(0,1,\dots,n)}\,
\frac{\displaystyle \Delta(b_1,b_2,\dots,b_m)}{\displaystyle\prod_{i=1}^m b_i! (n+m-b_i)!}\ .
\end{eqnarray*}
\end{proof}
Note that this evaluation of the determinant of the matrix $\hat A$ of \eqref{hatAlgv} is a particular limit
$q\to 1$ of a more general formula \cite{KRADET}, Theorem 26, eq. (3.12). 

Using the complementarity of the $a$'s and $b$'s, namely $\{a_s\}\cup \{b_q\}=\{0,1,\dots,n+m\}$ and  $\{a_s\}\cap \{b_q\}=\emptyset$, 
we have the identity
\begin{eqnarray*} \Delta(0,1,\dots,n+m)&=&
\Delta(a_0,a_1,\dots,a_n)\Delta(b_1,b_2,\dots,b_m)\prod_{a_s<b_q}(b_q-a_s)\prod_{b_q<a_s}(a_s-b_q)\\
&=&\frac{\Delta(a_0,a_1,\dots,a_n)}{\Delta(b_1,b_2,\dots,b_m)} \prod_{q=1}^m b_q! (n+m-b_q)!
\end{eqnarray*}
which allows to identify \eqref{Zofb} with \eqref{paf}.

\subsubsection{One-point function}
\begin{figure}
\begin{center}
\includegraphics[width=12cm]{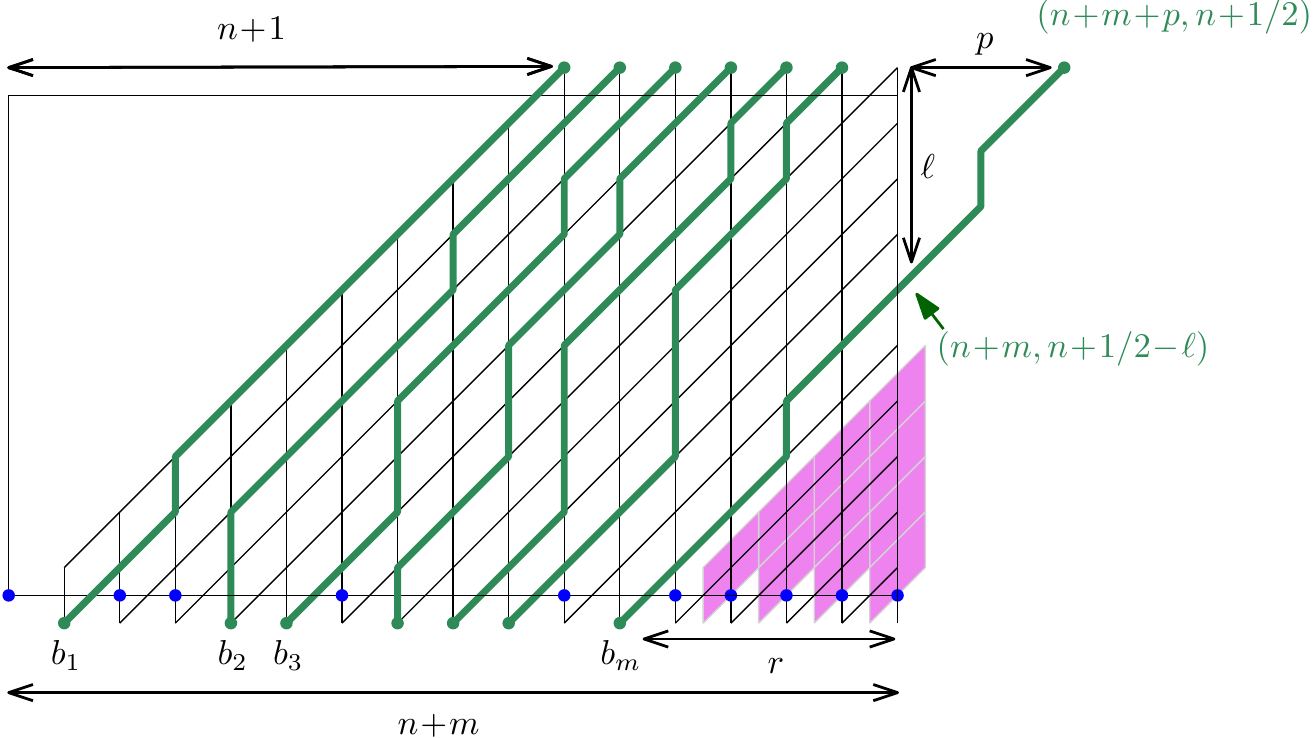}
\end{center}
\caption{The tangent method applied to NILP made of north- and northeast-oriented step paths with $b_m=a_n-r=m+n-r$, i.e.\ with a freezing boundary 
of linear size $r$ on the right of the lower boundary of the domain $D$ (displayed here as a rectangle) creating a frozen triangular region made of right tiles only. Moving the endpoint of the rightmost path from $(n+m,n+1/2)$ to $(n+m+p,n+1/2)$ with $p\in \Z_+$  forces the rightmost path to escape 
from the domain $D$ at some point $(n+m,n+1/2-\ell)$.}
\label{fig:Hnml}
\end{figure}
Let us now apply the tangent method to the configurations of north- and northeast-oriented step paths with the frozen boundary 
$b_m=a_n-r=m+n-r$, by moving the 
endpoint of the rightmost path from $(n+m,n+1/2)$ to another point on the right $(n+m+p,n+1/2)$, $p\geq 0$. This induces an escape of the rightmost path
from the domain $D$ at a point $(n+m,n+1/2-\ell)$ (see Figure \ref{fig:Hnml} for an illustration). As usual, the corresponding one-point function reads:
${\hat H}_{n,m,\ell}=\frac{ {\hat U}'_{m,m}}{{\hat U}_{m,m}}$,
where
${\hat U'}={\hat L}^{-1} {\hat A}'$, ${\hat A}'$ the LGV matrix for the configurations with an escaping path, with entries:
\begin{equation}
{\hat A}'_{i,j}=\left\{ \begin{split}
& {\hat A}_{i,j} \qquad\qquad\qquad {\rm for}\ 1\leq j<m \\
& {n-\ell+1\choose n+m-b_i} \ \quad {\rm for}\ j=m \ .
\end{split}
\right.
\end{equation}

\begin{thm}
The one-point function ${\hat H}_{n,m,\ell}$ reads
\begin{equation}\label{hatH}
{\hat H}_{n,m,\ell}=\frac{\displaystyle \prod_{s=1}^m (n+m-b_s)}{\displaystyle {n+m\choose n-\ell+1}}
\oint_{{\mathcal C}(b_1,b_2,\dots,b_m)} \frac{dt}{2{\rm i}\pi}
\frac{1}{(m\!+\!n\!-\!t)} \, \prod_{s=1}^m\frac{1}{(t-b_s)} \, \frac{\displaystyle \prod_{s=0}^{m\!+\!\ell\!-\!2}(t-s)}{\displaystyle  (m\!+\!\ell\!-\!1)!}\ ,
\end{equation}
where the contour leaves the point $m+n$ out.
\end{thm}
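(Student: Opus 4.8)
The plan is to compute $\hat H_{n,m,\ell}=\hat U'_{m,m}/\hat U_{m,m}$ directly, reusing the explicit $\hat L^{-1}$ and the diagonal entries $\hat U_{i,i}=\binom{n+i}{b_i}\prod_{s=1}^{i-1}\frac{b_i-b_s}{n+i-b_s}$ obtained above. Since $\hat A'$ differs from $\hat A$ only in its last column, $\hat U'=\hat L^{-1}\hat A'$ agrees with $\hat U$ except in the last column, so $\hat U'_{m,m}=\sum_{k=1}^m \hat L^{-1}_{m,k}\,\hat A'_{k,m}=\sum_{k=1}^m \hat L^{-1}_{m,k}\binom{n-\ell+1}{n+m-b_k}$. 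First I would specialize the formula for $\hat L^{-1}_{i,j}$ to $i=m$: there $m+1-i=1$, so the two inner binomials collapse to $\binom{n+m-b_m}{1}=n+m-b_m$ and $\binom{n+m-b_k}{1}=n+m-b_k$, giving $\hat L^{-1}_{m,k}=\frac{\binom{n+m}{b_m}(n+m-b_m)}{\binom{n+m}{b_k}(n+m-b_k)}\,\frac{\prod_{s=1}^{m-1}(b_m-b_s)}{\prod_{s=1,\,s\neq k}^{m}(b_k-b_s)}$.

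The core of the computation is to simplify the $k$-dependent factor $\frac{1}{\binom{n+m}{b_k}(n+m-b_k)}\binom{n-\ell+1}{n+m-b_k}$. Writing everything in factorials, the $(n+m-b_k)!$ cancels and this collapses to $\frac{(n-\ell+1)!}{(n+m)!}\,\frac{\prod_{s=0}^{m+\ell-2}(b_k-s)}{n+m-b_k}$, where I use $\frac{b_k!}{(b_k-m-\ell+1)!}=\prod_{s=0}^{m+\ell-2}(b_k-s)$. Pulling the $k$-independent prefactor $\binom{n+m}{b_m}(n+m-b_m)\prod_{s=1}^{m-1}(b_m-b_s)$ out of the sum, what remains is $\sum_{k=1}^m \frac{g(b_k)}{\prod_{s\neq k}(b_k-b_s)}$ with $g(t)=\frac{(n-\ell+1)!}{(n+m)!}\frac{\prod_{s=0}^{m+\ell-2}(t-s)}{n+m-t}$. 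I would then apply the \emph{residue-to-contour} trick used throughout the paper, rewriting this sum as $\oint_{\mathcal C(b_1,\dots,b_m)}\frac{dt}{2{\rm i}\pi}\frac{g(t)}{\prod_{s=1}^m(t-b_s)}$.

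Two points need checking at this stage, and I expect them to be the only subtle ones. The contour $\mathcal C(b_1,\dots,b_m)$ must exclude the extra pole of $g$ at $t=n+m$, which is guaranteed by the paper's contour convention (this is the ``leaves the point $m+n$ out'' clause). Moreover, the original sum only runs over those $b_k$ for which the binomial $\binom{n-\ell+1}{n+m-b_k}$ is nonzero, i.e.\ $b_k\geq m+\ell-1$; the contour form must reproduce this automatically, which it does because $\prod_{s=0}^{m+\ell-2}(t-s)$ vanishes at every integer $t\in\{0,\dots,m+\ell-2\}$, killing the unwanted residues. Finally, dividing by $\hat U_{m,m}=\binom{n+m}{b_m}\prod_{s=1}^{m-1}\frac{b_m-b_s}{n+m-b_s}$ cancels $\binom{n+m}{b_m}\prod_{s<m}(b_m-b_s)$ and moves $\prod_{s<m}(n+m-b_s)$ to the numerator, where it combines with the leftover $(n+m-b_m)$ into $\prod_{s=1}^m(n+m-b_s)$; rewriting the constant $\frac{(n-\ell+1)!}{(n+m)!}=\frac{1}{\binom{n+m}{n-\ell+1}(m+\ell-1)!}$ then yields \eqref{hatH} exactly. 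The hard part is purely the factorial bookkeeping in the middle step; the contour manipulation and the cancellations against $\hat U_{m,m}$ are routine given the machinery already set up.
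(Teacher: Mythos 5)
Your proposal is correct and follows essentially the same route as the paper: compute $\hat U'_{m,m}=\sum_k \hat L^{-1}_{m,k}\binom{n-\ell+1}{n+m-b_k}$, simplify the binomial ratio via the factorial identity $\frac{\binom{n-\ell+1}{n+m-b_k}}{\binom{n+m}{b_k}}=\frac{1}{\binom{n+m}{n-\ell+1}}\frac{\prod_{s=0}^{m+\ell-2}(b_k-s)}{(m+\ell-1)!}$, convert the sum over residues to a contour integral excluding $t=n+m$, and note that the factor $\prod_{s=0}^{m+\ell-2}(t-s)$ kills the residues at $b_k<m+\ell-1$ so the contour may encircle all the $b_s$ (the paper converts over the restricted set $\{b_s\,\vert\, b_s\geq m+\ell-1\}$ first and then extends, while you go directly to the full contour — an immaterial reordering). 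The final cancellation against $\hat U_{m,m}$ producing $\prod_{s=1}^m(n+m-b_s)$ matches the paper's computation exactly.
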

\begin{proof}
We compute
\begin{eqnarray*}
{\hat H}_{n,m,\ell}&=&\frac{1}{{\hat U}_{m,m}}\sum_{k=1}^m ({\hat L}^{-1})_{m,k} {n-\ell+1\choose n+m-b_k}\\
&=& \prod_{s=1}^{m}(n+m-b_s) \sum_{k=1}^m \frac{\displaystyle {n-\ell+1 \choose n+m-b_k}}{\displaystyle {n+m\choose b_k}}
\frac{1}{(n+m-b_k)\ \displaystyle{ \prod_{s=1\atop s\neq k}^m {(b_k-b_s)}}}\ ,\\
\end{eqnarray*}
where only those values of $k$ for which $b_k\geq m+\ell-1$ contribute to the sum (recall also  that $b_k< a_n=m+n$ for all $k$) .  Using
\begin{equation*}
\frac{\displaystyle {n-\ell+1 \choose n+m-b_k}}{\displaystyle {n+m\choose b_k}}=\frac{1}{\displaystyle{{n+m\choose n-\ell+1}}}\frac{\prod
\limits_{s=0}^{m+\ell-2}(b_k-s)}{(\ell+m-1)!}\ ,
\end{equation*}
we may thus write
\begin{equation*}
{\hat H}_{n,m,\ell}=\frac{\displaystyle \prod_{s=1}^m (n+m-b_s)}{\displaystyle {n+m\choose n-\ell+1}} \oint_{{\mathcal C}(\{b_s\vert b_s\geq m+\ell-1\})} \frac{dt}{2{\rm i}\pi}\frac{1}{(n+m-t)} \, \prod_{s=1}^m\frac{1}{(t-b_s)} \, \frac{\displaystyle \prod_{s=0}^{m+\ell-2}(t-s)}{\displaystyle  (m+\ell-1)!}
\ .
\end{equation*}
We may harmlessly extend the integral contour
so as to encompass all the $b_s$, as all the extra poles at $b_s<m-\ell-1$ have vanishing residues 
(due to the factor $\prod_{s=0}^{m+\ell-2}(t-s)$), and the formula \eqref{hatH} follows.
\end{proof}

The partition function for the single path from the escape point $(n+m,n+1-\ell)$, starting with a northeast-oriented step,
and ending at the target point $(n+m+p,n+1)$ is simply
\begin{equation}\label{hatY}
{\hat Y}_{p,\ell}={\ell-1\choose p-1}\ .
\end{equation}
Note in particular the condition $\ell\geq p$ (which is saturated only if all steps taken by the path are of the
northeast type).

\subsubsection{Asymptotic analysis}

For large $n$, we use the scaling $m=\mu n$, $r=\rho n$, $p=w n$, $\ell=\hat\xi n$, 
and $b_i=n\beta(i/n)$ with a piecewise differentiable function $\beta(u)$
with $\beta'(u)\geq 1$ when defined. Moreover the freezing condition implies that $b_m=a_n-r=n+m-r$, hence $\beta(\mu)=1+\mu-\rho$,
with $\rho>0$.

\begin{thm}\label{arcticthreethm}
The tangent method for the case of a target endpoint to the east of $D$ and an escape point on the right boundary of $D$, leads to the following portion of arctic curve:
\begin{equation}\label{arcticthree}
\left\{ \begin{matrix}
X=X(t):= & t- \frac{\displaystyle{x(t)(1-x(t))}}{\displaystyle{x'(t)}} \hfill\\
& \\
Y=Y(t):= &\frac{\displaystyle{(1-x(t))^2}}{\displaystyle{x'(t)}}\hfill
\end{matrix}\right. \qquad (t\in [1+\mu-\rho,1+\mu])\ ,
\end{equation}
with 
\begin{equation}\label{defy}
x(t)= -\frac{1+\mu-t}{t\, y(t)}\, ,\quad y(t)=e^{-\textstyle\int_0^\mu \frac{du}{t-\beta(u)}}\ . 
\end{equation}
\end{thm}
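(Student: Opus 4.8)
The plan is to run the same saddle-point machinery used for Theorems \ref{arcticonethm} and \ref{arctictwothm}, now fed by the exact one-point function \eqref{hatH} and the escape factor \eqref{hatY}. First I would insert the scaling $m=\mu n$, $r=\rho n$, $p=wn$, $\ell=\hat\xi n$, $b_i=n\beta(i/n)$ and extract the leading exponential behaviour, writing ${\hat H}_{n,m,n\hat\xi}\sim\oint\frac{dt}{2\mathrm{i}\pi}\,e^{n{\hat S}_0(t,\hat\xi)}$ (after the rescaling $t\to nt$) and ${\hat Y}_{nw,n\hat\xi}\sim e^{n{\hat S}_1(\hat\xi,w)}$. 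Stirling's formula applied to the prefactor $\prod_s(n+m-b_s)/\binom{n+m}{n-\ell+1}$, together with the conversion of the products $\prod_s 1/(t-b_s)$ and $\prod_{s=0}^{m+\ell-2}(t-s)$ into the integrals $-\int_0^\mu\log(t-\beta(u))\,du$ and $\int_0^{\mu+\hat\xi}\log(t-v)\,dv$, produces the two actions, all spurious $\log n$ terms cancelling, with ${\hat S}_1(\hat\xi,w)=\hat\xi\log\hat\xi-w\log w-(\hat\xi-w)\log(\hat\xi-w)$ coming directly from the binomial in \eqref{hatY}. Here I would emphasize that the factor $1/(m+n-t)$ is merely a subleading prefactor of the contour integral and therefore does not displace the saddle.

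The saddle in the integration variable is fixed by $\partial_t{\hat S}_0=0$. A short computation gives $\partial_t{\hat S}_0=-\int_0^\mu\frac{du}{t-\beta(u)}+\log\frac{t}{t-\mu-\hat\xi}$, so the saddle condition reads $\frac{t}{t-\mu-\hat\xi}=1/y(t)$ with $y(t)$ as in \eqref{defy}; equivalently $t-\mu-\hat\xi=t\,y(t)$, which expresses $\hat\xi=t(1-y)-\mu$ as a function of $t$. The most likely escape height is then selected by $\partial_{\hat\xi}{\hat S}=\partial_{\hat\xi}({\hat S}_0+{\hat S}_1)=0$, i.e. $\log\frac{t-\mu-\hat\xi}{1-\hat\xi}+\log\frac{\hat\xi}{\hat\xi-w}=0$, which after inserting $t-\mu-\hat\xi=t\,y$ reduces to $\frac{t\,y\,\hat\xi}{(1-\hat\xi)(\hat\xi-w)}=1$ and thereby fixes $w=w(t)$. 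Using $\hat\xi=t(1-y)-\mu$ I would record the two identities $1-\hat\xi=(1+\mu-t)+t\,y$ and $1-\hat\xi-t\,y=1+\mu-t$, which are the algebraic engine of the last step.

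It then remains to assemble the tangent line and take its envelope exactly as before. In rescaled coordinates the escape point is $(1+\mu,\,1-\hat\xi)$ and the displaced endpoint is $(1+\mu+w,\,1)$, so the line through them has slope $\hat\xi/w$ and, by the identities above, $X$-intercept precisely $t$. Substituting $w(t)=\hat\xi\,\frac{1+\mu-t}{(1+\mu-t)+t\,y}$, I would check that this line is nothing but $x(t)\,Y+(1-x(t))\,(X-t)=0$ provided one sets $x(t)=-\frac{1+\mu-t}{t\,y(t)}$, i.e. the very definition \eqref{defy}: indeed its slope $\hat\xi/w=\frac{(1+\mu-t)+t\,y}{1+\mu-t}$ equals $(x-1)/x$ for this $x$. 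Since the tangent family has the identical universal form \eqref{eq:tgeq} as in the two previous cases, its envelope—obtained by solving that equation together with its $t$-derivative—reproduces verbatim the parametric equations \eqref{arcticthree}, so nothing new need be computed for the envelope itself.

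The main obstacle is the asymptotic bookkeeping of the first step: correctly Stirling-approximating the shifted product $\prod_{s=0}^{m+\ell-2}(t-s)$ and the factorial $(m+\ell-1)!$, verifying that the $O(n\log n)$ contributions cancel, and—most delicately—confirming that the real saddle $t^\ast$ lands in the admissible window $t\in[1+\mu-\rho,\,1+\mu]=[\beta(\mu),\,1+\mu]$, with the contour (which encircles $b_1,\dots,b_m$ but leaves out the pole at $t=1+\mu$) deformable through it. Once the dust settles the window endpoints are transparent: as $t\to 1+\mu$ one has $1+\mu-t\to0$ with $y(t)$ finite, so $x(t)\to0$, pinning one end at the rightmost point of the freezing segment; as $t\to\beta(\mu)$ the integral in \eqref{defy} diverges, so $y(t)\to0$ and $x(t)\to-\infty$, matching the base of the frozen triangular germ of Figure \ref{fig:Hnml}. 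Throughout the window $x(t)$ is thus real and negative, exactly as anticipated by the conjectural picture of Section \ref{revisited} that the freezing portions correspond to negative real values of $x(t)$.
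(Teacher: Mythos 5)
Your proposal is correct and follows essentially the same route as the paper's own proof: the same rescaling and saddle-point analysis of \eqref{hatH}--\eqref{hatY}, the same two stationarity equations yielding $\hat\xi(t)=t(1-y(t))-\mu$ and $w(t)=\hat\xi(t)\,\frac{1+\mu-t}{(1+\mu-t)+t\,y(t)}$, and the same identification of the resulting tangent family with the universal form \eqref{eq:tgeq} via $x(t)=-\frac{1+\mu-t}{t\,y(t)}$, so that the envelope computation carries over verbatim on the window $t\in[1+\mu-\rho,1+\mu]$. Your supplementary checks (cancellation of $\log n$ terms, the contour leaving out the pole at $1+\mu$, and the endpoint behaviors $x\to 0^-$ and $x\to-\infty$) mirror the paper's own discussion surrounding the theorem.
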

\begin{proof}
From the explicit expressions \eqref{hatH} and \eqref{hatY} for ${\hat H}_{n,m,\ell}$
and ${\hat Y}_{p,\ell}$, we may infer the scaling limits
$$ {\hat H}_{n,\mu n,\hat\xi  n}\sim \oint 
\frac{dt}{2{\rm i}\pi} e^{n{\hat S}_0(t,\hat\xi)} ,\quad {\hat Y}_{w n,\hat\xi n}\sim e^{n{\hat S}_1(\hat\xi)} \ ,$$
where we have performed the customary redefinition $t\to n t$,. The contour, which, before rescaling, encompasses all the $b_i$'s but leaves the point $(n+m)$ out, must encircle the real segment $[0,1+\mu-\rho]$ but \emph{leave
the point $(1+\mu)$ out}, i.e.\  cross the real axis strictly inside the segment $[1+\mu-\rho,1+\mu]$ as well as on the negative real axis $(-\infty,0]$. 
Here we have
\begin{eqnarray*}
{\hat S}_0(t,\hat\xi)&=&\int_0^\mu{\rm Log}\left( \frac{1+\mu-\beta(u)}{t-\beta(u)}\right)+t{\rm Log}(t)-(t-\mu-\hat\xi){\rm Log}(t-\mu-\hat\xi)\\
&&-(1+\mu){\rm Log}(1+\mu)+(1-\hat\xi){\rm Log}(1-\hat\xi) \\
{\hat S}_1(\hat\xi)&=& \hat\xi{\rm Log}(\hat\xi)- w {\rm Log}(w)-(\hat\xi-w){\rm Log}(\hat\xi-w)\ .
\end{eqnarray*}
The saddle-point and maximum equations $\partial_t {\hat S}_0=\partial_{\hat\xi} ({\hat S}_0+{\hat S}_1)=0$ lead to
\begin{equation*}
y(t) \, \frac{t}{(t-\mu-\hat\xi)} =1\ ,\quad \frac{(t-\mu-\hat\xi)\hat\xi}{(1-\hat\xi)(\hat\xi-w)}=1\ ,
\end{equation*}
where $y(t)$ is as in \eqref{defy}.
We find the solution
\begin{equation}\label{solhat}
\hat\xi(t)=t-\mu -t\, y(t)\ ,\quad w(t)= \frac{t(1-y(t))-\mu}{1+\mu -t(1-y(t))}\, (1+\mu-t)\ .
\end{equation}
As just mentioned, the contour of integration in $t$ must cross the real axis strictly inside the segment $[1+\mu-\rho,1+\mu]$ and on the negative real axis $(-\infty,0]$.
The saddle-point solution must have $t=(1-\hat\xi)(\hat\xi-w)/\hat\xi+\mu+\hat\xi>0$, as $\hat\xi\geq w$ 
(from the condition $\ell\geq p$), and $0\leq \hat\xi < 1$. The range of validity of \eqref{solhat} is therefore for $t\in [1+\mu-\rho,1+\mu]$.
The tangent line through the rescaled points $(1+\mu,1-\xi(t))$ and $(1+\mu+w(t),1)$ has the equation
\begin{equation}\label{newtan}
w(t)\, Y-\hat\xi(t)\, (X-t)=0 
\end{equation}
We may compare this result with that of Eqs.~\eqref{arcticone} and \eqref{arctictwo}.
Introducing the quantity $x(t)$ defined by \eqref{defy}, we may express
\begin{equation*}
\hat\xi(t)= t-\mu -\frac{t-1-\mu}{x(t)}, \quad w(t)=\hat\xi(t) \, \frac{x(t)}{x(t)-1}\ ,
\end{equation*}
which allows to identify the parametric representation \eqref{newtan} for the tangents with that \eqref{eq:tgeq} obtained in Section \ref{asymptoone},
or that \eqref{eq:tgeqtwo} obtained in Section \ref{asymptotwo}. 
We deduce that the arctic curve has the {\it same parametric expression} in terms of $t$ and $x(t)$
as before, and Theorem \ref{arcticthreethm} follows.
\end{proof}

To relate the function  $x(t)$ of Theorem \ref{arcticthreethm} to that given by \eqref{defx}, we use again the complementarity of the 
$a$'s and $b$'s which implies that 
$\sum_{i=0}^n\frac{1}{t-a_i}+\sum_{j=1}^m \frac{1}{t-b_j}=\sum_{i=0}^{n+m}\frac{1}{t-i}$. This 
leads immediately to
\begin{equation*}
e^{-\textstyle\int_0^1 \frac{du}{t-\al(u)}}\ e^{-\textstyle\int_0^\mu \frac{du}{t-\beta(u)}}=  \frac{t-1-\mu}{t}
\end{equation*}
which allows to identify the quantity $x(t)$ defined by \eqref{defy} to that defined by \eqref{defx} when both 
terms are well-defined (and positive), i.e.\ for $t\leq0$ or $t\geq \al(1)=1+\mu$. Eq.\eqref{defy} allows to extend the definition
of $x(t)$ to values of $t>\beta(\mu)=1+\mu-\rho$, i.e.\ to the new domain $[1+\mu-\rho,1+\mu]=[\al(1)-\rho,\al(1)]$. This 
corresponds to an analytic continuation of $x(t)$ in this
interval, leading to real values $x(t)\leq 0$ (from \eqref{defy}, as $y(t)>0$), a scheme which matches precisely 
that described in Section \ref{revisited} to extend the arctic curve for a freezing boundary creating a frozen domain of type R.
The analytic continuation of $x(t)$ may be obtained directly from the original definition \eqref{defx} of $x(t)$ which states that, for $t\geq \al(1)=1+\mu$,
\begin{equation}
x(t)=e^{\textstyle -\int_{0}^{1-\rho} \frac{du}{t-\al(u)} -\int_{1-\rho}^1 \frac{du}{t-\al(u)}} =-
\frac{1+\mu-t}{t-1-\mu+\rho}\, e^{-\textstyle \int_{0}^{1-\rho} \frac{du}{t-\al(u)} }
\label{newdefx}
\end{equation}
where we have used the freezing condition that $\al(u)=u+\mu$ on the segment $[1-\rho,1]$ (with $\mu=\al(1)-1$).
The last expression above allows to define $x(t)$ for $t\geq \al(1-\rho)=\al(1)-\rho=1+\mu-\rho$ and is equivalent to the definition \eqref{defy}
(as easily deduced from the identity $\sum_{i=0}^{n-r}\frac{1}{t-a_i}+\sum_{j=1}^m \frac{1}{t-b_j}=\sum_{i=0}^{n+m-r}\frac{1}{t-i}$). 
When $t$ increases from $1+\mu-\rho$ to $1+\mu$, $x(t)$ increases from $-\infty$ to $0$, or equivalently the slope $(x(t)-1)/x(t)$
of the tangent increases from $1$ to $+\infty$.

Let us examine the extremities of the new portion of arctic curve. For $t\to (1+\mu)^-$, writing $t=1+\mu-\epsilon$ in \eqref{newdefx} yields $x(t)\underset{\epsilon\to 0+}{\sim} -\epsilon\,  C$ with $C=\frac{1}{\rho} e^{-\int_0^{1-\rho}\frac{du}{1+\mu-\al(u)}}$, which yields 
\begin{equation*}
\begin{split}
&X(t)\underset{t\to (1+\mu)^-}{\to} 1+\mu =\al(1)=X_0\\
&Y(t)\underset{t\to (1+\mu)^-}{\to} \rho \ e^{\textstyle{\int_0^{1-\rho}\frac{du}{1+\mu-\al(u)}}}=e^{\textstyle{\int_0^{1-\rho}du \left\{\frac{1}{\al(1)-\al(u)}-\frac{1}{1-u}\right\}}}\\
& \qquad \qquad =e^{\textstyle{\int_0^{1}du \left\{\frac{1}{\al(1)-\al(u)}-\frac{1}{1-u}\right\}}} =Y_0
\end{split}
\end{equation*}
with $X_0$ and $Y_0$ as in \eqref{eq:X0Y0} (again we used $\al(1)-\al(u)=1-u$ for $u\in[1-\rho,1]$). The new portion of the arctic curve therefore connects to the previous known portion at $(X_0,Y_0)$.
For $t\to 1+\mu-\rho =\beta(\mu)$,  writing $t=\beta(\mu)+\eta$ and letting $\eta\to 0+$, we have $y(t)\simeq \eta^{1/\beta'(\mu)}$ (with some unimportant multiplicative constant), hence $x(t) \simeq \eta^{-1/\beta'(\mu)}$ and  $x'(t) \simeq \eta^{-1/\beta'(\mu)-1}$, leading to
$X(t)\to (1+\mu-\rho)$ and $Y(t)\to 0$ in the generic case $\beta'(\mu)>1$. The extremity of the new portion is thus at $(1+\mu-\rho,0)$, as expected.
\medskip

To summarize this section, the explicit computation above proves our conjecture of Section \ref{revisited} in the particular case where the freezing 
occurs on the right edge of the lower boundary of $D$. Clearly, the freezing of the left edge is amenable to the same exact calculation by 
a simple application of the reflection principle of Section \ref{reflection}, thus proving the conjecture in this case as well. 

\subsection{Examples}

\subsubsection{Fully frozen boundaries}
We display here examples where the boundary is fully frozen, namely where the distribution of starting points $a_i$
alternates between macroscopic portions with $a_{i+1}-a_i=1$ and macroscopic gaps with no $a$'s.
In turn, this corresponds to
piecewise linear $\al(u)$ with pieces corresponding exclusively of $p=1$ and $p=\infty$ portions. In general, 
we consider $2k-1$ positive numbers $\gamma_1,\gamma_2,\dots,\gamma_k,\delta_1,\delta_2,\dots,\delta_{k-1}$, 
together with $\delta_0=\gamma_0=0$ and such that
$\sum_{i=1}^k \gamma_i=1$. As before we introduce the quantities $\varphi_i:=\sum_{j=0}^{i} \gamma_i$, $i=0,1,\dots,k$,
with $\varphi_0=0$ and $\varphi_k=1$, as well as $\theta_{2i}=\sum_{j=0}^i (\gamma_j+\delta_j)$ and 
$\theta_{2i+1}=\theta_{2i}+\gamma_{i+1}$ for $i=0,1,\dots,k-1$.
We have for $i=1,2,\dots,k$:
$$ \al(u) = u+\sum_{j=1}^{i-1}\delta_{j} \qquad (u\in [\varphi_{i-1},\varphi_i))\ . $$
This immediately gives:
$$ x(t)= \prod_{j=1}^k \frac{t-\theta_{2j-1}}{t-\theta_{2j-2}}\ . $$

The simplest non-trivial example is for $k=2$. Let us denote $\gamma_1=a$, $\delta_1=b$ and $\gamma_2=c=1-a$.
The path problem is then equivalent (up to a simple shear/dilation) to that of the rhombus tiling of a hexagon 
with edge lengths $na,nb,nc$ (see Figure \ref{fig:ellipse}), and the arctic curve is
well known to be an ellipse. Noting that
\begin{equation*}
\left\{
\begin{split} 
X(t)&=(a+c)\frac{\big(a (a + b + c) - (a + c) t\big)^2 + b c t^2}{\big(a (a + b + c) - (a + c) t \big)^2+a b c\, (a+b+c)}\\
Y(t)&=(a+c)\frac{\big(a (a + b + c) - (a + c) t\big)^2}{\big(a (a + b + c) - (a + c) t\big)^2+a b c\, (a+b+c)} 
\end{split}
\right.
\end{equation*}
and eliminating $t$, we indeed find the equation of the arctic ellipse:
$$
\big((c-b) Y - (a + c) X+a(a+b+c)\big)^2+4 bc\, Y(Y-X)=0\ .
$$
\begin{figure}
\begin{center}
\includegraphics[width=13cm]{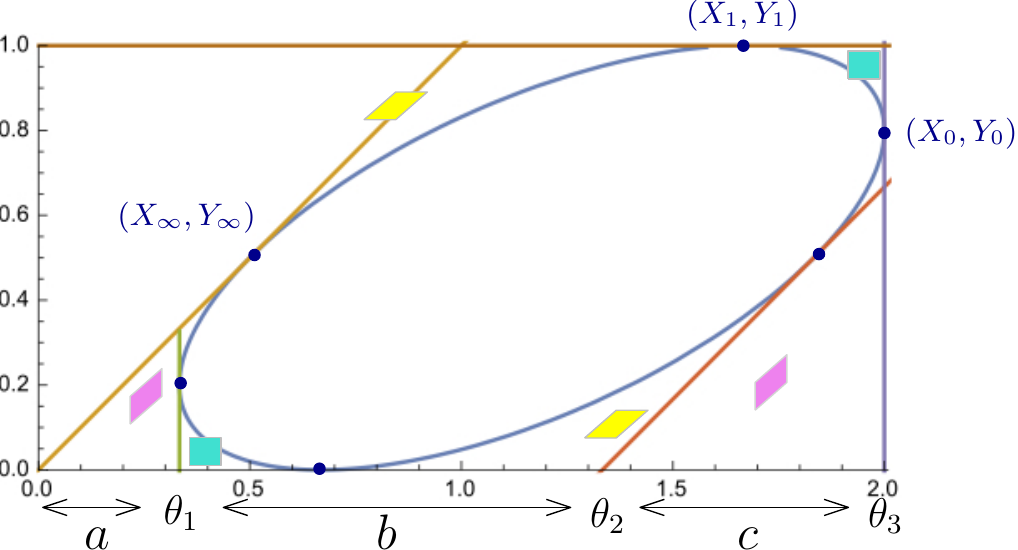}
\end{center}
\caption{The complete arctic curve when $\al(u)$ is a piecewise linear function made of two pieces of width $a$ and $c$ and slope $1$, with a discontinuity by $b$ in-between (here for $a=1/3$, $b=1$ and $c=2/3$). 
The pieces of slope $1$ give rise to freezing boundaries on the left and on the right of the lower boundary of $D$ and create frozen domains 
of type R, while the discontinuity gives rise to a central freezing boundary with a gap along the segment $[\theta_1,\theta_2]$, creating two frozen domains of respective type F and U. 
The resulting arctic curve is an ellipse, as expected since, up to a shear, the path/tiling problem is equivalent to that of the rhombus tiling of a hexagon with edge lengths $na,nb,nc$. }
\label{fig:ellipse}
\end{figure}
The case $a=\frac{1}{3}$, $b=1$, $c=\frac{2}{3}$ is represented in Figure \ref{fig:ellipse}.

\begin{figure}
\begin{center}
\includegraphics[width=15cm]{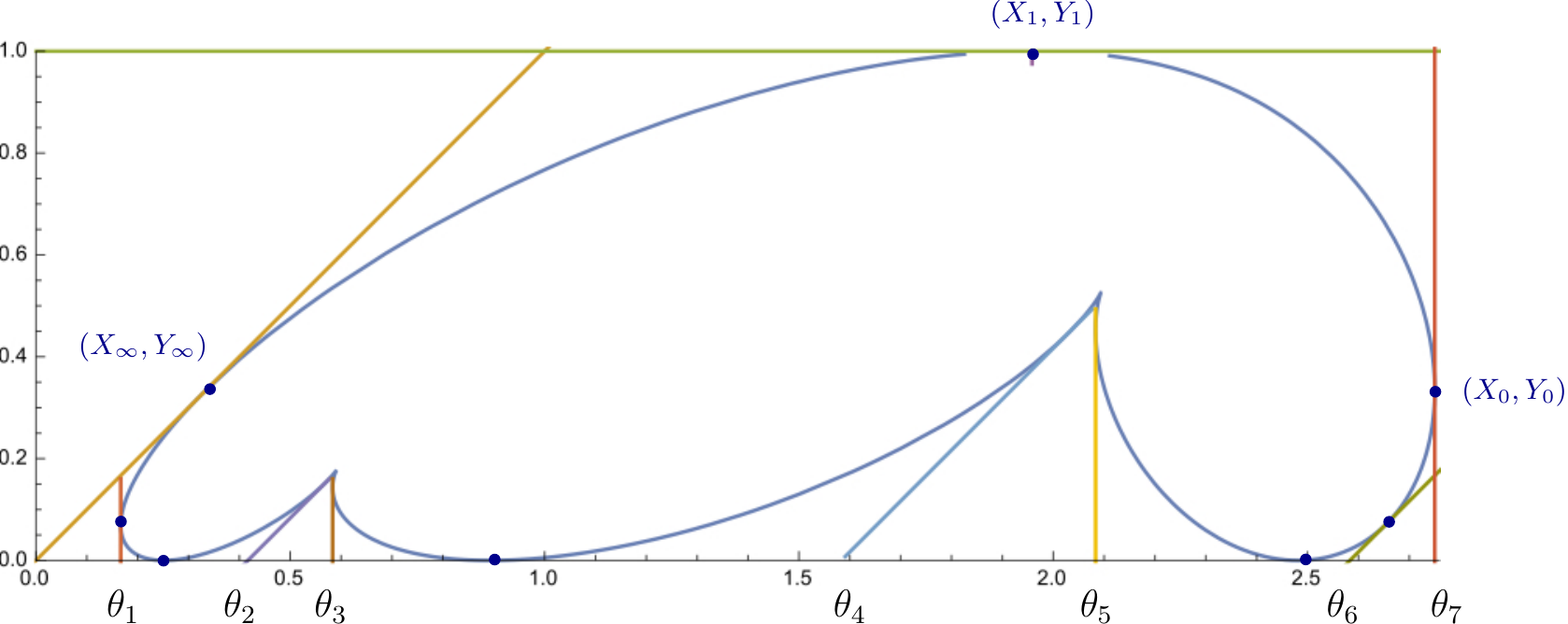}
\end{center}
\caption{The complete arctic curve when $\al(u)$ is a (discontinuous) piecewise linear function consisting of an alternation of pieces with slope $1$
of widths $\gamma_1=\gamma_2=\gamma_4=\frac{1}{6}$, $\gamma_3=\frac{1}{2}$ and of gaps $\delta_1=\frac{1}{4}$, $\delta_2=1$ and $\delta_3=\frac{1}{2}$. This results in an alternation of frozen domains adjacent to the lower boundary of the domain $D$.}
\label{fig:freeze}
\end{figure}
We display a more involved case with $k=4$ in Figure \ref{fig:freeze}, with $\gamma_1=\gamma_2=\gamma_4=\frac{1}{6}$, $\gamma_3=\frac{1}{2}$, 
$\delta_1=\frac{1}{4}$, $\delta_2=1$ and $\delta_3=\frac{1}{2}$, so that
$\theta_1=\frac{1}{6}$, $\theta_2=\frac{5}{12}$, $\theta_3=\frac{7}{12}$, $\theta_4=\frac{19}{12}$, $\theta_5=\frac{25}{12}$, $\theta_6=\frac{31}{12}$, $\theta_7=\frac{11}{4}$.

\subsubsection{Mixed boundaries}

We now consider a ``mixed" boundary case, with:
$$\al(u)=\left\{ \begin{split}
& u+u^2\ {\rm for}\ u\in \left[0,1/2\right]\\
& 1+u \ \ \ {\rm for} \ u\in \left(1/2,1\right]\ .
\end{split} \right. 
$$
This combines a non-linear distribution on $[0,\frac{3}{4}]$, a gap with no $a$'s on $(\frac{3}{4},\frac{3}{2}]$, and a frozen
boundary with $\al'(u)=1$ on $(\frac{3}{2},2]$.
The corresponding $x(t)$ reads
$$
x(t)=\frac{t-2}{t-\frac{3}{2}}\left(\frac{1-4t+\sqrt{1+4t}}{1-4t-\sqrt{1+4t}}\right)^{\frac{1}{\sqrt{1+4t}}}
$$
\begin{figure}
\begin{center}
\includegraphics[width=13cm]{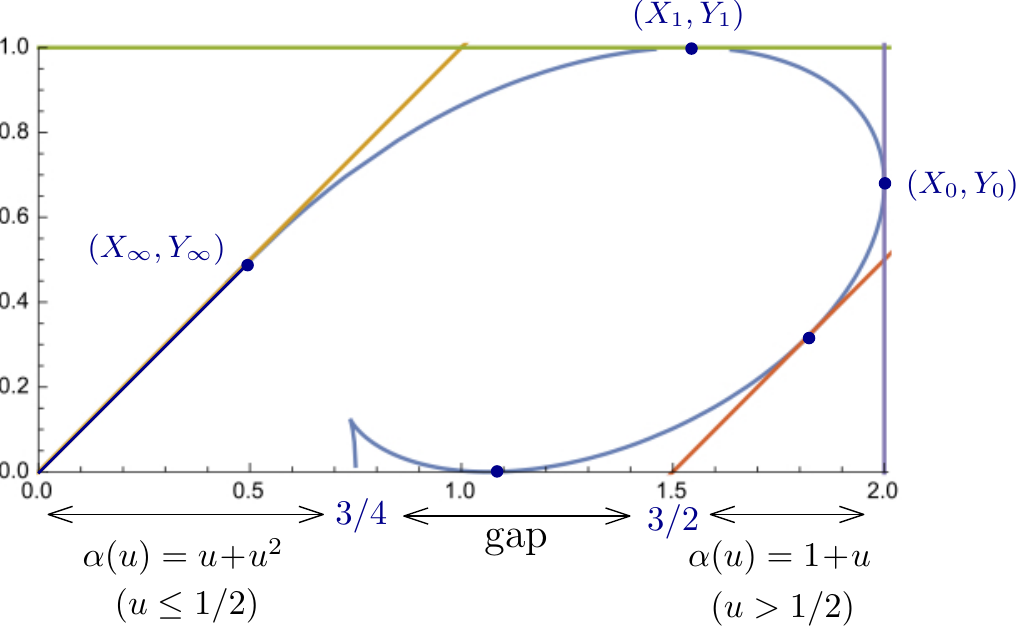}
\end{center}
\caption{The complete arctic curve when $\al(u)=u+u^2$ for $u\leq 1/2$ and $\al(u)=1+u$ for $u>1/2$, hence with a gap of linear size $3/2-3/4=3/4$.}
\label{fig:mixed}
\end{figure}
and the associated arctic curve is represented in Figure \ref{fig:mixed}.

\section{Conclusion}\label{conclusec}

\subsection{Summary and discussion}
In this paper, we have studied non-intersecting path models in the $\Z^2$ lattice with fixed arbitrary starting points 
along the $X$-axis. These fixed positions $a_0,a_1,...,a_n$ are described in the scaling limit $n\to\infty$
by a single piecewise differentiable increasing function $\al(u)$ with $\al'(u)\geq 1$ when defined, 
such that $a_i\sim n\al(i/n)$ for large $i,n$ with $i/n\to u$.  
Our main result \eqref{mainresult} is a parametric
expression \eqref{mainresult} of the arctic curve for the large $n$ asymptotic path model, involving some 
function $x(t)$ directly related to $\al(u)$ via \eqref{defx} (or its analytic continuation via \eqref{defy}). Several portions of the arctic curve are obtained from several 
intervals in the variable $t$. Explicit calculations were performed for three portions: two generic ones and one arising
in the presence of a freezing edge. We also analyzed, without explicit derivation, the shape of new portions induced by more
general freezing boundaries. 

It is interesting to better understand the meaning of the fundamental function $x(t)$.
First, we note that, associated to the asymptotic boundary ``shape"
is the actual \emph{distribution of starting points}, which can be defined in the finite size as:
$$\rho_n(v)=\frac{1}{n+1}\sum_{i=0}^n \delta(v-a_i)\ .$$
The limiting distribution is then defined on $[0,\al(1)]$ as
$$\rho(v)=\lim_{n\to\infty}n\,  \rho_n(nv)= \int_{0}^1 du\ \delta(v-\al(u))=\frac{1}{\al'(\al^{-1}(v))}\ ,$$
where $\al^{-1}(v)$ is the composition inverse of the function $\al(u)$ whenever well-defined. We may consequently interpret 
the quantity $x(t)$ of \eqref{defx} as giving the \emph{moment generating function (or resolvent)} of the distribution $\rho$,
namely:
\begin{equation}\label{momgen}
-{\rm Log}(x(t))=\sum_{n=0}^\infty \frac{\mu_n}{t^{n+1}}, 
\quad \mu_n=\int_0^{\al(1)} v^n\rho(v) dv=\int_0^1\al(u)^n du\ .
\end{equation}

Another remark is that the formula \eqref{mainresult} for the arctic curve may be rephrased in the language of the Legendre transformation as follows:
introducing the quantity
\begin{equation}
s(t):=\frac{x(t)}{1-x(t)}\ ,
\label{eq:soft}
\end{equation}
the equation \eqref{eq:tgeq} for the tangent line may be rewritten as 
\begin{equation*}
X=t-s(t)\, Y
\end{equation*} 
so that, if we express the arctic curve \eqref{mainresult} by its Cartesian equation $X=X(Y)$, the quantities $t$ and $s(t)$ are respectively the value at the origin ($Y=0$) 
and minus the slope of the line tangent to $X(Y)$ at the point $Y=Y(t)$. In particular, at $Y=Y(t)$, we have $s(t)=-X'(Y)$ (a relation which 
may also be checked directly from \eqref{mainresult}) and, inverting $s=s(t)$ into $t=t(s)$, we may write the above relation as 
\begin{equation*}
t(s)=X(Y(s))+s\, Y(s)\qquad \hbox{where} \ Y(s)=- X'^{-1}(s)
\end{equation*}
in terms of the composition inverse $X'^{-1}$ of the function $X'(Y)$.
This states that the function $t(s)$ is simply the Legendre transform of the function $X(Y)$ and vice versa, to that we may write as well
\begin{equation*}
X(Y)=t(s(Y))-Y\, s(Y)\qquad \hbox{where} \ s(Y)=t'^{-1}(Y)
\end{equation*}
in terms of the composition inverse $t'^{-1}$ of the function $t'(s)$.
This latter expression allows to directly get the location $X(Y)$ of the arctic curve as the Legendre transform of the function $t(s)$,
the composition inverse of $s(t)$ given by \eqref{eq:soft}. In practice, the equation $s=s(t)$ may have several solutions in $t$ so that
$t(s)$ can be made of several branches. Each branch gives in turn one branch for $X(Y)$ (recall that $X(Y)$ is made of at least two 
branches corresponding to the two generic portions of the arctic curve) or several ones with cusps if $t''(s)$ vanishes for some $s$.    

\medskip

We conclude this paper with three comments: we first give the equation for the arctic curve in modified coordinates adapted
to the rhombus tiling interpretation. We then discuss a direct geometric construction of the arctic curve inspired by the 
well-known Wulff construction for crystal shape. We end by a more technical point on some alternative use of the tangent method
consisting in moving the extremal starting point instead of the ending one.

\subsection{Rhombus tilings}
\begin{figure}
\begin{center}
\includegraphics[width=12cm]{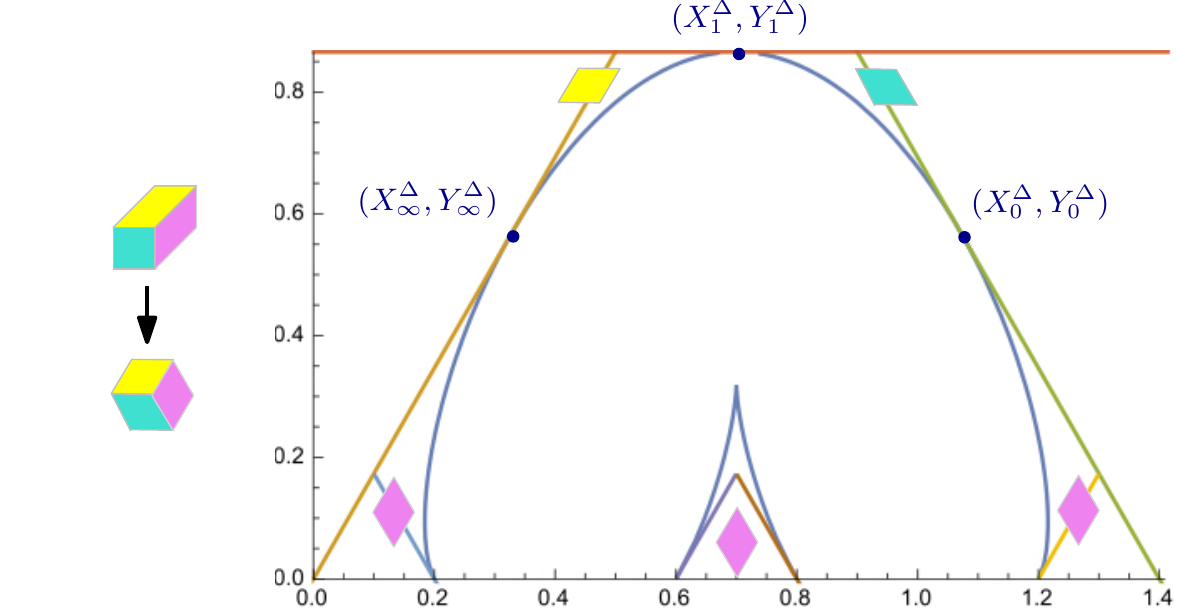}
\end{center}
\caption{The ``rectified" version of Figure \ref{fig:p1p2p1p2p1}, obtained by modifying the tiles as shown on the left. The vertical axial symmetry of the boundary condition induces a
vertical axial symmetry of the arctic curve.}
\label{fig:redressed}
\end{figure}
The problem we studied was conveniently expressed in terms of paths on the lattice $\Z^2$. However, the dual
tiling problem has the natural symmetry of the triangular lattice, the tiles being the three possible rhombi obtained by gluing pairs of adjacent triangles. All the results of this paper can be reformulated in this framework, provided we perform a change of
coordinates:
$$ (X,Y)\mapsto \left(X^{\Delta}=X-\frac{1}{2} Y, Y^{\Delta}=\frac{\sqrt{3}}{2}\, Y\right)\ . $$
Some of the symmetries observed in this paper become more manifest in this frame. 
For illustration we have represented in Figure \ref{fig:redressed} the ``rectified" version of the case of Figure \ref{fig:p1p2p1p2p1},
of a $k=5$ piecewise linear $\al(u)$, with a manifest vertical axial symmetry.

In the new coordinates, the arctic curve reads:
$$\left\{ \begin{split}
& X^{\Delta}(t)=t +\frac{x(t)^2-1}{2 \, x'(t)}  \\
& Y^{\Delta}(t)=\sqrt{3}\, \frac{(x(t)-1)^2}{2 \, x'(t)}\ .
\end{split}
\right.
$$
The corresponding parametric family of tangent lines has equation:
$$ (1+x(t))\, Y +\sqrt{3}(1-x(t))\, (X-t)=0\ .
$$

\subsection{A geometric construction}

One may wonder whether our result \eqref{mainresult} connecting the boundary conditions to the shape
of the arctic curve has a direct geometric description. It is very reminiscent indeed of the so-called Wulff
construction that relates the surface tension of a growing two-dimensional crystal to the shape of its boundaries. In that
case, the crystal is grown from an initial center $(x_0,y_0)$, with a surface tension $\sigma(\theta)$ depending on the angle $\theta$
measuring the orientation of the normal to the growing surface with respect to the microscopic crystalline axes. This surface tension
may be represented by the one-dimensional curve $r=\sigma(\theta)$ in polar coordinates centered at $(x_0,y_0)$: the shape of the boundary 
of the crystal is then (up to a global scaling) given by the envelope of the family of lines $L(\theta)$ that are normal to the radius vector
at the point $(\sigma(\theta),\theta)$ (more precisely the shape is given by the convex hull of this envelope).  

If we could interpret our family of tangent lines as arising from some Wulff construction, it would give access to
some candidate surface tension $\sigma(\theta)$. However, the problem is ill-posed, as there seems to be no
favored choice of the center $(x_0,y_0)$, and in fact if we were to think of our model as the final stage of some growth
process, it would rather start from frozen boundaries, and the status of fixed boundaries with arbitrary $\al(u)$ is unclear
in that respect.

\begin{figure}
\begin{center}
\includegraphics[width=14cm]{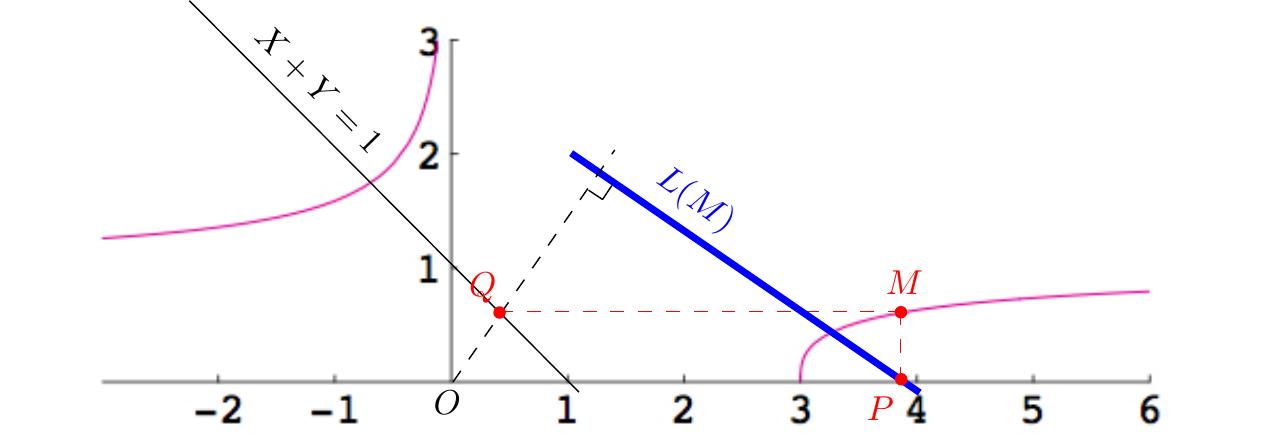}
\end{center}
\caption{Geometric construction of the line $L(M)$ from a point $M$ on the curve $(t,x(t))$ in some orthonormal frame. The line $L(M)$
(fat line) is the line orthogonal to $(OQ)$ passing through $P$. Moving the point $M$ along the curve generates a family of lines $L(M)$
whose envelope is the arctic curve.}
\label{fig:geometric}
\end{figure}
\medskip
On the other hand, we may devise the following direct geometrical construction for the arctic curve \eqref{mainresult} based again, 
in the spirit of the Wulff construction, on the data of some one-dimensional curve in the plane. Here this curve 
is simply the plot of the function $x(t)$ itself, namely the curve $(t,x(t))$ in cartesian coordinates (using some orthonormal basis). 
Given a point $M$ on this curve, we may easily obtain 
the corresponding value of $t$ by projecting the point vertically on the $X$-axis as the resulting point is $P=(t,0)$ by definition. 
The point $Q$ of coordinates $(1-x(t),x(t))$ is obtained by now projecting $M$ horizontally on the line 
of equation $X+Y=1$ (see Figure \ref{fig:geometric}). Denoting by $O=(0,0)$ the origin, the tangent to the arctic curve labelled by $t$ is, from its equation \eqref{eq:tgeq}, the line $L:=L(M)$ \emph{orthogonal} to the line $(OQ)$ and passing trough the point $P$. Each point $M$ of the plot gives rise to 
a line $L(M)$ and the arctic curve is the envelope of these lines.    

\subsection{Moving the starting point}
\begin{figure}
\begin{center}
\includegraphics[width=11cm]{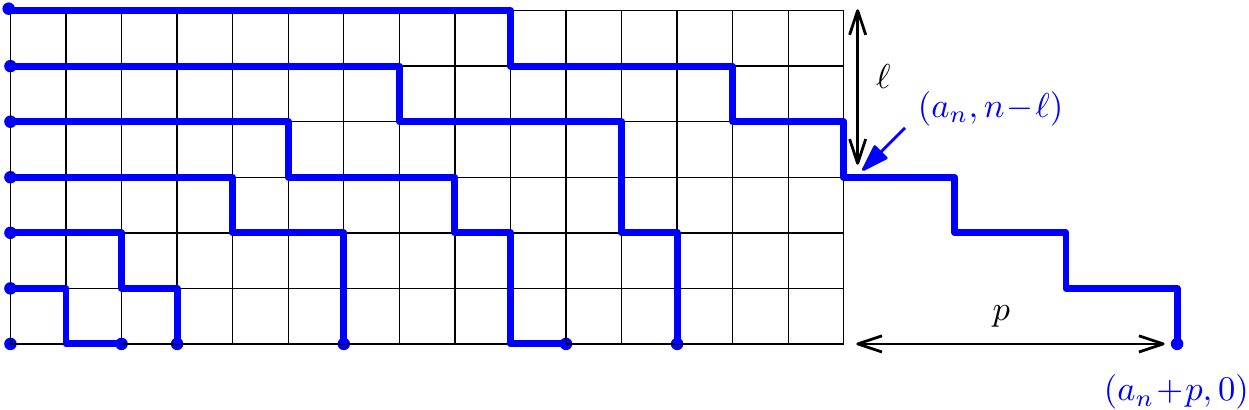}
\end{center}
\caption{The tangent method applied to the NILP with paths made of north- and west-oriented step, as obtained by moving the starting point 
of the outermost path is moved from $O_n=(a_n,0)$ to $O'_n=(a_n+p,0)$ 
with $p\in \Z_+$, forcing the path to re-enter the domain $D$ (here the displayed grid) by a west-oriented step at some position $(a_n,n-\ell)$ on the right boundary of $D$.}
\label{fig:Hnlcheck}
\end{figure}
So far we used the tangent method by moving the ending point of the outermost (or rightmost) path out of the domain $D$. 
Another choice would have been
to move instead the starting point of this path. Let us briefly describe how the method works in the original language of north- and west-oriented
paths. Moving the starting point $O_n=(a_n,0)$ to say $O'_n=(a_n+p,0)$ for some $p\in \Z_+$ forces the outermost path to re-enter the domain 
$D$ at some point $(a_n,n-\ell)$ on its right boundary (see Figure \ref{fig:Hnlcheck}). The partition function for NILP in the domain $D$ with
their outermost path starting at $(a_n,n-\ell)$, properly normalized by $Z_n$, defines our new one-point function $\check{H}_{n,\ell}$ 
for this new geometry. Its computation is made straightforward thanks to the remark that
\begin{equation*}
{\hat H}_{n,m,\ell}-{\hat H}_{n,m,\ell+1}=\check{H}_{n,\ell}-\check{H}_{n,\ell-1}
\end{equation*}
where $m=a_n-n$ and ${\hat H}_{n,m,\ell}$ as is \eqref{hatH}, which implies the sum rule
\begin{equation*}
\check{H}_{n,\ell}=1-{\hat H}_{n,m,\ell+1}\ .
\end{equation*}
These identities are obtained exactly via the same arguments as those given in Section \ref{combHHtile} to prove \eqref{eq:sumrule}.
Using the explicit expression \eqref{hatH} for ${\hat H}_{n,m,\ell}$, we may write
\begin{equation*}
1-{\hat H}_{n,m,\ell+1}=\frac{\displaystyle \prod_{s=1}^m (n+m-b_s)}{\displaystyle {n+m\choose n-\ell}}
\oint_{{\mathcal C}(b_1,b_2,\dots,b_m,n+m)} \frac{dt}{2{\rm i}\pi}
\frac{1}{(t-m-n)} \, \prod_{s=1}^m\frac{1}{(t-b_s)} \, \frac{\displaystyle \prod_{s=0}^{m+\ell-1}(t-s)}{\displaystyle  (m+\ell)!}
\end{equation*}
where the contour now encircles the pole at $m+n$, since, as easily checked,  its contribution produces the first term $1$ in the left hand side. 
Using now
\begin{equation*}
(t-m-n)\ \prod_{s=0}^{n-1}(t-a_s)\ \prod_{s=1}^m(t-b_s)=\prod_{s=0}^{n+m}(t-s)
\end{equation*}
and in particular, dividing by $(t-m-n)$ and setting $t=a_n=n+m$,
\begin{equation*}
\prod_{s=0}^{n-1}(a_n-a_s)\ \prod_{s=1}^m(n+m-b_s)=(n+m)!\ ,
\end{equation*}
the above expression yields immediately 
\begin{equation*}
\check{H}_{n,\ell}=\frac{1}{{\displaystyle \prod_{s=0}^{n-1} (a_n-a_s)}}
\oint_{{\mathcal C}(a_n-(n-\ell),a_n-(n-\ell-1),\dots,a_n)} \frac{dt}{2{\rm i}\pi}
\frac{1}{(t-a_n)} \, \prod_{s=0}^{n-1}(t-a_s) \, \frac{(n-\ell)!}{{\displaystyle \prod_{s=1}^{n-\ell}(t-a_n+s)}}\ .
\end{equation*}
It is then a straightforward exercise to use the tangent method machinery to get, in the large $n$ asymptotic regime,
the equation for the tangents and for the arctic curve. As expected, we recover the same set of tangents as in Section \ref{asymptoone},
given by equation \eqref{eq:tgeq} for $t\in [\al(1),+\infty)$. This provides an alternative derivation for the first portion of the arctic curve. 
Clearly, an alternative derivation for the second portion of arctic curve would consist in moving out of $D$ the starting point of the outermost path
for NILP configurations with paths made of east- and northeast-oriented steps.

\medskip

\noindent{\bf Acknowledgments.} We are thankful to Filippo Colomo, Christian Krattenthaler, Matthew F. Lapa, 
Vincent Pasquier and Andrea Sportiello for valuable discussions. 
PDF is partially supported by the Morris and Gertrude Fine endowment. EG acknowledges the support of the grant ANR-14-CE25-0014 (ANR GRAAL).

\bibliographystyle{amsalpha} 

\bibliography{arcticpaths}

\providecommand{\bysame}{\leavevmode\hbox to3em{\hrulefill}\thinspace}
\providecommand{\MR}{\relax\ifhmode\unskip\space\fi MR }
\providecommand{\MRhref}[2]{%
  \href{http://www.ams.org/mathscinet-getitem?mr=#1}{#2}
}
\providecommand{\href}[2]{#2}
\begin{thebibliography}{DFSG14}

\bibitem[CEP96]{CEP}
Henry Cohn, Noam Elkies, and James Propp, \emph{Local statistics for random
  domino tilings of the aztec diamond}, Duke Math. J. \textbf{85} (1996),
  no.~1, arXiv:math/0008243 [math.CO].

\bibitem[CS16]{COSPO}
Filippo Colomo and Andrea Sportiello, \emph{Arctic curves of the six-vertex
  model on generic domains: the tangent method}, J. Stat. Phys. \textbf{164}
  (2016), no.~6, 1488--1523, arXiv:1605.01388 [math-ph].

\bibitem[DFL18]{DFLAP}
Philippe Di~Francesco and Matthew~F. Lapa, \emph{Arctic curves in path models
  from the tangent method}, J. Phys. A: Math. Theor. \textbf{51} (2018),
  155202, arXiv:1711.03182 [math-ph].

\bibitem[DFR12]{DFR}
Philippe Di~Francesco and Nicolai Reshetikhin, \emph{Asymptotic shapes with
  free boundaries}, Comm. Math. Phys. \textbf{309} (2012), no.~1, 87--121.
  \MR{2864788}

\bibitem[DFSG14]{DFSG}
Philippe Di~Francesco and Rodrigo Soto-Garrido, \emph{Arctic curves of the
  octahedron equation}, J. Phys. A \textbf{47} (2014), no.~28, 285204, 34,
  arXiv:1402.4493 [math-ph]. \MR{3228361}

\bibitem[GV85]{GV}
Ira Gessel and G{\'e}rard Viennot, \emph{Binomial determinants, paths, and hook
  length formulae}, Adv. Math. \textbf{58} (1985), no.~3, 300--321.

\bibitem[JPS98]{JPS}
William Jockusch, James Propp, and Peter Shor, \emph{Random domino tilings and
  the arctic circle theorem}, arXiv:math/9801068 [math.CO] (1998).

\bibitem[KO06]{KO1}
Richard Kenyon and Andrei Okounkov, \emph{Planar dimers and harnack curves},
  Duke Math. J \textbf{131} (2006), no.~3, 499--524, arXiv:math/0311062
  [math.AG].

\bibitem[KO07]{KO2}
\bysame, \emph{Limit shapes and the complex burgers equation}, Acta Math.
  \textbf{199} (2007), no.~2, 263--302, arXiv:math-ph/0507007.

\bibitem[KOS06]{KOS}
Richard Kenyon, Andrei Okounkov, and Scott Sheffield, \emph{Dimers and
  amoebae}, Ann. Math. (2006), 1019--1056, arXiv:math/0311062 [math.AG].

\bibitem[KP13]{KP}
Richard Kenyon and Robin Pemantle, \emph{Double-dimers, the {I}sing model and
  the hexahedron recurrence}, 25th {I}nternational {C}onference on {F}ormal
  {P}ower {S}eries and {A}lgebraic {C}ombinatorics ({FPSAC} 2013), Discrete
  Math. Theor. Comput. Sci. Proc., AS, Assoc. Discrete Math. Theor. Comput.
  Sci., Nancy, 2013, pp.~109--120. \MR{3090984}

\bibitem[Kra99]{KRADET}
C.~Krattenthaler, \emph{Advanced determinant calculus}, S\'em. Lothar. Combin.
  \textbf{42} (1999), Art. B42q, 67, The Andrews Festschrift (Maratea, 1998).
  \MR{1701596}

\bibitem[Lin73]{LGV1}
Bernt Lindstr{\"o}m, \emph{On the vector representations of induced matroids},
  Bull. London Math. Soc. \textbf{5} (1973), no.~1, 85--90.

\end{thebibliography}

\end{document}